\documentclass[times,sort&compress,3p]{elsarticle}
\journal{Journal of Multivariate Analysis}
\usepackage[labelfont=bf]{caption}

\usepackage{amsmath,amsfonts,amssymb,amsthm,booktabs,color,epsfig,graphicx,hyperref,url,multirow,latexsym,bm,srcltx,natbib,wasysym}

\theoremstyle{plain}

\newtheorem{proposition}{Proposition}
\newtheorem{lemma}{Lemma}

\theoremstyle{definition}

\numberwithin{equation}{section}

\begin{document}

\begin{frontmatter}

\title{
Mat\'ern and Generalized Wendland correlation models that parameterize hole effect, smoothness, and support}

\author[A1]{Xavier Emery}
\author[A2]{Moreno Bevilacqua\corref{mycorrespondingauthor}}
\author[A3]{Emilio Porcu}

\address[A1]{Department of Mining Engineering, Universidad de Chile, Santiago, Chile\\
Advanced Mining Technology Center, Universidad de Chile, Santiago, Chile}
\address[A2]{Facultad de Ingenieria y Ciencias, Universidad Adolfo Iba\~nez, Vi\~na del Mar, Chile\\
Dipartimento di Scienze Ambientali, Informatica e Statistica, Ca’ Foscari University of Venice, Italy}
\address[A3]{Department of Mathematics and Center for Biotechnology (BTC), Khalifa University, Abu Dhabi, United Arab Emirates\\
ADIA Lab, Abu Dhabi, United Arab Emirates}

\cortext[mycorrespondingauthor]{Corresponding author. Email address: moreno.bevilacqua@uai.cl (M. Bevilacqua)\url{}}

\begin{abstract}
A huge literature in statistics and machine learning is devoted to parametric families of correlation functions, where the correlation parameters are used to understand the properties of an associated spatial random process in terms of smoothness and global or compact support. However, most of current parametric correlation functions attain only non-negative values. This work provides two new families {of correlation functions that can have some negative values} (aka hole effects), along with smoothness, and global or compact support. They generalize the celebrated Mat\'ern and Generalized Wendland models, respectively, which are {obtained} as special cases. A link between the two new families is also established, showing that a specific reparameterization of the latter includes the former as a special limit case. Their performance in terms of estimation accuracy and goodness of best linear unbiased prediction is illustrated through synthetic and real data.

\end{abstract}

\begin{keyword} 
Parametric correlation functions\sep Compact support \sep Local behavior \sep Negative dependence \sep Turning bands operator.
\end{keyword}

\end{frontmatter}

\section{Introduction}

The data science revolution provides a collection of research challenges and triggers an increasing appetite for new stochastic models that allow {describing} complex realities. In this context, covariance functions have proved useful to describe and analyze a wide portfolio of real-life data in spatial statistics,  machine learning  and related disciplines. In this manuscript, we focus on Gaussian random fields, for which covariance functions are crucial to modeling, estimation, prediction, {and simulation}.

\subsection{What Should a Covariance Model Describe?}
For a Gaussian random field in $\mathbb{R}^d$, for $d$ a positive integer, it is customary to assume that the covariance function is stationary and isotropic. That is, the covariance between observations at any two points depends solely on the distance between the points. The paper works under this assumption, which simplifies the discussion considerably. However, isotropic covariance models represent the building blocks for more complex scenarios such as anisotropy  or nonstationarity, to mention a few.

For a given spatial data set, one can be interested in understanding
\begin{enumerate}
\item The \emph{smoothness} of the realizations (sample paths) of the underlying Gaussian random field, e.g., mean square continuity, mean square differentiability, and fractal dimensions. This aspect covers a central part of the literature, starting with \cite{Adler:1981} and following with \cite{stein-book} and \cite{Chiles2012} as classical textbooks on this subject. Continuity, differentiability and fractal dimensions of a Gaussian random field are in one-to-one correspondence with the local behavior of the covariance function (read: continuity and differentiability of some given order at the origin).

\item The \emph{correlation range}, defined as the distance beyond which the spatial dependence is identically equal to zero. The covariance function is said to be compactly or globally supported, depending on whether this range is finite or not. The compact support is a desirable feature from a computational viewpoint, since sparse matrix algorithms  \citep{sparsedavis,Scott2023}  can be exploited to speed-up the computation associated with estimation, prediction and/or simulation of Gaussian random fields.

\item Positive and negative dependencies. The occurrence of negative dependencies, a phenomenon known as a {\em hole effect} in geostatistics \citep{Chiles2012}, is of interest in various disciplines of the natural sciences and engineering  (see \cite{alegria2023} and references therein). {For instance, in landscape and population ecology, hole effects arise due to local interaction processes \citep{bellier2010}. In air quality monitoring, they can be an outcome of dynamic atmospheric conditions and government policies \citep{alegria2023}. In geology, sedimentary and diagenetic processes can explain alternating decreases and increases of rock properties (porosity, resistivity, photoelectric absorption capability, etc.), which translate into hole effects in their spatial correlation structure \citep{Lefranc2008, parra2013, Matonti}. A similar phenomenon occurs in precision farming, with the alternation of compacted and uncompacted soils due to tillage \citep{sanmartin}. However, the modeling of empirical covariance functions that exhibit hole effects is often arduous, as most isotropic models} used in applications can only attain strictly positive values (for globally supported models) or non-negative values (for models with compact support), and only a few models oscillate between positive and negative values.

\end{enumerate}

\subsection{Parameterization is All You Need}

To date, there is a rich catalog of parametric families of covariance functions \citep{Chiles2012}. For some of them, the parameterization of the local behavior and the local or global support is possible. In particular, the Mat{\'e}rn  \citep[][and references therein]{porcu2024matern} and Generalized Wendland \citep[][and references therein]{BFFP} families do the job. Both families allow {continuously parameterizing} the mean square properties of the associated Gaussian random field. The Matérn covariance is globally supported and attains strictly positive values, while the Generalized Wendland covariance is compactly supported and has a parameter that determines the correlation range. Having parametric families of covariances that identify these {aspects} has considerable advantages:
\begin{itemize}
\item[(a)] model interpretability: each parameter is associated with a feature of interest for the underlying random field;
\item[(b)] feasible estimation techniques: there exists a well established literature about the estimation of the parameters associated with both the Mat\'ern and Generalized Wendland families, including an asymptotic assessment of the estimation accuracy under different asymptotic schemes \citep{mardia1984maximum,Shaby:Kaufmann:2013,BFFP};
\item[(c)] prediction accuracy under a specific asymptotic scheme can be quantified according to a combination of the parameters indexing these families  \citep{stein-book}.
\end{itemize}

\subsection{Challenge and Contribution}

While smoothness and support have been extensively studied, the literature on hole effect models is scarce. Elegant arguments from \cite{schoenberg1938metric2} allow {deducing} lower bounds for isotropic covariance models that attain negative values in some part of their domain. However, the models currently in use mostly consist of damped periodic functions, such as the Bessel-J {covariance that is differentiable at the origin and globally supported. Compactly-supported models with} finitely many oscillations are still given little consideration {and generally do not allow parameterizing smoothness;} the reader is referred to \cite{jkjk} for a state-of-the art review and a comprehensive survey of applications. A motive for this disaffection is the shortage of versatile parametric families that, in addition to the hole effect, keep the original features of parameterizing {both} smoothness and support.

Two solutions to this issue are proposed. Starting from the Mat{\'e}rn and Generalized Wendland models, we derive two new families that additionally allow {indexing} the hole effect for the associated Gaussian random field.
The key tool to obtain the proposed solutions is the iterative application of the turning bands operator to the standard Mat{\'e}rn and Generalized Wendland covariance models. As outlined in \cite{gneiting2002compactly}, such an operator preserves the local behavior of the covariance function at the origin and allows {attaining} negative value at the same time.

The computation of the proposed new models heavily depends on the evaluation of some special functions. However, we provide closed-form solutions for important special cases, which makes them attractive to practitioners. Finally, we establish a connection  between the proposed models, by showing that a reparameterization of the hole effect Generalized Wendland model includes the hole effect Mat\'ern model as a special limit case. Both proposed models have been implemented in the \texttt{GeoModels} package \citep{Bevilacqua:2018aa} for the open-source R statistical environment.

The remainder of the paper is organized as follows. Section \ref{sec2} reviews the celebrated Mat\'ern and Generalized Wendland models. Section \ref{secti3} includes the main theoretical results, in particular it introduces the hole effect Mat\'ern and hole effect Generalized Wendland models (Propositions \ref{matgeneral} and \ref{genW}, respectively) and the connection between the former and a reparameterized version of the latter (Proposition \ref{wend2mat}). In Section \ref{sec:estim}, we report a small simulation study that explores the finite sample properties of the maximum likelihood method when  estimating the parameters of the reparameterized hole effect Generalized Wendland model. In Section \ref{sec:real}, we apply this model to the analysis of soil data. Concluding remarks are consigned in Section \ref{sec:concl}. Additional background material, technical lemmas and proofs are deferred to {the Appendix} (Section \ref{suppmaterial}).

\section{Background}
\label{sec2}

This section exposes the necessary background material and notation. Throughout, $d$ is a positive integer and $a$ a positive real number. Table \ref{tab:special function0} summarizes the set of ordinary and special functions used in this paper, the definition of which will therefore be omitted in the sequel.

\begin{table}[ht!]
    \caption{Functions used in the paper.}
    \label{tab:special function0}
    \begin{tabular}{ c  l l }
        \hline
        Notation & Function name & Parameters \\
        \hline
        $\|\cdot\|_d$ & Euclidean norm in $\mathbb{R}^d$ &\\
        $(\cdot)_+$ & Positive part function &\\
        $(\cdot)_n$ & Pochhammer symbol (rising factorial)& $n \in \mathbb{N}$\\
        $\Gamma$ & Gamma function &\\
        $J_{\nu}$ & Bessel function of the first kind & $\nu \in \mathbb{R}$\\
        ${\cal K}_{\nu}$ & Modified Bessel function of the second kind & $\nu \in \mathbb{R}$\\
        ${}_2F_1({\alpha,\beta;\gamma};\cdot)$ & Gauss hypergeometric function & $\alpha,\beta,\gamma \in \mathbb{R}$\\       ${}_pF_q({\boldsymbol{\beta};\boldsymbol{\gamma}};\cdot)$ & Generalized hypergeometric function & $p,q \in \mathbb{N}$, $\boldsymbol{\beta} \in \mathbb{R}^p, \boldsymbol{\gamma}\in \mathbb{R}^q$\\
        \hline
    \end{tabular}
\end{table}

\subsection{Isotropic Correlation Functions and their Spectral Representations}
\label{sub21}

\label{spectralrep}
For a given covariance function associated with a Gaussian random field, the correlation is defined as the ratio between (a) the covariance at two different points and (b) the product of the standard deviations at the two points. Hence, the correlation is a rescaled covariance function.

A real-valued zero-mean Gaussian random field $\{{Z}(\boldsymbol{x}): \boldsymbol{x} \in \mathbb{R}^{d}\}$ is second-order stationary and isotropic if, for any $\boldsymbol{x}$ and $\boldsymbol{x}^\prime$ in $\mathbb{R}^{d}$, the correlation $K(\boldsymbol{x},\boldsymbol{x}^\prime)$ between ${Z}(\boldsymbol{x})$ and ${Z}(\boldsymbol{x}^\prime)$ exists and only depends on the separation distance $\| \boldsymbol{x}-\boldsymbol{x}^\prime \|_d$:
\begin{equation}
\label{eq:stationarycov_Rd}
K(\boldsymbol{x},\boldsymbol{x}^\prime): = {{\rm corr}}({Z}(\boldsymbol{x}), {Z}(\boldsymbol{x}^{\prime})) = {C}\left(\| \boldsymbol{x}-\boldsymbol{x}^\prime \|_d \right), \quad \boldsymbol{x},\boldsymbol{x}^{\prime} \in \mathbb{R}^{d}.
\end{equation}

Correlation functions are positive semidefinite. For the function $K$ as per Equation (\ref{eq:stationarycov_Rd}), this implies that the matrix $[C(\| \boldsymbol{x}_i - \boldsymbol{x}_j \|_d)]_{i,j=1}^p$ is symmetric positive semidefinite for any positive integer $p$ and any choice of $\boldsymbol{x}_1, \ldots, \boldsymbol{x}_p \in \mathbb{R}^d$. We refer to $C$ as the $d$-radial correlation function of the random field $Z$, as a shorthand to the radial part of the correlation function $K$ in $\mathbb{R}^d \times \mathbb{R}^d$.

We denote $\Phi_{d}$ the class of continuous mappings ${C}:[0,+\infty) \to \mathbb{R}$  with $C(0)=1$ such that (\ref{eq:stationarycov_Rd}) is true for a second-order stationary isotropic Gaussian random field in $\mathbb{R}^{d}$. The following strict inclusion relations hold: $\Phi_1 \supset \Phi_2 \supset \ldots \supset \Phi_{\infty}:=\cap_{n=1}^{+\infty} \Phi_n$.

Elements $C$ of the class $\Phi_d$ such that $C(\| \cdot \|_d)$ is absolutely integrable in $\mathbb{R}^d$ admit the following Fourier-Hankel representation \citep{Chiles2012}:
\begin{equation}
    \label{fourier1}
    C(h) = {(2\pi)^{{d/2}}} h^{1-{d/2}} \int_0^{+\infty} u^{{d/2}} J_{{d/2}-1}(u h) \widehat{C}_d(u) {\rm d}u, \qquad h > 0,
\end{equation}
with
\begin{equation}
    \label{fourier2}
    \widehat{C}_d(u) = \frac{1}{(2\pi)^{{d/2}}} u^{1-{d/2}} \int_0^{+\infty} h^{{d/2}} J_{{d/2}-1}(u h) C(h) {\rm d}h, \qquad u > 0,
\end{equation}
where $\widehat{C}_d: (0,+\infty) \to [0,+\infty)$, which will be referred to as the $d$-radial spectral density of $C$ or of $K$, is a mapping such that $\widehat{C}_d(\| \cdot \|_d)$ is a probability density on $\mathbb{R}^d$. Despite the similarity between the direct and inverse Fourier-Hankel transforms (\ref{fourier1}) and (\ref{fourier2}), the functions $C$ and $\widehat{C}_d$ do not play symmetrical roles: $\widehat{C}_d$ is non-negative, but $C$ can take negative values. The amplitude of the negative values decreases with the space dimension $d$, with $C$ being lower-bounded by $-1/d$ \citep[p. 13]{Matern}.

\subsection{The Mat\'ern Parametric Family of Correlation Models}

The Mat\'ern model is a two-parameter globally  supported correlation function,  that allows for a continuous parameterization of the smoothness of the underlying Gaussian random field.
 It is   defined as \citep[3.3.10]{Matern}:
\begin{equation}\label{Matern}
{\cal M}_{a,\xi} (h) = \begin{cases}
1 \qquad \qquad \qquad \qquad \text{if } h = 0,\\
\frac{2^{1-\xi}}{\Gamma(\xi)} \left ( \frac{h}{a} \right )^{\xi} {\cal K}_{\xi}\left ( \frac{h}{a} \right ), \qquad \text{if } h > 0,
\end{cases}
\end{equation}
where $a,\xi >0$ are necessary and sufficient conditions for $ {\cal M}_{a,\xi}  \in \Phi_{\infty}$. The associated $d$-radial spectral density, $\widehat{{\cal M}}_{a,\xi,d}$, is given by:
\begin{equation}
\label{stein1}
\widehat{{\cal M}}_{a,\xi,d}(u)= \frac{\Gamma(\xi+\frac{d}{2})}{\pi^{{d/2}} \Gamma(\xi)}
\frac{a^d}{(1+a^2u^2)^{\xi+{d/2}}}
, \qquad u \ge 0.
\end{equation}

The Mat\'ern model is globally supported, that is ${\cal M}_{a,\xi}>0$. The importance of this  model stems from the parameter $\xi$ that controls the differentiability (in the mean square sense) of the associated Gaussian random field and of its sample paths. Specifically, for any integer $\ell=0,1,\ldots$, the sample paths of a Gaussian random field with correlation function ${\cal M}_{a,\xi}$ are $\ell$-times differentiable, in any direction, if and only if $\xi > \ell$. When $\xi=\ell+ 1/2$, the Mat\'ern  correlation simplifies into the product of an exponential correlation with a polynomial of degree $\ell$:
$${\cal M}_{a,\ell+1/2}(h)= \exp(-h/a) \sum_{i=0}^{\ell} \frac{(\ell+i)!}{2 \ell !} \binom{l}{i} (2h/a)^{\ell-i} \quad \ell=0,1,\ldots.$$

\subsection{The Generalized Wendland Family of Correlation Models}

The Generalized Wendland model \citep{gneiting2002compactly} is a three-parameter  compactly supported correlation function that allows for a continuous parameterization of the smoothness of the underlying Gaussian random field. For $\xi > -\frac{1}{2}$ and a compact support parameter $a > 0$, this model and its associated $d$-radial spectral density are defined as \citep{Chernih, BFFP}:
\begin{equation} \label{WG4*}
{\cal GW}_{a,\xi,\nu}(h) = \begin{cases}  \frac{\Gamma(\xi)\Gamma(2\xi+\nu+1)}{\Gamma(2\xi)\Gamma(\xi+\nu+1)2^{\nu+1}} \left( 1- \frac{h^2}{a^2} \right)^{\xi+\nu} {{}_2F_1\left(\frac{\nu}{2},\frac{\nu+1}{2};{\xi+\nu+1};{1-\frac{h^2}{a^2}}\right)},& 0 \leq h < a,\\
     0,&h \geq a, \end{cases}
\end{equation}
and
\begin{equation}
\label{llkk}
\widehat{{\cal GW}}_{a,\xi,\nu,d}(u)=\frac{a^{d}\Gamma(\xi+\frac{d+1}{2}) \Gamma(2\xi+\nu+1)}{\pi^{{d/2}} \Gamma(\xi+\frac{1}{2}) \Gamma(2\xi+\nu+1+d)}
{{}_1F_2\left(\frac{d+1}{2}+\xi;\frac{d+1+\nu}{2}+\xi,\frac{d+\nu}{2}+1+\xi;-\frac{a^2 u^{2}}{4}\right)}, \quad  u \geq 0.
\end{equation}

A necessary and sufficient condition for ${\cal GW}_{a,\xi,\nu}$ to belong to $\Phi_d$ is \citep{bevi2024}
\begin{equation}
\label{bevcondition}
    \nu \geq \nu_{\min}(\xi,d) :=
    \begin{cases}
    \frac{\sqrt{8\xi+9}-1}{2} \text{ if $d = 1$ and $-\frac{1}{2}<\xi<0$}\\
    \xi+ \frac{d+1}{2} \text{ otherwise}.
    \end{cases}
\end{equation}

It should be stressed that for a given smoothness parameter $\xi$ and compact support parameter $a$, $ \nu$  allows {parameterizing} the shape of the correlation function.
For $\ell=1,2\ldots$, the sample paths of a Gaussian random field with correlation function ${\cal GW}_{a,\xi,\nu}$ are $\ell$ times differentiable, in any direction, if and only if $\xi>\ell-\frac{1}{2}$, while for $-\frac{1}{2}< \xi < \frac{1}{2}$ they are not differentiable.

Similarly to the Mat\'ern model, when $\xi = \ell$ is a non-negative integer, the Generalized Wendland  correlation simplifies into the product of an Askey (truncated power) correlation with a polynomial $P_\ell$ of degree $\ell$
\citep{bevi2024}:
\begin{equation}\label{ppoo}
{\cal GW}_{a,\ell,\nu}(h)=\left(1-\frac{h}{a} \right)^{\nu+\ell}_+P_\ell(h;\nu,a), \quad \ell=0,1,2,\ldots
\end{equation}

The Mat\'ern and Generalized Wendland models have conceptual and mathematical connections. For a specific parameter setting, both models lead to equivalent Gaussian measures \citep{BFFP}. In addition, the Mat\'ern model is a special limit case of a reparameterization of the Generalized Wendland model \citep{bevilacqua2022unifying}:
\begin{equation}\label{ppoi}
\lim_{\nu\to\infty}  {\cal GW}_{\delta,\xi,\nu}(h)={\cal M}_{a,\xi+1/2}(h), \quad \xi>-\frac{1}{2},
\end{equation}
 with uniform convergence for $h>0$,
where $\delta=a(\Gamma(\nu+2\xi+1)/\Gamma(\nu))^{\frac{1}{1+2\xi}}$.

\section{Parameterizing Smoothness, Supports, and Hole Effects}
\label{secti3}

\subsection{A Mat{\'e}rn-Type Model that Parameterizes Hole Effects}

Our first proposal details a correlation model having the same characteristics as the Mat{\'e}rn model, with the additional feature of parameterizing the hole effect.

\begin{proposition}[hole effect Mat\'ern correlation model]
\label{matgeneral}
    For $k \in \mathbb{N}$, $a,\xi > 0$, define
\begin{equation}
\label{explicitmatGeneralized}
\begin{split}
  {\cal M}_{a,\xi,d,k}(h) &:=\sum_{q=0}^k \sum_{r=0}^{\max\{0,q-1\}} \sum_{s=0}^{q-r} \sum_{t=0}^{q-r-s} \left ( \frac{h}{a} \right )^{\xi+q-r-s} {\cal K}_{\xi+2t+r+s-q}\left ( \frac{h}{a} \right )\\
  &\times \frac{(-1)^{q-s} (q-r)! (q-r)_r (\xi+1-s)_{s} (k-q+1)_q (q)_r}{2^{\xi+2q-s-1} q! \, r! \, s! \, t! \, (q-r-s-t)! \, \Gamma(\xi) (\frac{d}{2})_q}, \quad h>0.
\end{split}
\end{equation}
Then, ${\cal M}_{a,\xi,d,k}$ belongs to $\Phi_d$, and its $d$-radial spectral density is
\begin{equation}
\label{stein2}
\widehat{{\cal M}}_{a,\xi,d,k}(u)= \frac{\Gamma(\frac{d}{2}) \Gamma(\xi+\frac{d}{2}+k)}{\pi^{{d/2}} \Gamma(\frac{d}{2}+k) \Gamma(\xi)} \frac{a^{d+2k} u^{2k}}{(1+a^2u^2)^{\xi+{d/2}+k}}
, \qquad u \ge 0.
\end{equation}
Furthermore, ${\cal M}_{a,\xi,d,0} = {\cal M}_{a,\xi}$, as given in (\ref{Matern}).
\end{proposition}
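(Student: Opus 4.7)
My plan has three steps: verify that (\ref{stein2}) defines a valid $d$-radial spectral density, invert it to obtain the explicit form (\ref{explicitmatGeneralized}), and check the boundary case $k=0$.

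First, the candidate $\widehat{{\cal M}}_{a,\xi,d,k}$ in (\ref{stein2}) is manifestly non-negative on $(0,\infty)$. To confirm that it is a valid $d$-radial spectral density, I would show that $\widehat{{\cal M}}_{a,\xi,d,k}(\|\cdot\|_d)$ integrates to $1$ over $\mathbb{R}^d$: converting to polar coordinates and substituting $v=a^{2}u^{2}$ reduces the radial integral to a Beta function $B(\frac{d}{2}+k,\xi)$, which combines with the normalizing constant of (\ref{stein2}) to yield exactly $1$. It then follows that the inverse Fourier--Hankel transform (\ref{fourier1}) applied to $\widehat{{\cal M}}_{a,\xi,d,k}$ returns a continuous, positive semidefinite function in $\Phi_{d}$; what remains is to identify this function explicitly.

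Second, to obtain (\ref{explicitmatGeneralized}) I would insert (\ref{stein2}) into (\ref{fourier1}) and evaluate the integral $\int_{0}^{\infty} u^{d/2+2k} J_{d/2-1}(uh)(1+a^{2}u^{2})^{-\xi-d/2-k} {\rm d}u$. A convenient strategy is to write $u^{2k}=a^{-2k}[(1+a^{2}u^{2})-1]^{k}$ and expand binomially; each resulting integral then falls under the classical formula for the Hankel transform of $(1+x^{2})^{-\nu-1}$ paired with $x^{\mu+1}J_{\mu}$, producing powers of $h/a$ times a single modified Bessel function ${\cal K}_{\xi+k-j}(h/a)$. The expanded quadruple-sum form in (\ref{explicitmatGeneralized}) then arises by unfolding each of the Bessel factors through repeated use of the recurrence ${\cal K}_{\nu+1}(z) = {\cal K}_{\nu-1}(z) + 2\nu z^{-1}{\cal K}_{\nu}(z)$ and the differentiation identity $\frac{{\rm d}}{{\rm d}z}[z^{\nu}{\cal K}_{\nu}(z)] = -z^{\nu}{\cal K}_{\nu-1}(z)$. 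Equivalently, this mirrors the iterated turning-bands construction announced in the introduction: each iteration multiplies the spectral density by a factor proportional to $u^{2}$ and, in the physical domain, differentiates products of the form $(h/a)^{\alpha}{\cal K}_{\beta}(h/a)$ via the Leibniz rule, producing new power--Bessel combinations; the four indices $q,r,s,t$ record the iteration count and the three independent expansions triggered at each step.

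Third, I would verify ${\cal M}_{a,\xi,d,0}={\cal M}_{a,\xi}$ by setting $k=0$ in (\ref{explicitmatGeneralized}). For every $q\geq 1$ the Pochhammer factor $(k-q+1)_{q} = (1-q)_{q} = \prod_{i=0}^{q-1}(1-q+i)$ vanishes because the factor corresponding to $i=q-1$ equals $0$. Only the choice $q=r=s=t=0$ survives, and the remaining term simplifies at once (all empty Pochhammers and factorials equal $1$) to $\frac{2^{1-\xi}}{\Gamma(\xi)}(h/a)^{\xi}{\cal K}_{\xi}(h/a)$, which is the Matérn correlation (\ref{Matern}).

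The main obstacle is Step 2: unfolding the compact single-sum expression obtained from the binomial trick into the explicit quadruple sum in (\ref{explicitmatGeneralized}). No new analytic ingredients are required beyond the classical Hankel transform and standard modified-Bessel-function identities, but matching the two representations term-by-term requires careful combinatorial bookkeeping that mirrors the iterated turning-bands construction.
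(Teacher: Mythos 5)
Your overall strategy is legitimate and genuinely different from the paper's: you work on the spectral side first (checking that (\ref{stein2}) is a probability density via a Beta-function computation, which correctly gives membership in $\Phi_d$ by Bochner's theorem), then propose to recover the space-domain formula by Hankel inversion; your $k=0$ reduction via the vanishing of $(1-q)_q$ is also correct. The paper instead works entirely in the space domain, applying the turning bands operator $k$ times to ${\cal M}_{a,\xi}\in\Phi_{d+2k}$: Lemma \ref{TB2k} produces the double sum over $q,r$ with the factors $h^{q-r}{\cal M}_{a,\xi}^{(q-r)}(h)$, the Leibniz rule for the derivative of $h^{\xi}{\cal K}_{\xi}(h/a)$ produces the sum over $s$, and the derivative formula for ${\cal K}_{\xi}$ (Olver 10.29.5) produces the sum over $t$; the spectral density then follows from (\ref{tb2kdensity}) rather than being the starting point.

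The genuine gap is in your Step 2, and it is precisely the step that carries the content of the proposition. Your binomial trick $u^{2k}=a^{-2k}[(1+a^2u^2)-1]^k$ combined with the classical Hankel transform yields a \emph{single} sum $\sum_{j=0}^{k}c_j\,(h/a)^{\xi+k-j}{\cal K}_{\xi+k-j}(h/a)$ — a correct and in fact simpler closed form — but the proposition asserts equality with the specific quadruple sum (\ref{explicitmatGeneralized}), whose Bessel orders $\xi+2t+r+s-q$ do not match its powers $\xi+q-r-s$. Passing from one representation to the other via the recurrence ${\cal K}_{\nu+1}(z)={\cal K}_{\nu-1}(z)+2\nu z^{-1}{\cal K}_{\nu}(z)$ is possible in principle, but establishing that the coefficients agree for all $k$ is a nontrivial combinatorial identity that you only assert ("careful combinatorial bookkeeping") and explicitly flag as the main obstacle. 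In the paper this identity never needs to be checked, because the four indices and their coefficients are \emph{generated} by the turning-bands recursion rather than matched against an independently derived expression. As written, your argument proves that the Fourier--Hankel transform of (\ref{stein2}) is some member of $\Phi_d$ admitting a $(k+1)$-term Bessel-$K$ expansion, but not that it equals the function defined by (\ref{explicitmatGeneralized}); to close the gap you would either have to carry out the recurrence bookkeeping in full or switch to the constructive turning-bands derivation you mention only in passing.
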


We term ${\cal M}_{a,\xi,d,k}$ the $(d,k)$-hole effect Mat{\'e}rn model or, briefly, the hole effect Mat\'ern model. For $k=0$, we attain the standard Mat{\'e}rn model that has no hole effect. When positive, $k$ is an additional discrete parameter describing increasing levels of negative correlations that are functions not only of $k$, but also of $d$. This is not surprising since, as outlined in Section \ref{sub21}, the permissible negative correlation has a lower bound that depends on $d$. The role of the other parameters $(a,\xi)$ is unchanged.

Note that (\ref{stein2}) is a particular case of spectral densities proposed by \cite{vecchia1985} (for $d=2$) and \cite{laga2017} (for $d \geq 1$), but none of these authors provides a closed-form expression of the associated covariance models, as we do with (\ref{explicitmatGeneralized}).

Alternative expressions of ${\cal M}_{a,\xi,d,k}$ in terms of special functions (hypergeometric, Bessel-$I$ and Meijer-$G$ functions) can be found in {Appendix}. However, our implementation in \texttt{GeoModels}  \citep{Bevilacqua:2018aa} uses (\ref{explicitmatGeneralized})
to compute  ${\cal M}_{a,\xi,d,k}$.

Similarly to the standard  Mat\'ern model, the expression in (\ref{explicitmatGeneralized}) simplifies when $\xi=n+\frac{1}{2}$ with $n \in \mathbb{N}$, avoiding the computation of the Bessel-$K$ function. Using (\ref{explicitMatern}) in {Appendix} and formula 8.468 in \cite{grad}, one finds
\begin{equation*}  \label{MK}
\begin{split}
{\cal M}_{a,n+{1/2},d,k}(h) &=  \exp\left(-\frac{h}{a}\right) \sum_{q=0}^k \sum_{r=0}^{\max\{0,q-1\}} \sum_{s=0}^{q-r} \sum_{t = 0}^{n} \left(\frac{h}{a}\right)^{n+q-r-s-t}\\
& \times \frac{\pi^{\frac{1}{2}} (q-r)! (k-q+1)_q (q)_r (q-r)_r (n - t +1)_{2t} (n-s-t+1)_s}{(-1)^{q-s} \, 2^{n+q+r+t} \, q! \, r! \, s! \, t! \, (q-r-s)! \, (\frac{d}{2})_q \Gamma(n+\frac{1}{2})}, \quad h \geq 0.
\end{split}
\end{equation*}
For instance when $n=0$ and $k=1$, this gives
$$ {\cal M}_{a,{1/2},d,1}(h)= \exp\left(-\frac{h}{a}\right) \left(1-\frac{h}{a d} \right), \quad h  \ge 0,$$
and when $n=0$ and $k=2$, this gives:
$$ {\cal M}_{a,{1/2},d,2}(h)=\exp\left(-\frac{h}{a}\right) \left(1-\frac{h(2d+3)}{a d (d+2)} +\frac{h^2}{a^2 d (d+2)} \right), \quad h  \ge 0.$$

An illustration is provided in Figure \ref{holemat} for $n=0$ and $n=1$. It is seen that, when increasing $k$, the hole effect also increases and the correlation is flattened at the same time.

\begin{figure}[h!]
\begin{center}
\begin{tabular}{cc}
\includegraphics[width=6.2cm,height=6.2cm]{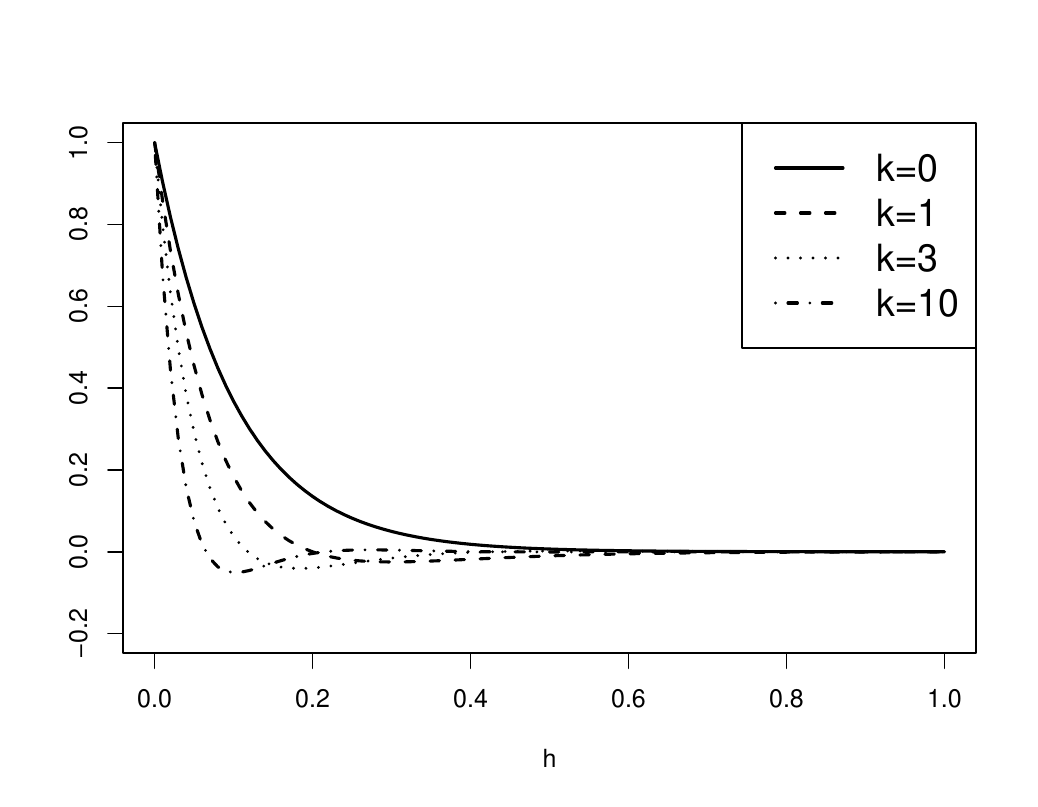}&\includegraphics[width=6.2cm,height=6.2cm]{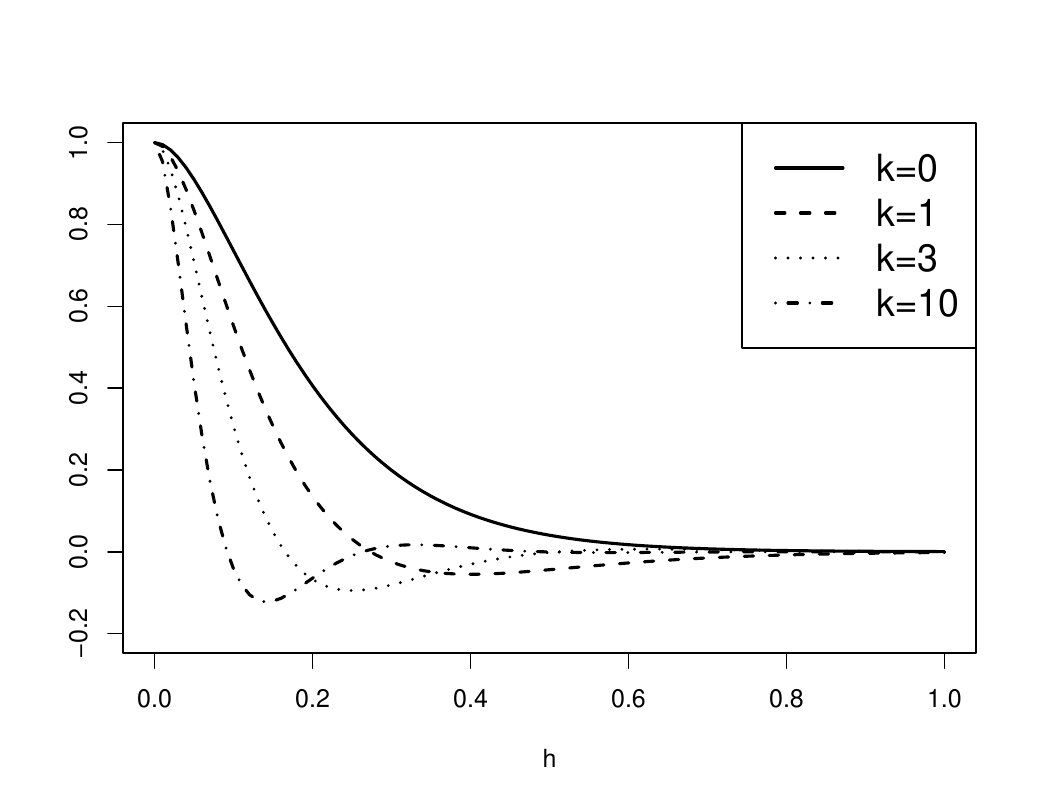}\\
\end{tabular}
\end{center}
\caption{Examples of the $(d,k)$-hole effect Mat\'ern model when $d=2$. Left: ${\cal M}_{0.05,0.5,2,k}(\cdot)$ for $k=0,1,3,10$ from top to bottom. Right: ${\cal M}_{0.05,1.5,2,k}(\cdot)$ for $k=0,1,3,10$ from top to bottom.}
 \label{holemat}
\end{figure}

\subsection{A Generalized Wendland-type Model that Parameterizes Hole Effects}

Our second proposal details a correlation function having the same characteristics as the Generalized Wendland   model, with the additional feature of parameterizing the hole effect.

\begin{proposition}[hole effect Generalized Wendland correlation model]
\label{genW}
For $a>0$, $\xi, \nu \in \mathbb{R}$ and $k \in \mathbb{N}$, the compactly supported mapping defined by
\begin{equation}
    \label{wendlandext0}
    \begin{split}
      {\cal GW}_{a,\xi,\nu,d,k}(h) &:=     
      {{}_3F_2\left(\frac{d}{2}+k,\frac{1-\nu}{2}-\xi,-\frac{\nu}{2}-\xi;\frac{1}{2}-\xi,\frac{d}{2};\frac{h^2}{a^2}\right)} \\&
    + {L_{\xi,\nu,a,d,k}}   \times \left(\frac{h}{a}\right)^{2\xi+1} 
    {{}_3F_2\left(\xi+\frac{d+1}{2}+k,1-\frac{\nu}{2},\frac{1-\nu}{2};\xi+\frac{3}{2},\xi+\frac{d+1}{2};\frac{h^2}{a^2}\right)}
    \end{split}
\end{equation} \text{ if $0 \leq h < a$,} and $0$ otherwise, with
$${L_{\xi,\nu,a,d,k} =  \frac{\Gamma(\xi+\frac{d+1}{2}+k) \Gamma(\xi+\frac{1+\nu}{2}) \Gamma(\xi+\frac{\nu}{2}+1) \Gamma(\frac{d}{2})\Gamma(-\xi-\frac{1}{2})}{\Gamma(\frac{d}{2}+k) \Gamma(\xi+\frac{1}{2})\Gamma(\frac{\nu}{2})\Gamma(\frac{\nu+1}{2}) \Gamma(\xi+\frac{d+1}{2})}}, $$
belongs to $\Phi_d$ when the following conditions hold:
\begin{enumerate}
    \item[(A)] $\xi > -\frac{1}{2}$
    \item[(B)] $\nu \geq \nu_{\min}(\xi,d+2k)$
    \item[(C)] $\xi+\frac{1}{2} \notin \mathbb{N}$.
\end{enumerate}
The $d$-radial spectral density is given by
\begin{equation}
\label{GWdensity}
\widehat{{\cal GW}}_{a,\xi,\nu,d,k} = {\widehat{L}}_{\xi,\nu,a,d,k}\, u^{2k} {{}_1F_2\left(\xi+\frac{d+1}{2}+k;\xi+\frac{d+\nu+1}{2}+k,\xi+\frac{d+\nu}{2}+k+1;-\frac{a^2 u^2}{4}\right)}, \quad u \geq 0,
\end{equation}
with $${ \widehat{L}}_{\xi,\nu,a,d,k} = \frac{a^{d+2k}\Gamma(\frac{d}{2})\Gamma(\xi+\frac{d+1}{2}+k) \Gamma(2\xi+\nu+1)}{\pi^{{d/2}} \Gamma(\frac{d}{2}+k) \Gamma(\xi+\frac{1}{2}) \Gamma(2\xi+\nu+1+d+2k)}.$$

Furthermore, ${\cal GW}_{a,\xi,\nu,d,0} = {\cal GW}_{a,\xi,\nu}$, as defined in (\ref{WG4*}). If condition (C) does not hold, ${\cal GW}_{a,\xi,\nu,d,k}$ can still be defined by (\ref{WG4*}) when $k=0$, or by continuation of (\ref{wendlandext0}) when $k \geq 1$, with the $d$-radial spectral density  (\ref{GWdensity}).
\end{proposition}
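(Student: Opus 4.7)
The plan is to work primarily in the spectral domain, leveraging the observation that (\ref{GWdensity}) is essentially the $(d+2k)$-radial spectral density (\ref{llkk}) of the standard Generalized Wendland model ${\cal GW}_{a,\xi,\nu}$, multiplied by $u^{2k}$, up to a multiplicative constant. Explicitly, replacing $d$ by $d+2k$ in (\ref{llkk}) yields exactly the ${}_1F_2$ factor appearing in (\ref{GWdensity}), and the constant $\widehat{L}_{\xi,\nu,a,d,k}$ differs from the constant in (\ref{llkk}) only by the factor $\Gamma(d/2)/\bigl(\pi^{k}\Gamma(d/2+k)\bigr)$. This structural identity encodes the turning bands construction alluded to in the introduction: multiplying the spectrum by $u^{2k}$ trades $2k$ spatial dimensions for the possibility of negative correlations in the spatial domain.

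\textbf{Step 1: validity of the spectral density.} First I would show that under conditions (A)--(C), (\ref{GWdensity}) is a bona fide $d$-radial spectral density. The factor $u^{2k}$ is non-negative, and non-negativity of the ${}_1F_2$ factor is equivalent to ${\cal GW}_{a,\xi,\nu}\in\Phi_{d+2k}$, which, by (\ref{bevcondition}) with $d$ replaced by $d+2k$, is exactly condition (B); condition (A) is the standing hypothesis making the underlying ${\cal GW}$ well-defined. Integrability of $u^{d-1}\widehat{{\cal GW}}_{a,\xi,\nu,d,k}(u)$ on $(0,\infty)$ then reduces to integrability of $u^{d+2k-1}\widehat{{\cal GW}}_{a,\xi,\nu,d+2k}(u)$, which is automatic since ${\cal GW}_{a,\xi,\nu}\in\Phi_{d+2k}$. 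The coefficient $\widehat{L}_{\xi,\nu,a,d,k}$ is pinned down by imposing $C(0)=1$, i.e.\ by a one-line gamma-function identity between $\Gamma(d/2+k)$, $\Gamma(2\xi+\nu+1+d+2k)$ and the gamma factors already present in (\ref{llkk}).

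\textbf{Step 2: inversion and closed form.} Next I would invert the Fourier-Hankel transform (\ref{fourier1}) applied to (\ref{GWdensity}), arriving at an integral of the form
\begin{equation*}
h^{1-d/2}\int_0^\infty u^{d/2+2k}\,J_{d/2-1}(uh)\,{}_1F_2\bigl(\xi+\tfrac{d+1}{2}+k;\,\xi+\tfrac{d+\nu+1}{2}+k,\,\xi+\tfrac{d+\nu}{2}+k+1;-\tfrac{a^2 u^2}{4}\bigr)\,du.
\end{equation*}
This is a classical Hankel integral of a generalized hypergeometric function against a Bessel kernel, whose Mellin-Barnes evaluation (for instance formula 2.22.4.6 of Prudnikov) produces a linear combination of two ${}_3F_2$ series in $h^2/a^2$. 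After tracking the constants via Pochhammer shifts and the Gauss duplication formula, these two series match exactly the two ${}_3F_2$ terms in (\ref{wendlandext0}), including the weight $L_{\xi,\nu,a,d,k}$. Vanishing of $C(h)$ for $h\geq a$ follows from the same evaluation, as beyond $a$ the two ${}_3F_2$ contributions cancel, mirroring the compact support of ${\cal GW}_{a,\xi,\nu}$. The identity ${\cal GW}_{a,\xi,\nu,d,0}={\cal GW}_{a,\xi,\nu}$ is then recovered by direct substitution and a quadratic ${}_3F_2\to{}_2F_1$ transformation against (\ref{WG4*}).

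\textbf{Main obstacles.} The hard part will be the Hankel integral in Step 2: although all the ingredients are classical, bookkeeping the gamma-function constants so as to reproduce precisely the coefficient $L_{\xi,\nu,a,d,k}$ in (\ref{wendlandext0}), including the delicate $\Gamma(-\xi-\tfrac{1}{2})$ factor, is intricate. A secondary obstacle is the continuation argument when condition (C) fails, i.e.\ $\xi+\frac{1}{2}\in\mathbb{N}$: the apparent pole of $\Gamma(-\xi-\frac{1}{2})$ inside $L_{\xi,\nu,a,d,k}$ must cancel against poles of the Pochhammer symbol $(\tfrac{1}{2}-\xi)_n$ arising in the first ${}_3F_2$, so that the limit is well-defined; positive definiteness is then preserved by closure of $\Phi_d$ under pointwise limits, analogous to the polynomial collapse of the standard model seen in (\ref{ppoo}).
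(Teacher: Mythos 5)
Your strategy is sound and rests on the same structural identity the paper exploits, but you execute it by a genuinely different route. The observation at the heart of your Step 1 --- that (\ref{GWdensity}) is $u^{2k}\pi^{k}/(\tfrac{d}{2})_k$ times the $(d+2k)$-radial density (\ref{llkk}), and that condition (B) is precisely the criterion for ${\cal GW}_{a,\xi,\nu}\in\Phi_{d+2k}$ --- is exactly the spectral shadow of the $k$-fold turning bands operator ${\mathfrak T}_{d+2k,d}$ (the paper's Lemma \ref{TB2k}, Eq.\ (\ref{tb2kdensity})), and your normalization check for $\widehat{L}_{\xi,\nu,a,d,k}$ is correct. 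Where you diverge is in identifying the closed form (\ref{wendlandext0}): the paper proceeds by induction on $k$ entirely in the spatial domain, representing the model as a Meijer-$G$ function (\ref{GW2Meijer}) and invoking a tabulated derivative identity (Prudnikov 8.2.2.39) to show that one application of ${\mathfrak T}_{d+2,d}$ shifts $k-1\mapsto k$; membership in $\Phi_d$ then comes free from Lemma \ref{TB2}, with no appeal to Bochner's theorem. You instead get $\Phi_d$-membership from non-negativity plus integrability of the density and then perform a single direct Hankel inversion of the ${}_1F_2$ density via a Mellin--Barnes/Weber--Schafheitlin-type formula. Both are legitimate; the paper's induction localizes the special-function bookkeeping to one differentiation identity per step, while your route concentrates it in one discontinuous integral evaluation. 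Two points in your write-up deserve tightening. First, the compact support for $h\geq a$ does not follow from "cancellation of the two ${}_3F_2$ contributions" --- those series in $h^2/a^2$ do not converge for $h>a$; the correct statement is that the discontinuous Hankel integral itself evaluates to zero on that range (this is the defining feature of the Weber--Schafheitlin class, and is how the tabulated formula is stated). Second, the central gamma-function bookkeeping that produces $L_{\xi,\nu,a,d,k}$, including the $\Gamma(-\xi-\tfrac12)$ factor, is deferred rather than carried out; your pole-cancellation sketch for the continuation when $\xi+\tfrac12\in\mathbb{N}$ is consistent with what the paper does (it continues via the ${\mathfrak T}_{d+2k,d}$ representation, which stays in $\Phi_d$ by construction), but in your framework you would additionally need closure of $\Phi_d$ under pointwise limits of continuous functions with a continuous limit, which you correctly invoke.
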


We term ${\cal GW}_{a,\xi,\nu,d,k}$ the $(d,k)$-hole effect Generalized Wendland model or, more briefly, the hole effect Generalized Wendland model. Using \cite[7.4.1.2]{prud} and the reflection formula for the gamma function, one can express ${\cal GW}_{a,\xi,\nu,d,k}$ in terms of sums of  Gauss hypergeometric functions ${}_2F_{1}$ instead of  generalized hypergeometric functions ${}_3F_2$:
\begin{equation}
\label{wendlandext}
\begin{split}
    &{\cal GW}_{a,\xi,\nu,d,k}(h) \\
    &= \sum_{n=0}^k \frac{(-1)^n k! (\frac{1-\nu}{2}-\xi)_n (-\frac{\nu}{2}-\xi)_n}{n! (k-n)! (1-\frac{d}{2}-n)_n (\frac{1}{2}-\xi)_n}  \left(\frac{h}{a}\right)^{2n} {{}_2F_1\left(\frac{1-\nu}{2}-\xi+n,-\xi-\frac{\nu}{2}+n;\frac{1}{2}-\xi+n;\frac{h^2}{a^2}\right)}\\
    &+ \frac{\Gamma(2\xi+1+\nu) \Gamma(-\xi-\frac{1}{2})}{(\frac{d}{2})_k \Gamma(\xi+\frac{1}{2})\Gamma(\nu) 2^{1+2\xi}} \sum_{n=0}^k \frac{(-1)^{n+k} k! (\frac{1-d}{2}-\xi-k)_{k-n} (1-\frac{\nu}{2})_n (\frac{1-\nu}{2})_n}{n! (k-n)! (\xi+\frac{3}{2})_n}\\
    &\qquad \times \left(\frac{h}{a}\right)^{2\xi+1+2n} {{}_2F_1\left(1-\frac{\nu}{2}+n,\frac{1-\nu}{2}+n;\xi+\frac{3}{2}+n;\frac{h^2}{a^2}\right)}, \quad 0 \leq h < a,
\end{split}
\end{equation}
and $0$ if $h \geq a$. This last expression allows for a numerical computation of ${\cal GW}_{a,\xi,\nu,d,k}$, because the Gauss hypergeometric function is implemented in the GNU scientific library and in the most important statistical softwares, including R, Matlab and Python. Other analytical expressions of ${\cal GW}_{a,\xi,\nu,d,k}$ in terms of Gauss hypergeometric functions, associated Legendre functions, or Meijer-$G$ functions are given in {Appendix}.\\

Compared to the standard Generalized Wendland model, the model ${\cal GW}_{a,\xi,\nu,d,k}$ has an extra discrete parameter $k$ describing increasing
levels of negative correlations when $k = 1,2....$. 
In addition, as in the hole effect Matern model, it depends on the dimension $d$.
The role of the other parameters is unchanged.

The computation of the $(d,k)$-hole effect Generalized Wendland  model using (\ref{wendlandext}) can be cumbersome to statisticians used to handle closed-form parametric correlation models. When $k=0$  and $\xi \in \mathbb{N}$, closed-form expressions of (\ref{wendlandext})
that do not depend on the Gauss hypergeometric function can be obtained (see Equation \ref{ppoo}). Similarly, the following proposition provides a closed-form solution when $k=1,2$ and $\xi \in \mathbb{N}$. \\

\begin{proposition}
\label{WendN}

If $\xi \in \mathbb{N}$ and $k=1$, then
\begin{eqnarray}
\label{genordwend}
      {\cal GW}_{a,\xi,\nu,d,1}(h) &=& \sum_{n=0}^{\xi} a_{\xi,n}(\nu) \left(1-\frac{h}{a}\right)^{n+\xi+\nu-1} \left(1+\frac{h}{a}\right)^{\xi-n-1}  \nonumber \left[1 - \frac{(2n+\nu)h}{a \,d} - \frac{(2\xi+\nu+d)h^2}{a^2 \,d} \right],
\end{eqnarray}
for $0 \leq h < a$, and $0$ otherwise,
with $a_{\xi,n}(\nu) =  \frac{2^{\nu} \Gamma(\xi)\Gamma(2\xi+\nu+1) (\nu)_n (-\xi)_n}{\Gamma(2\xi)\Gamma(\xi+\nu+1)2^{\nu+1} (\xi+\nu+1)_n \, n!}$.\\

If $\xi \in \mathbb{N}$ and $k=2$, the correlation ${\cal GW}_{a,\xi,\nu,d,2}$ is identically equal to
\begin{equation}
\label{genordwend2}
\begin{split}
      &{\cal GW}_{a,\xi,\nu,d,2}(h) =   \\
      &=  \begin{cases}
      \sum_{n=0}^{\xi} a_{\xi,n}(\nu) \left(1-\frac{h}{a}\right)_+^{n+\xi+\nu-2} \left(1+\frac{h}{a}\right)^{\xi-n-2} \\
      \Bigg[\left(1-\frac{h^2}{a^2}- \frac{h(n+\xi+\nu-1)}{a \, d} \left(1+\frac{h}{a}\right) + \frac{h(\xi-n-1)}{a \, d} \left(1-\frac{h}{a}\right)\right)
      \left(1 - \frac{(2n+\nu)h}{a \,(d+2)} - \frac{(2\xi+\nu+d+2)h^2}{a^2 \,(d+2)}\right)
       \\
      + \frac{h}{d} \left(1-\frac{h^2}{a^2}\right)\left(- \frac{(2n+\nu)}{a \,(d+2)} - \frac{2h(2\xi+\nu+d+2)}{a^2 \,(d+2)}
      \right) \Bigg], \quad 0 \leq h < a\\
      0, \quad h \geq a.\\
      \end{cases}
\end{split}
\end{equation}
\end{proposition}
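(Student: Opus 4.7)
The plan is to establish a spatial-domain turning-bands recurrence linking ${\cal GW}_{a,\xi,\nu,d,k}$ to ${\cal GW}_{a,\xi,\nu,d+2,k-1}$, and then apply it once or twice to the closed form of the standard Generalized Wendland correlation given in~(\ref{ppoo}). First, comparing the spectral density~(\ref{GWdensity}) at indices $(d,k)$ and $(d+2,k-1)$, the three parameters of the ${}_{1}F_{2}$ factor are invariant under the joint substitution $d\mapsto d+2,\, k\mapsto k-1$, and simplifying the ratio of normalizing constants via $\Gamma(d/2+1)=(d/2)\Gamma(d/2)$ yields $\widehat{{\cal GW}}_{a,\xi,\nu,d,k}(u)=(2\pi u^{2}/d)\,\widehat{{\cal GW}}_{a,\xi,\nu,d+2,k-1}(u)$. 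Next, for any sufficiently regular $(d+2)$-radial function $g$, the operator $D_{d} g(h):=g(h)+(h/d)g'(h)$ produces a $d$-radial function whose $d$-radial spectral density equals $(2\pi u^{2}/d)\,\widehat{g}_{d+2}(u)$; this follows by one integration by parts in the $d$-radial Hankel transform using the Bessel identity $\frac{d}{dh}[h^{1-d/2}J_{d/2-1}(uh)]=-u\,h^{1-d/2}J_{d/2}(uh)$. Combining these two facts gives the spatial recurrence
\[
{\cal GW}_{a,\xi,\nu,d,k}(h)={\cal GW}_{a,\xi,\nu,d+2,k-1}(h)+\frac{h}{d}\,\frac{d}{dh}{\cal GW}_{a,\xi,\nu,d+2,k-1}(h).
\]

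Since ${\cal GW}_{a,\xi,\nu,d,0}={\cal GW}_{a,\xi,\nu}$ is independent of $d$, the recurrence specializes to ${\cal GW}_{a,\xi,\nu,d,1}=D_{d}\,{\cal GW}_{a,\xi,\nu}$ and ${\cal GW}_{a,\xi,\nu,d,2}=D_{d}\circ D_{d+2}\,{\cal GW}_{a,\xi,\nu}$. For $\xi\in\mathbb{N}$ I would expand the standard Generalized Wendland in the basis $\{(1-h/a)^{n+\xi+\nu}(1+h/a)^{\xi-n}\}_{n=0}^{\xi}$ as
\[
{\cal GW}_{a,\xi,\nu}(h)=\sum_{n=0}^{\xi}a_{\xi,n}(\nu)\,(1-h/a)_{+}^{n+\xi+\nu}(1+h/a)^{\xi-n},
\]
where the $a_{\xi,n}(\nu)$ coincide with the coefficients of the terminating series ${}_{2}F_{1}(\nu,-\xi;\xi+\nu+1;\cdot)$. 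Applying $D_{d}$ termwise, the product rule on $(1-h/a)^{A}(1+h/a)^{B}$ with $A=n+\xi+\nu,\, B=\xi-n$ factors out $(1-h/a)^{A-1}(1+h/a)^{B-1}$ and, using $A+B=2\xi+\nu$ and $A-B=2n+\nu$, yields the quadratic bracket $1-(2n+\nu)h/(ad)-(2\xi+\nu+d)h^{2}/(a^{2}d)$, recovering~(\ref{genordwend}). For $k=2$, I would first apply $D_{d+2}$ (which produces the $k=1$ summands with $d$ replaced by $d+2$) and then $D_{d}$ to the resulting $(1-h/a)^{n+\xi+\nu-1}(1+h/a)^{\xi-n-1}Q(h)$ with $Q(h)=1-(2n+\nu)h/(a(d+2))-(2\xi+\nu+d+2)h^{2}/(a^{2}(d+2))$; the same bookkeeping on $f+(h/d)f'$, now distributing the derivative across the $Q$-factor, reproduces verbatim the large bracket of~(\ref{genordwend2}), with the additive $(h/d)(1-h^{2}/a^{2})Q'(h)$ contribution arising from differentiation of $Q$.

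The main obstacle is the identification, for $\xi\in\mathbb{N}$, of the unspecified polynomial $P_{\xi}$ in~(\ref{ppoo}) with the combination $\sum_{n=0}^{\xi}a_{\xi,n}(\nu)(1-h/a)^{n}(1+h/a)^{\xi-n}$. While~(\ref{ppoo}) guarantees only the factorization $(1-h/a)_{+}^{\xi+\nu}P_{\xi}(h;\nu,a)$ for an unspecified degree-$\xi$ polynomial $P_{\xi}$, the specific basis representation above must be proved separately, e.g., by applying an Euler transformation to the Gauss hypergeometric function in~(\ref{WG4*}) and recognizing the terminating ${}_{2}F_{1}(\nu,-\xi;\xi+\nu+1;\cdot)$ triggered by the Pochhammer factor $(-\xi)_{n}$. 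This preliminary identity fits naturally as a technical lemma in the Appendix; once it is granted, the remaining calculations reduce to the routine differentiation and collection of like terms outlined above, with the dimension $d$ entering cleanly through each application of $D_{d}$.
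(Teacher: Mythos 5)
Your proposal is correct and follows essentially the same route as the paper: the paper applies the turning bands operator ${\mathfrak T}_{d+2,d}[C](h)=\frac{h^{1-d}}{d}\frac{\partial[h^{d}C(h)]}{\partial h}=C(h)+\frac{h}{d}C'(h)$ (Lemma \ref{TB2}) once, respectively twice, to the representation ${\cal GW}_{a,\xi,\nu,d+2,0}(h)=\sum_{n=0}^{\xi}a_{\xi,n}(\nu)(1-h/a)_{+}^{n+\xi+\nu}(1+h/a)^{\xi-n}$, which is exactly your operator $D_{d}$ applied to the same basis expansion. The only differences are cosmetic: you re-derive the recurrence from the spectral densities rather than invoking the turning-bands lemma directly, and the basis representation you flag as a needed preliminary lemma is handled in the paper by citing \cite{bevi2024}.
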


As an example, let us consider the simplest version of the Generalized Wendland  model, $i.e.$ the  Askey correlation model \citep{Golubov}:
 $${\cal GW}_{a,0,\nu,d,0}(h) = {\cal GW}_{a,0,\nu}(h)= \left(1-\frac{h}{a}\right)_+^{\nu}.$$
For $k=1$ or $k=2$ and $\nu \geq \frac{d+1}{2}+k$, this correlation extends to the following models:
\begin{equation*}
    \label{Askey2}
    {\cal GW}_{a,0,\nu,d,1}(h) = \left(1-\frac{h}{a}\right)_+^{\nu-1} \left(1 - \frac{(\nu+d) \,h}{a \, d}\right)
\end{equation*}
and
\begin{equation*}
    \label{Askey3}
    {\cal GW}_{a,0,\nu,d,2}(h) = \left(1-\frac{h}{a}\right)_+^{\nu-2} \left[1-\left(2+\frac{\nu (2d+3)}{d(d+2)} \right)\frac{h}{a}+\frac{h^2}{a^2} \left(1+ \frac{\nu (2d+\nu+2)}{d(d+2)}\right) \right].
\end{equation*}

Another example of (\ref{genordwend}) is
\begin{equation*}
    \label{ciccio}
    {\cal GW}_{a,1,\nu,d,1}(h) = \left(1-\frac{h}{a}\right)_+^{\nu} \left(1 +\nu h- \frac{(\nu+1)(\nu+2+d)}{d}\frac{h^2}{a^2} \right),
\end{equation*}
which is the model in \cite[Equation 13]{gneiting2002compactly}.

An illustration of the hole effect Askey model ${\cal GW}_{a,0,\nu,d,k}$ for $k=0, 1, 2, 3$ and $d=2$ is depicted in Figure \ref{holegw} (left part). It can be seen that, when increasing $k$, the hole effect increases and the correlation is flattened at the same time. Figure \ref{holegw} (right part) depicts examples of $ {\cal GW}_{a,1,\nu,d,k}$ for $k=0, 1, 2, 3$ and $d=2$. One can appreciate the different levels of differentiability at the origin between the  examples of the left and right parts.

\begin{figure}[h!]
\begin{center}
\begin{tabular}{cc}
\includegraphics[width=6.2cm,height=6.2cm]{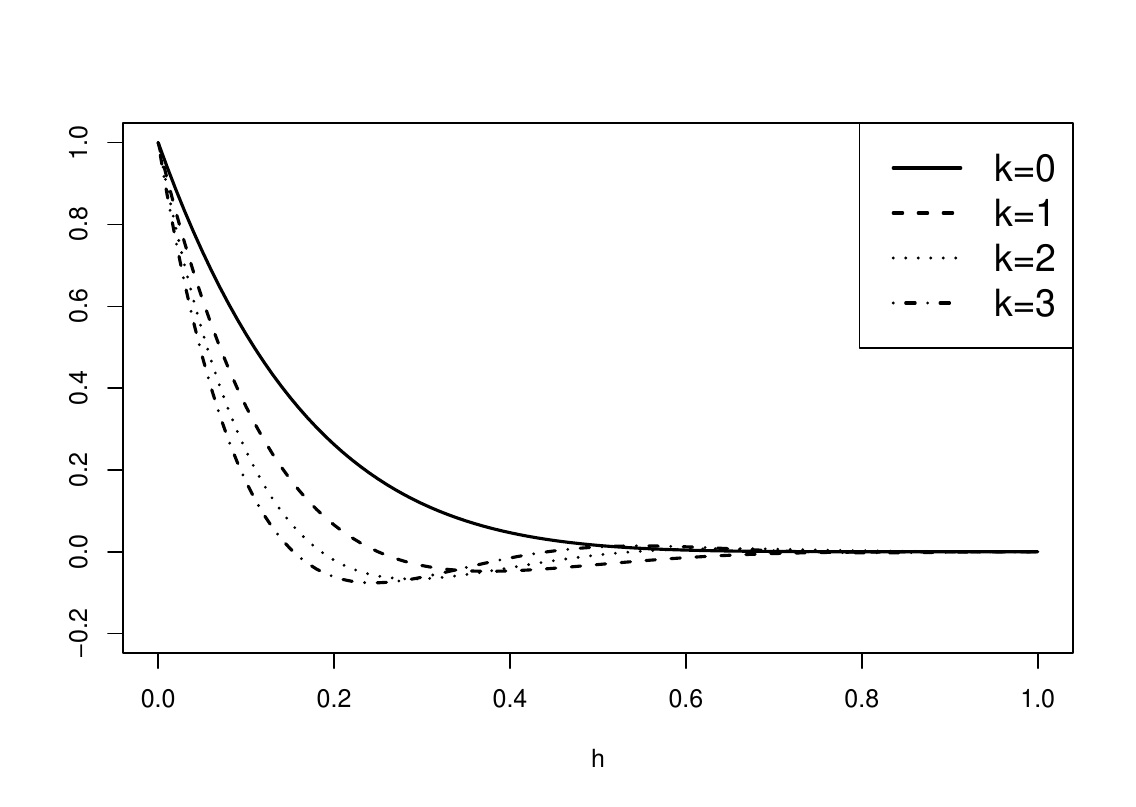}&\includegraphics[width=6.2cm,height=6.2cm]{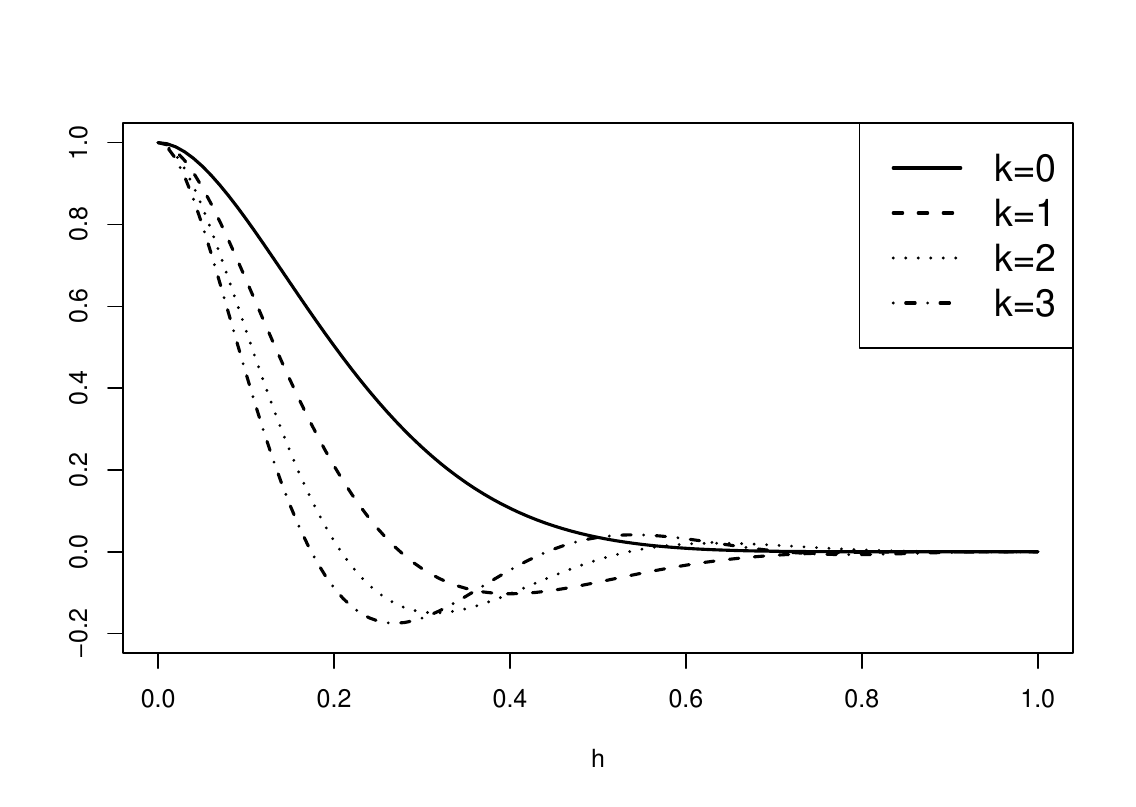}
\end{tabular}
\end{center}
\caption{Examples of the $(d,k)$-hole  Generalized Wendland model when $d=2$. Left: ${\cal GW}_{1,0,6,2,k}(\cdot)$ for $k=0,1,2,3$ from top to bottom. Right: ${\cal GW}_{1,1,6,2,k}(\cdot)$ for $k=0,1,2,3$ from top to bottom.}
 \label{holegw}
\end{figure}

Finally, Figure \ref{realizations} depicts realizations of zero-mean and unit-variance Gaussian random fields with correlations ${\cal GW}_{0.4,0,6,2,k}$ and ${\cal GW}_{0.4,1,6,2,k}$ for $k=0,1,2$. The realizations were constructed via the Cholesky decomposition of the covariance matrix.

\begin{figure}[h!]
\begin{center}
\begin{tabular}{ccc}
\includegraphics[width=4.2cm,height=5.2cm]{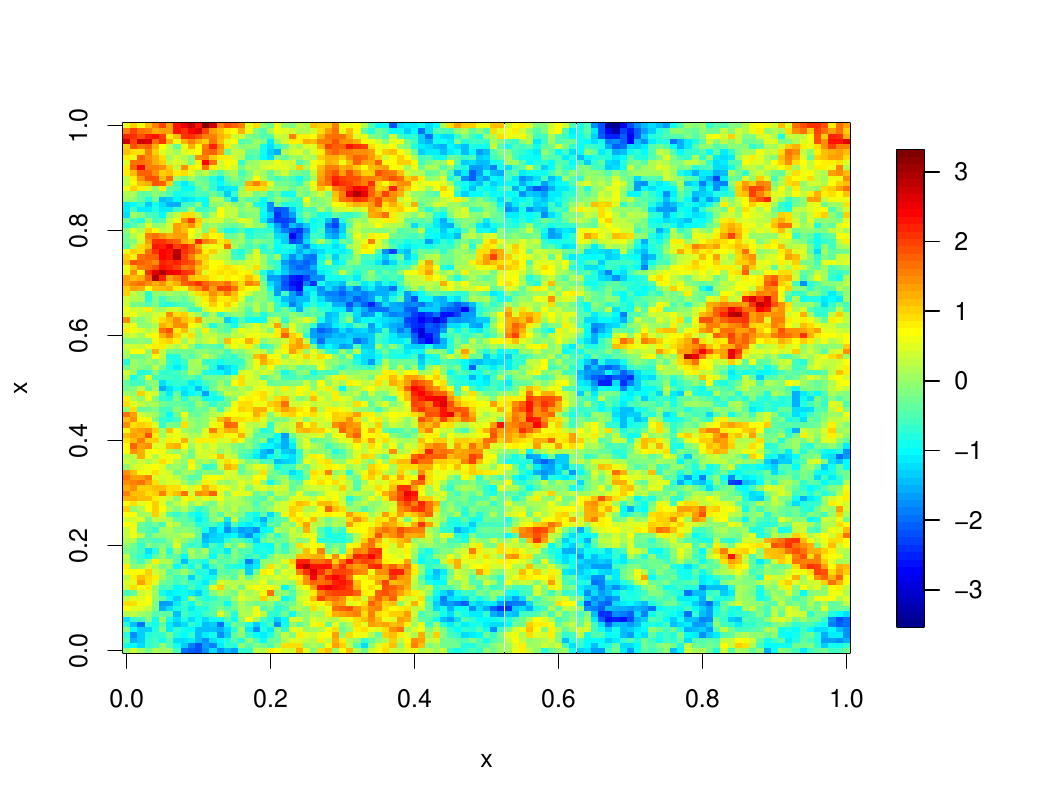}&\includegraphics[width=4.2cm,height=5.2cm]{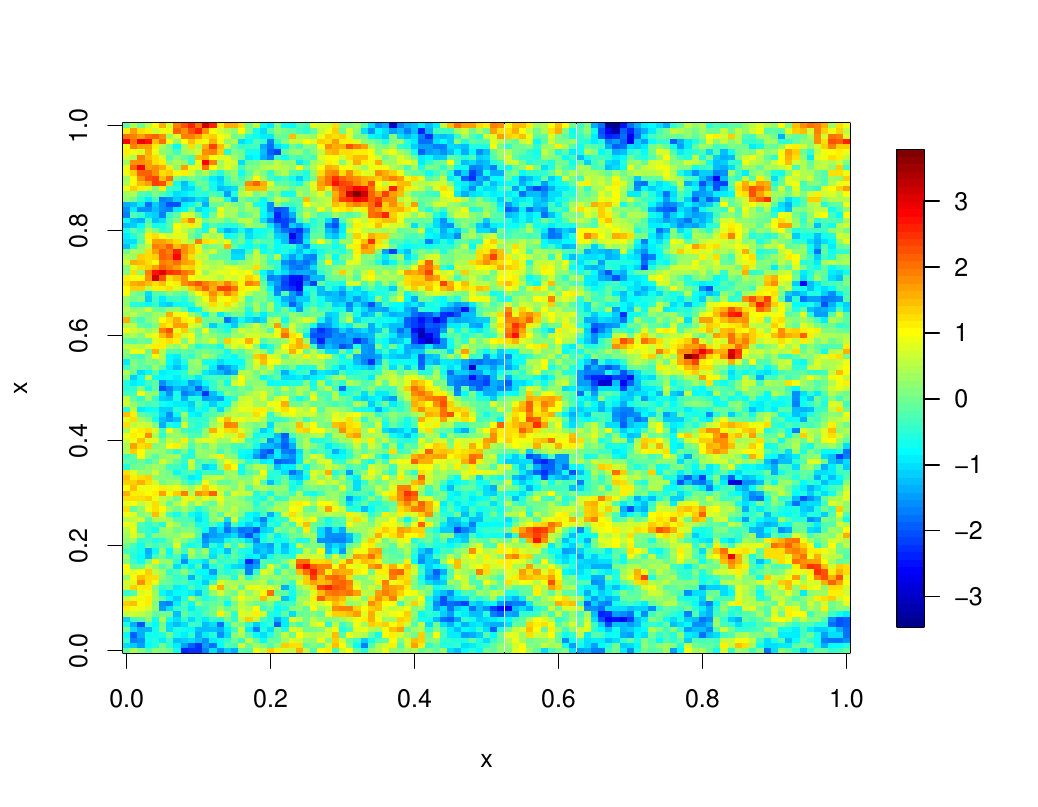}&\includegraphics[width=4.2cm,height=5.2cm]{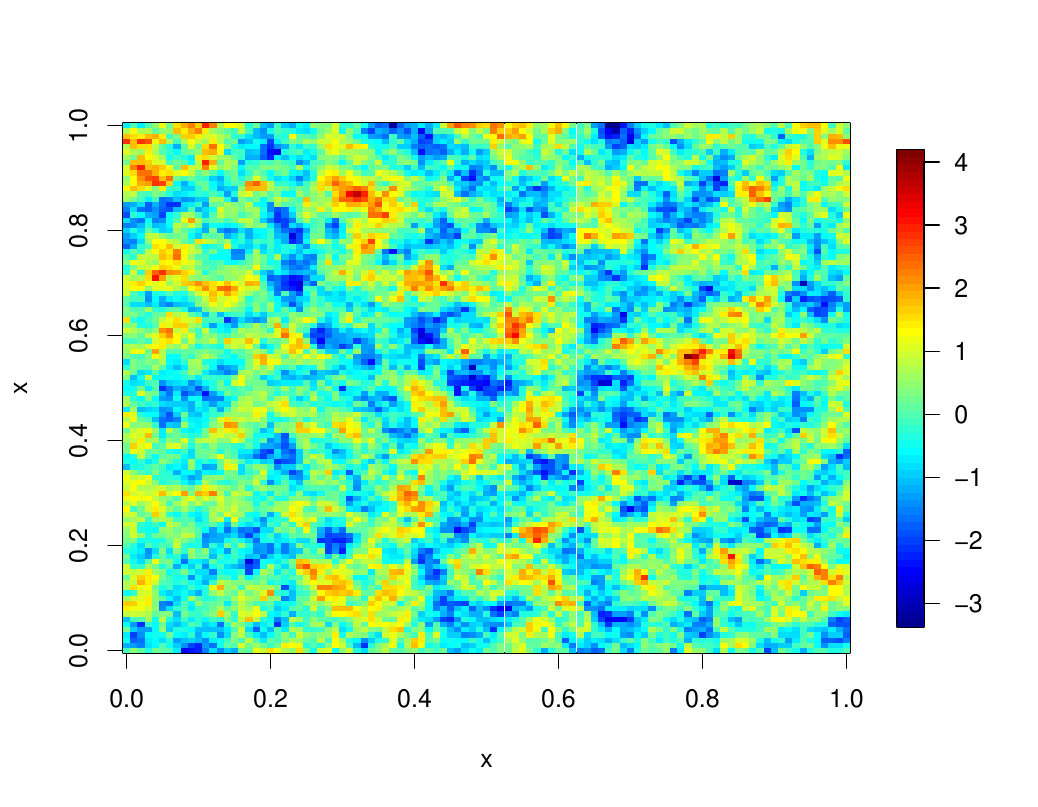}\\
\includegraphics[width=4.2cm,height=5.2cm]{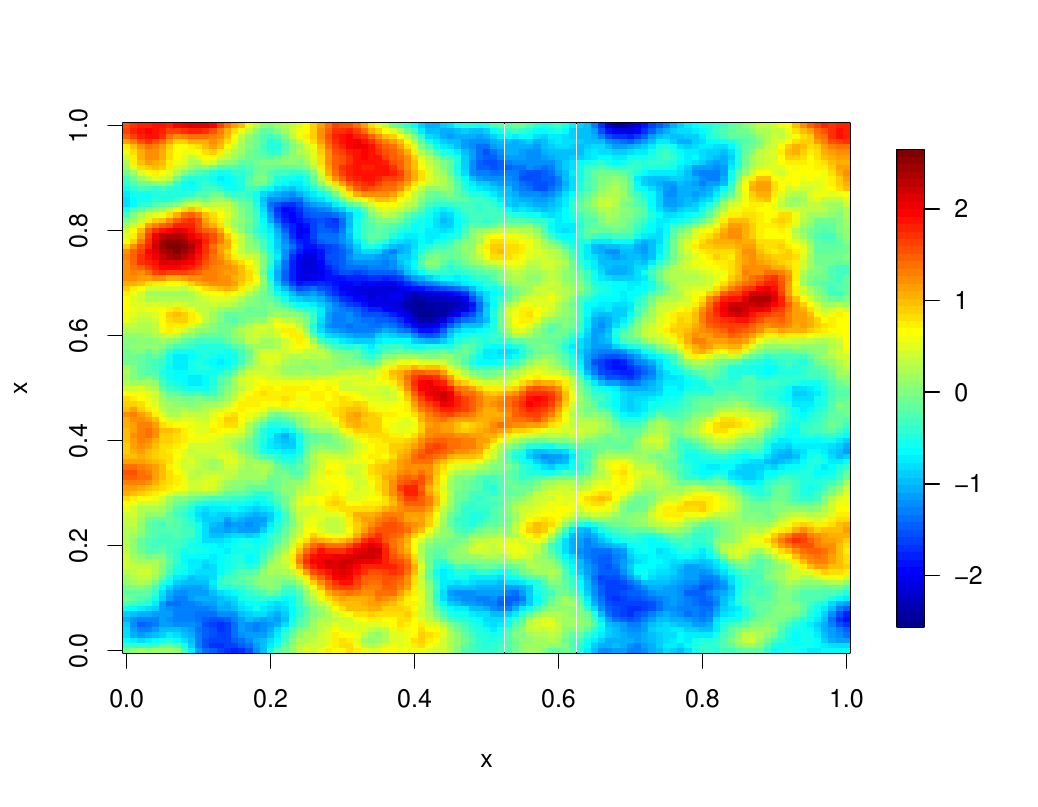}&\includegraphics[width=4.2cm,height=5.2cm]{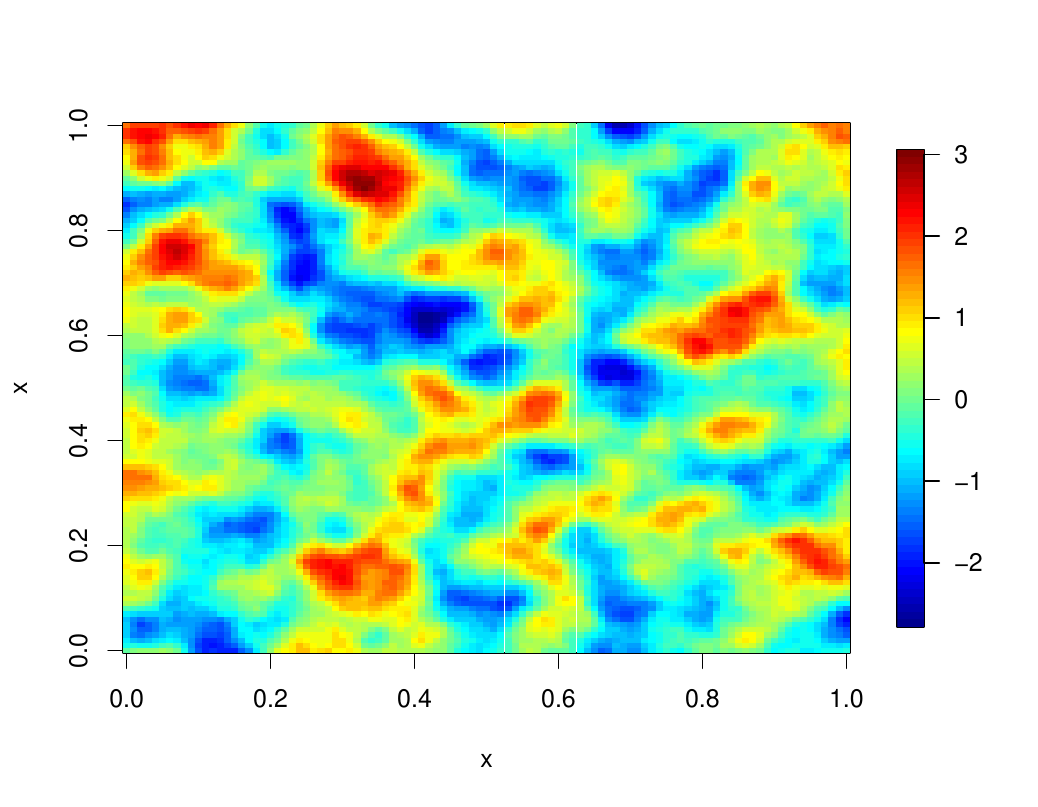}&\includegraphics[width=4.2cm,height=5.2cm]{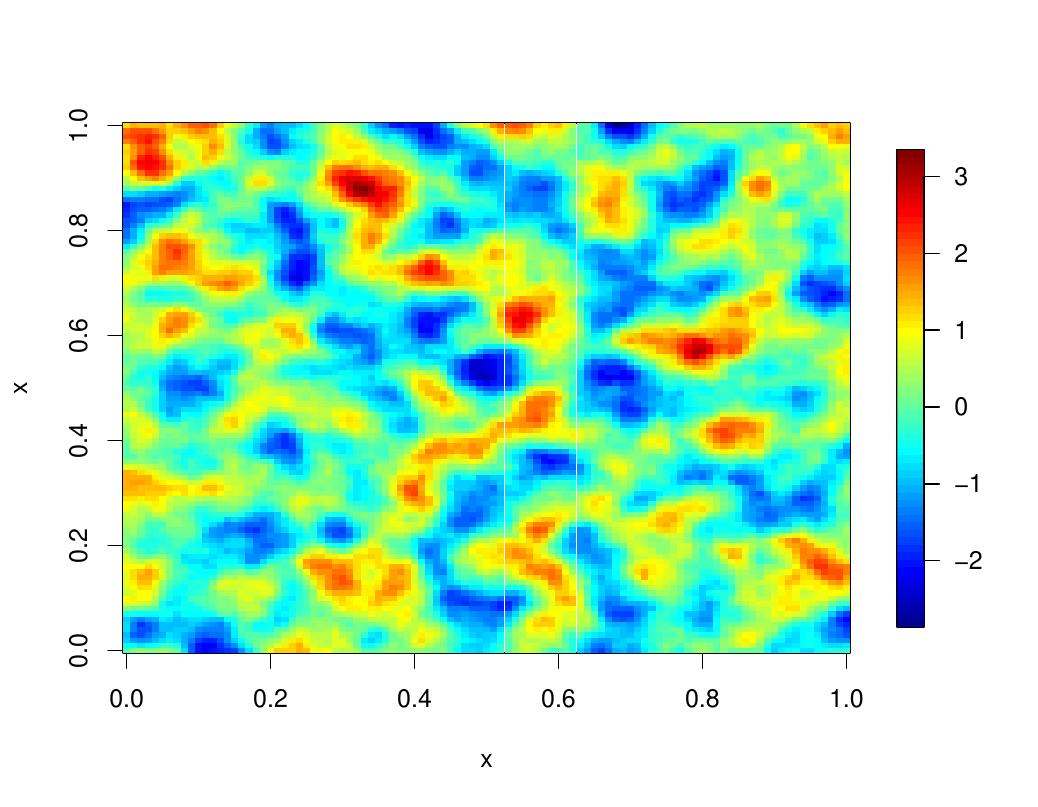}\\
\end{tabular}
\end{center}
\caption{
Top: three Gaussian random field realizations with ${\cal GW}_{0.4,0,6,2,k}$ correlation model, for $k=0,1,2$ (from left to right). Bottom: three Gaussian random field realizations with ${\cal GW}_{0.4,1,6,2,k}$ correlation model, for $k=0,1,2$ (from left to right).}
\label{realizations}
\end{figure}

\subsection{A Bridge Between Compactly and Globally Supported Models with Hole Effects} \label{seccc5}

The following shows that our compactly and globally supported models that parameterize hole effects can be put under the same umbrella. The key is an elegant convergence argument proving that a reparameterized version of the $(d,k)$-hole effect Generalized Wendland model converges to the $(d,k)$-hole effect Mat{\'e}rn model for every fixed $k$. The result is formally stated below.

\begin{proposition}
\label{wend2mat}
    Let $a$, $\xi$, $\nu > 0$ and $k \in \mathbb{N}$. As $\nu$ tends to $+\infty$, ${\cal GW}_{\nu a,\xi-{1/2},\nu,d,k}$ and $\widehat{{\cal GW}}_{\nu a,\xi-{1/2},\nu,d,k}$ uniformly converge to ${\cal M}_{a,\xi,d,k}$ and $\widehat{{\cal M}}_{a,\xi,d,k}$ on $[0,+\infty)$:
    \begin{equation}
    \label{ppoi4}
    \lim_{\nu\to\infty}  {\cal GW}_{\nu a,\xi-{1/2},\nu,d,k}(h)={\cal M}_{a,\xi,d,k}(h), \quad h \geq 0.
    \end{equation}
    \begin{equation*}
    \label{ppoi5}
    \lim_{\nu\to\infty}  \widehat{{\cal GW}}_{\nu a,\xi-{1/2},\nu,d,k}(u)=\widehat{{\cal M}}_{a,\xi,d,k}(u), \quad u \geq 0.
    \end{equation*}
\end{proposition}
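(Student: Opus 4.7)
The plan is to work on the spectral side, exploiting the closed-form expressions (\ref{stein2}) and (\ref{GWdensity}), and then transfer back to the correlation functions via Fourier inversion. Substituting $(a_{\textrm{GW}},\xi_{\textrm{GW}}) = (\nu a,\xi-1/2)$ into the hole-effect Generalized Wendland spectral density places the argument of its ${}_1F_2$ factor at $-\nu^2 a^2 u^2/4$ while simultaneously shifting both of its lower parameters by $\nu/2$. This is exactly the regime in which a ${}_1F_2$ degenerates to a binomial series, which is precisely the shape of the hole-effect Mat\'ern spectral density (\ref{stein2}).

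First I would handle the prefactor $\widehat{L}_{\xi-1/2,\nu,\nu a,d,k}$: the only $\nu$-dependent piece is $(\nu a)^{d+2k}\Gamma(2\xi+\nu)/\Gamma(2\xi+\nu+d+2k)$, which by Stirling tends to $a^{d+2k}$, producing exactly the constant in $\widehat{{\cal M}}_{a,\xi,d,k}$. Next I would expand the ${}_1F_2$ as its defining power series and pass to the limit term by term. For each $n$, both Pochhammer symbols $(\xi+(d+\nu)/2+k)_n$ and $(\xi+(d+\nu+1)/2+k)_n$ grow like $(\nu/2)^n$, so the $n$-th term converges to $(\xi+d/2+k)_n(-a^2u^2)^n/n!$. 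Summing the limits yields the binomial expansion of $(1+a^2u^2)^{-(\xi+d/2+k)}$, which together with the limits of the prefactor and of the $u^{2k}$ factor reproduces $\widehat{{\cal M}}_{a,\xi,d,k}(u)$ pointwise.

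Once pointwise convergence of spectral densities is in hand, I would upgrade it to $L^1(\mathbb{R}^d)$ convergence via Scheff\'e's lemma: read as (constants times) radial probability densities on $\mathbb{R}^d$, both sides share the common total mass determined by $C(0)=1$, so pointwise convergence of nonnegative densities with matched total mass forces $L^1$ convergence. The Fourier-Hankel inversion (\ref{fourier1}) is a bounded operator from $L^1(\mathbb{R}^d)$ into $C_0([0,+\infty))$, whence the hole-effect Generalized Wendland correlations converge uniformly to the hole-effect Mat\'ern correlation on $[0,+\infty)$. To upgrade the pointwise convergence of the spectral densities themselves to uniform convergence, I would combine it with the uniform tail bound $\widehat{{\cal GW}}_{\nu a,\xi-1/2,\nu,d,k}(u) = O(u^{-d-2\xi})$ obtained from the large-argument asymptotics of ${}_1F_2$, noting that $\widehat{{\cal M}}_{a,\xi,d,k}$ has the same decay rate.

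The main obstacle is justifying the termwise limit of the ${}_1F_2$ uniformly in $u$. The crude bound $(\beta+\nu/2)_n \ge (\nu/2)^n$ delivers the envelope $(\xi+d/2+k)_n(au)^{2n}/n!$, which is summable only for $au<1$. For $au\ge 1$ the binomial series diverges and termwise domination fails. I would bypass this either by (i) invoking an integral representation of ${}_1F_2$ as a Bessel-$J$-weighted integral and applying dominated convergence under the integral sign, or (ii) proving only $L^1$ convergence of the spectral densities via Scheff\'e (using pointwise convergence on $(0,+\infty)$ and matched total mass), deducing uniform convergence of correlations from Fourier inversion, and recovering uniform convergence of the spectral densities a posteriori from the uniform large-$u$ asymptotics.
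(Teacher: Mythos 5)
Your route is genuinely different from the paper's. The paper works directly on the correlation functions: it represents ${\cal M}_{a,\xi}$ and ${\cal GW}_{\nu a,\xi-1/2,\nu,d,0}$ as mont\'ees (fractional integrals) of the exponential and Askey correlations respectively, controls their difference via the elementary bound $0\le \ee^{-u/a}-(1-\tfrac{u}{\nu a})^{\nu}\le \nu^{-1}a^{-2}u^{2}\ee^{-u/a}$ on $(0,\nu a]$, splits the integral at $u=\nu a$, and upgrades pointwise to uniform convergence with Dini's second theorem (Lemma \ref{Dini}); the case $k\ge1$ follows by the turning bands recursion, and the spectral-density statement is handled separately by citing \cite{bevilacqua2022unifying} together with Lemmas \ref{Dini} and \ref{monotonicity}. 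Your plan instead funnels everything through the spectral densities and recovers the correlation-side statement by Scheff\'e plus Fourier inversion. That step is sound and attractive: both $\widehat{{\cal GW}}_{\nu a,\xi-1/2,\nu,d,k}(\|\cdot\|_d)$ and $\widehat{{\cal M}}_{a,\xi,d,k}(\|\cdot\|_d)$ are probability densities on $\mathbb{R}^d$, so a.e.\ pointwise convergence gives $L^1(\mathbb{R}^d)$ convergence, and $\sup_{h}|{\cal GW}_{\nu a,\xi-1/2,\nu,d,k}(h)-{\cal M}_{a,\xi,d,k}(h)|\le\|\widehat{{\cal GW}}_{\nu a,\xi-1/2,\nu,d,k}-\widehat{{\cal M}}_{a,\xi,d,k}\|_{L^1(\mathbb{R}^d)}$ then yields uniform convergence of the correlations in one stroke, with no case analysis on $\xi$. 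Your Stirling computation for the prefactor and the termwise limits are also correct as far as they go.

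The gap is that the linchpin of the whole argument---pointwise convergence of the spectral densities on all of $(0,\infty)$---is not actually established. As you note, termwise domination fails for $au\ge1$; but since the ${}_1F_2$ series has alternating signs and its formal limit series diverges there, termwise convergence without domination does not even give pointwise convergence of the sums, so your fallback (ii) is circular: Scheff\'e cannot be invoked ``using pointwise convergence on $(0,+\infty)$'' when that is precisely what is missing on $\{au\ge1\}$. Fallback (i) is the right kind of idea but is not carried out, and the representation must be chosen so as not to reintroduce the unknown limit of the correlation function (the obvious choice, Equation (\ref{fourier2}), does exactly that). A workable completion is a Vitali-type argument---the ${}_1F_2$'s are entire in the argument, your estimates give local boundedness and convergence on a complex disc, hence locally uniform convergence on $\mathbb{C}$---or an adaptation of the integral representation used in \cite{bevilacqua2022unifying}, which is what the paper does by citation. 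A second, smaller gap: pointwise convergence plus matching $O(u^{-d-2\xi})$ tails does not give uniform convergence of the densities on the remaining compact part $[0,U]$; you need an additional ingredient such as monotonicity of $\widehat{{\cal GW}}_{\nu a,\xi-1/2,\nu,d+2k,0}$ in $u$ (Lemma \ref{monotonicity}, available for all large $\nu$ by (\ref{bevcondition})) combined with Dini's second theorem, which is exactly how the paper closes this step.
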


For $k=0$, the claim of this proposition is similar to the result (\ref{ppoi}) of \cite{bevilacqua2022unifying}. However, the proposed parameterization of the compact support has here a clearer interpretation, as it does not involve the smoothness parameter as in (\ref{ppoi}).
A consequence of Proposition $\ref{wend2mat}$ is that, for a given smoothness parameter $\xi$ and scale parameter $a$, the parameter $\nu$ fixes the sparseness of the associated correlation matrices and allows {switching} from the world of flexible compactly supported correlation models with hole effects to the world of flexible globally supported correlation models with hole effects.

As an illustration, Figure \ref{convergence} depicts ${\cal GW}_{\nu a,\xi-{1/2},\nu,d,k}$ when $k=0,1,2$ and $\nu=10,50$ or $\nu\to\infty$, the latter being ${\cal M}_{a,\xi,d,k}$, for some specific values of the parameters $(a,d,\xi)$. When $k$ and $\xi$ increase at the same time, both ${\cal GW}_{\nu a,\xi-{1/2}, \nu, d,k}$ and ${\cal M}_{a,\xi ,d,k}$ behave as differentiable (at the origin) oscillating correlation functions.

\begin{figure}[h!]
\begin{center}
\begin{tabular}{ccc}
\includegraphics[width=4.5cm,height=5.2cm]{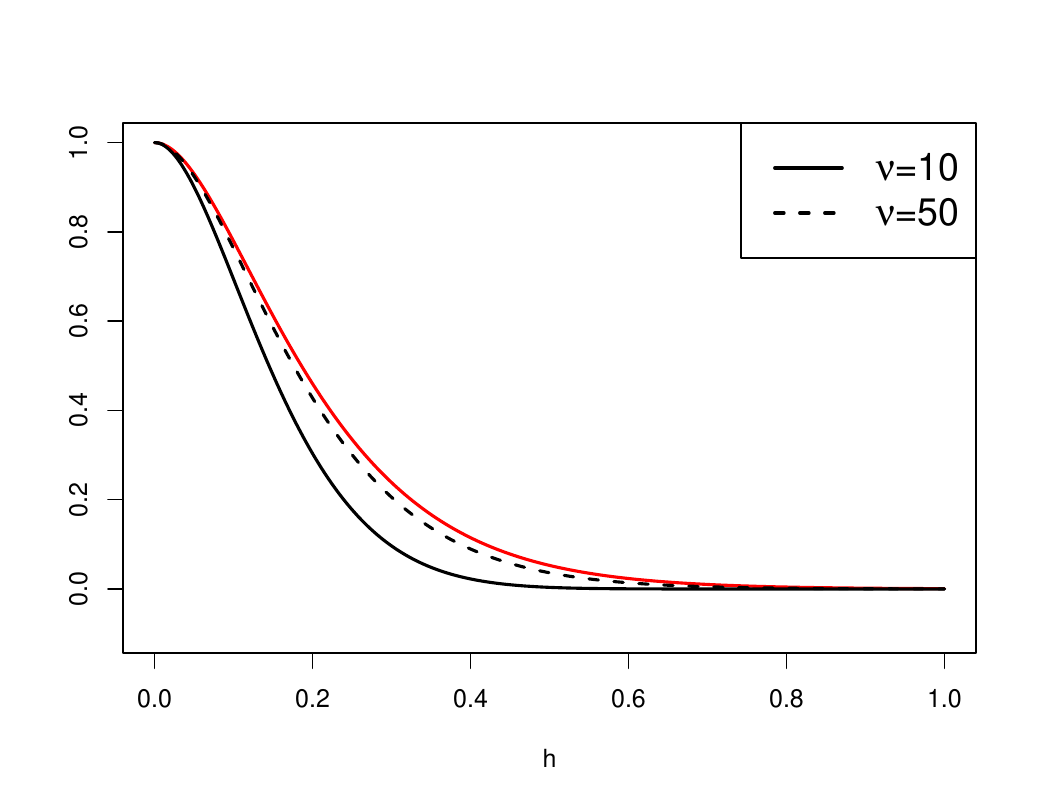}&\includegraphics[width=4.5cm,height=5.2cm]{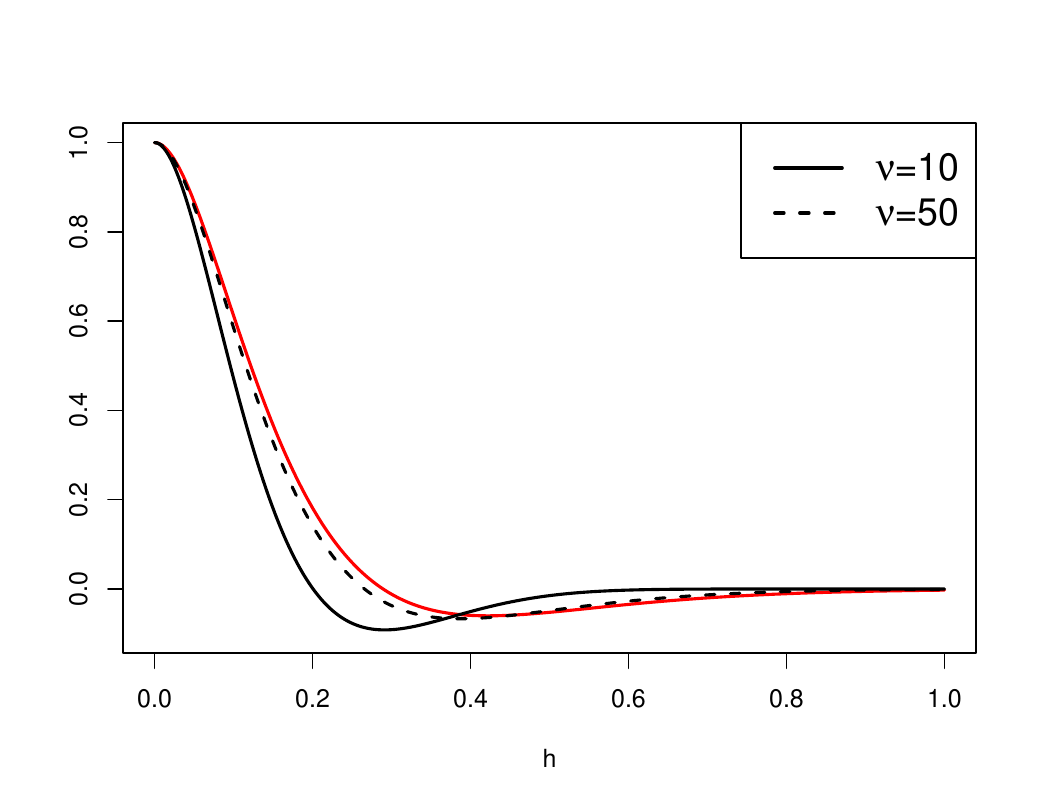}&\includegraphics[width=4.5cm,height=5.2cm]{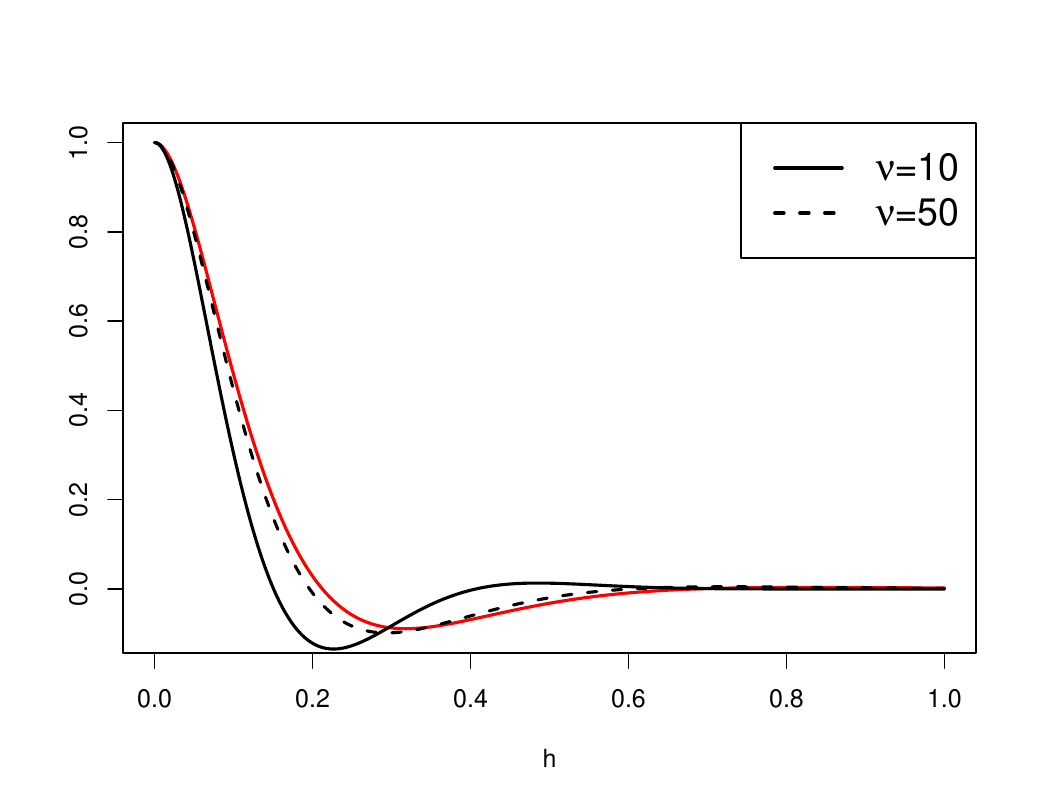}\\
\end{tabular}
\end{center}
\caption{
Left: ${\cal GW}_{\nu a,\xi-{1/2}, \nu,d,k}$ for $\nu=10,50$ and ${\cal M}_{a,\xi,d,k}$ (red line) when $\xi=1.75$, $a=0.1$, $d=2$ and $k=0$. Center and right: the same correlations models with $k=1,2$.}
 \label{convergence}
\end{figure}

\section{Estimation of the hole effect Generalized Wendland model: A simulation study}
\label{sec:estim}

We analyze the performance of the maximum likelihood estimation of the parameters of the $(d,k)$-hole effect Generalized Wendland model using the parameterization proposed in Section \ref{seccc5}, that is we consider the covariance model $\sigma^2 {\cal GW}_{\nu a,\xi-{1/2}, \nu,d,k}$  where $\sigma^2>0$ is a variance parameter. Hereafter, we focus on the planar case, $i.e.$ we set $d=2$, and we assume a scenario where the {continuous} parameters can be consistently estimated, $i.e.$ an increasing domain scenario.

Being an integer, the hole effect parameter $k$ is considered fixed and the goal of the simulation study is to investigate if the hole effect affects the maximum likelihood estimation of the parameters $a, \sigma^2, \xi$ that can vary continuously. {In practice, different plausible values of $k$ can be explored, and the estimations of $a, \sigma^2, \xi$ can be compared based on criteria such as the log-likelihood, Akaike information criterion (AIC), or predictive performance, as will be shown in the real data analysis in Section \ref{sec:real}.}

{Also, for the $\sigma^2{\cal GW}_{\nu a,\xi - {1/2},\nu,d,k}$ covariance model, $\nu$ determines the compact support of the covariance function under our proposed parameterization. In particular, for small values of $\nu$, the covariance matrix becomes highly sparse, whereas as $\nu \to \infty$, the model approaches the hole-effect Matérn covariance with a fully dense covariance matrix. As a consequence, $\nu$ can either be fixed by the user—when sparse matrices are desired for computational efficiency—or estimated from the data. In this section, we adopt the former approach, whereas in the data application in Section \ref{sec:real}, we will follow the latter.}

Maximum likelihood estimation partially takes advantage of the computational benefits of the $\sigma^2{\cal GW}_{\nu a,\xi-{1/2},\nu,d,k}$ model because the compact support $\nu a$ depends on $a$ and $\nu$. Even when considering a fixed $\nu$, the covariance matrix can be highly
or slightly sparse, depending on the value of $a$ in the optimization process. Alternative methods of estimation with a good balance between statistical efficiency and computational complexity include, among others, composite likelihood \citep{chaa,compblock}, multi-resolution approximation \citep{nychka2}, or methods based on directed acyclic graphs using Vecchia’s approximations \citep{ka2021,spabion2024}.

Using the Cholesky decomposition method, we simulate $500$ realizations of a zero-mean Gaussian random field with covariance $\sigma^2{\cal GW}_{\nu a,\xi-{1/2},\nu,d,k}$ observed at 1000 locations uniformly distributed in a unit square.
We consider different scenarios with increasing smoothness parameters $\xi=0.5,1.5,2.5$ and increasing levels of negative correlations  $k=0,1,2$. In addition  we set   $a=0.1$, $\sigma^2=1$ and $\nu=6$. For each realization, we estimate the parameters $(a, \sigma^2,\xi)$ with maximum likelihood.

Table \ref{tabl14} reports the bias and 
{root}
mean squared error (RMSE) associated with the estimation of $(a,\sigma^2,\xi)$. Overall the bias is approximatively zero for each combination of the parameters and in general the patterns observed for the case $k=0$ are also observed when the covariance model has a hole effect, $i.e.$ when $k=1,2$. For instance, the RMSE of the scale parameter  $a$ decreases when increasing $\xi$ for each $\nu$  and for each $k=0,1,2,$ and the RMSE of the smoothness parameter $\xi$  decreases when increasing $\xi$ for each $\nu$  and for each $k=0,1,2$.

In summary, our numerical experiments show that the hole effect does not affect the maximum likelihood  estimation of the $(d,k)$-hole effect reparameterized Generalized Wendland covariance model. As an illustration, Figure \ref{bxplots} depicts the boxplots of the maximum likelihood estimates for the parameters $(a,\sigma^2,\xi)$ (from left to right) when estimating the covariance model $\sigma^2{\cal GW}_{20a,\xi-{1/2}, 20, 2,k}$ with increasing levels of negative correlations $k=0,1,2$ when $a=0.1$, $\sigma^2=1$ and $\xi=1.5$. {It can be appreciated that the boxplots are nearly identical across different values of $k$, indicating that the estimates of $a$, $\sigma^2$ and $\xi$ are robust to the choice of the hole effect parameter $k$.}

\begin{table}[]
  \caption{Bias and {root} mean squared error (in parentheses) of the maximum likelihood estimates of the parameters $(a,\sigma^2,\xi)$
  of the correlation model $\sigma^2{\cal GW}_{\nu a,\xi-{1/2},\nu,2,k}$ for increasing levels of negative correlations $k=0,1,2$, increasing smoothness parameter $\xi=0.5,1.5,2.5$,
  and increasing levels of $\nu=5,20,\infty$.}
  \label{tabl14}

  \begin{center}
  \scalebox{0.65}{

\begin{tabular}{cc|ccc|ccc|ccc|}

\cline{3-11}&      & \multicolumn{3}{c|}{$\nu=5$}  & \multicolumn{3}{c|}{$\nu=20$} & \multicolumn{3}{c|}{$\nu=\infty$} \\ \hline
\multicolumn{1}{|c|}{}                            &                             & \multicolumn{1}{c|}{$\xi=0.5$}   & \multicolumn{1}{c|}{$\xi=1.5$}    & $\xi=2.5$   & \multicolumn{1}{c|}{$\xi=0.5$}   & \multicolumn{1}{c|}{$\xi=1.5$} & $\xi=2.5$ & \multicolumn{1}{c|}{$\xi=0.5$}   & \multicolumn{1}{c|}{$\xi=1.5$} & $\xi=2.5$   \\ \hline
\multicolumn{1}{|c|}{\multirow{6}{*}{$k=0$}} & \multirow{2}{*}{$a$}   & \multicolumn{1}{c|}{$0.0003$}   & \multicolumn{1}{c|}{$-0.0004$}   & $-0.0003$  & \multicolumn{1}{c|}{$0.0009$}   & \multicolumn{1}{c|}{-0.0004}          &    \multicolumn{1}{c|}{$-0.0049$}        & \multicolumn{1}{c|}{$0.0011$}   & \multicolumn{1}{c|}{$-0.0003$}          &     \multicolumn{1}{c|}{-0.0005}         \\
\multicolumn{1}{|c|}{}                            &                            & \multicolumn{1}{c|}{$(0.0058)$} & \multicolumn{1}{c|}{$(0.0028)$} & $(0.0017)$ & \multicolumn{1}{c|}{$(0.0069)$}   & \multicolumn{1}{c|}{(0.0037)}          &         \multicolumn{1}{c|}{$(0.0024)$}       & \multicolumn{1}{c|}{$(0.0074)$} & \multicolumn{1}{c|}{$(0.0042)$}          &     \multicolumn{1}{c|}{$(0.0028)$}          \\ \cline{2-11}
\multicolumn{1}{|c|}{}                            & \multirow{2}{*}{$\sigma^2$} & \multicolumn{1}{c|}{$0.004$}   & \multicolumn{1}{c|}{$0.006$}    & $0.005$   & \multicolumn{1}{c|}{$0.007$}   & \multicolumn{1}{c|}{$0.013$}          &     \multicolumn{1}{c|}{$0.011$}       & \multicolumn{1}{c|}{$0.008$}   & \multicolumn{1}{c|}{0.017}          &    \multicolumn{1}{c|}{$0.018$}           \\
\multicolumn{1}{|c|}{}                            &                             & \multicolumn{1}{c|}{$(0.015)$} & \multicolumn{1}{c|}{$(0.021)$}  & $(0.021)$ & \multicolumn{1}{c|}{$(0.016)$} & \multicolumn{1}{c|}{(0.026)}          &      \multicolumn{1}{c|}{$(0.029)$}       & \multicolumn{1}{c|}{$(0.016)$} & \multicolumn{1}{c|}{(0.028)}          &      \multicolumn{1}{c|}{$(0.032)$}            \\ \cline{2-11}
\multicolumn{1}{|c|}{}                            & \multirow{2}{*}{$\xi$}      & \multicolumn{1}{c|}{$0.0028$}   & \multicolumn{1}{c|}{$0.0036$}    & $0.0035$   & \multicolumn{1}{c|}{$0.0039$}   & \multicolumn{1}{c|}{0.0050}          &        \multicolumn{1}{c|}{$0.0051$}    & \multicolumn{1}{c|}{$0.0042$}   & \multicolumn{1}{c|}{$0.0057$}          &  \multicolumn{1}{c|}{$0.0057$}            \\
\multicolumn{1}{|c|}{}                            &                             & \multicolumn{1}{c|}{$(0.0030)$} & \multicolumn{1}{c|}{$(0.0043)$}  & $(0.0049)$ & \multicolumn{1}{c|}{$(0.0032)$} & \multicolumn{1}{c|}{(0.0047)}          &     \multicolumn{1}{c|}{$(0.0054)$}        & \multicolumn{1}{c|}{$(0.0033)$}   & \multicolumn{1}{c|}{(0.0050)}          &     \multicolumn{1}{c|}{$(0.0057)$}          \\ \hline
\multicolumn{1}{|c|}{\multirow{6}{*}{$k=1$}} & \multirow{2}{*}{$a$}   & \multicolumn{1}{c|}{$-0.0008$}  & \multicolumn{1}{c|}{$-0.0003$}   & $-0.0002$  & \multicolumn{1}{c|}{$-0.0005$}  & \multicolumn{1}{c|}{-0.0005}          &  \multicolumn{1}{c|}{-0.0004}           & \multicolumn{1}{c|}{$-0.0003$}  & \multicolumn{1}{c|}{$-0.0001$}          & $-0.0003$  \\
\multicolumn{1}{|c|}{}                            &                             & \multicolumn{1}{c|}{$(0.0022)$} & \multicolumn{1}{c|}{$(0.0017)$}  & $(0.0010)$ & \multicolumn{1}{c|}{$(0.0041)$} & \multicolumn{1}{c|}{(0.0028)}          &   \multicolumn{1}{c|}{(0.0020)}          & \multicolumn{1}{c|}{$(0.0046)$} & \multicolumn{1}{c|}{$(0.0032)$}          & $(0.0022)$ \\ \cline{2-11}
\multicolumn{1}{|c|}{}                            & \multirow{2}{*}{$\sigma^2$} & \multicolumn{1}{c|}{$-0.002$}  & \multicolumn{1}{c|}{$0.002$}    & $0.005$   & \multicolumn{1}{c|}{$-0.000$}  & \multicolumn{1}{c|}{-0.003}          &   \multicolumn{1}{c|}{-0.006}          & \multicolumn{1}{c|}{$0.001$}   & \multicolumn{1}{c|}{$0.001$}          & $0.014$   \\
\multicolumn{1}{|c|}{}                            &                             & \multicolumn{1}{c|}{$(0.009)$} & \multicolumn{1}{c|}{$(0.014)$}  & $(0.010)$ & \multicolumn{1}{c|}{$(0.010)$} & \multicolumn{1}{c|}{(0.018)}          &   \multicolumn{1}{c|}{(0.021)}          & \multicolumn{1}{c|}{$(0.010)$} & \multicolumn{1}{c|}{$(0.020)$}          & $(0.025)$ \\ \cline{2-11}
\multicolumn{1}{|c|}{}                            & \multirow{2}{*}{$\xi$}      & \multicolumn{1}{c|}{$0.0027$}   & \multicolumn{1}{c|}{$0.0030$}    & $0.0042$   & \multicolumn{1}{c|}{$0.0035$}   & \multicolumn{1}{c|}{0.0048}          &     \multicolumn{1}{c|}{0.0050}        & \multicolumn{1}{c|}{$0.0036$}   & \multicolumn{1}{c|}{$0.0007$}          & $0.0052$   \\
\multicolumn{1}{|c|}{}                            &                             & \multicolumn{1}{c|}{$(0.0030)$} & \multicolumn{1}{c|}{$(0.0043)$}  & $(0.0046)$ & \multicolumn{1}{c|}{$(0.0032)$} & \multicolumn{1}{c|}{(0.0048)}          &    \multicolumn{1}{c|}{(0.0054)}        & \multicolumn{1}{c|}{$(0.0033)$} & \multicolumn{1}{c|}{$(0.0050)$}          & $(0.0057)$ \\ \hline
\multicolumn{1}{|c|}{\multirow{6}{*}{$k=2$}} & \multirow{2}{*}{$a$}   & \multicolumn{1}{c|}{$-0.0007$}  & \multicolumn{1}{c|}{-0.0002}             &  \multicolumn{1}{c|}{-0.0000}             & \multicolumn{1}{c|}{$-0.0009$}  & \multicolumn{1}{c|}{-0.0004}          &     \multicolumn{1}{c|}{-0.0004}        & \multicolumn{1}{c|}{$-0.0006$}            & \multicolumn{1}{c|}{$0.0001$}          &     \multicolumn{1}{c|}{-0.0000}          \\
\multicolumn{1}{|c|}{}                            &                             & \multicolumn{1}{c|}{$(0.0022)$} & \multicolumn{1}{c|}{$(0.0014)$}             &   \multicolumn{1}{c|}{$(0.0010)$}           & \multicolumn{1}{c|}{$(0.0032)$} & \multicolumn{1}{|c|}{(0.0022)}                            &\multicolumn{1}{|c|}{(0.0017)}           & \multicolumn{1}{c|}{$(0.0036)$}            & \multicolumn{1}{c|}{$(0.0026)$}          &     \multicolumn{1}{c|}{$(0.0020)$}         \\ \cline{2-11}
\multicolumn{1}{|c|}{}                            & \multirow{2}{*}{$\sigma^2$} & \multicolumn{1}{c|}{$-0.0006$}  & \multicolumn{1}{c|}{0.0029}             &  \multicolumn{1}{c|}{0.0097}              & \multicolumn{1}{c|}{$0.0010$}   & \multicolumn{1}{c|}{0.0015}          &  \multicolumn{1}{c|}{0.0012}             & \multicolumn{1}{c|}{$-0.0001$}            & \multicolumn{1}{c|}{$-0.0003$}          &   \multicolumn{1}{c|}{$-0.0000$}            \\
\multicolumn{1}{|c|}{}                            &                             & \multicolumn{1}{c|}{$(0.007)$} & \multicolumn{1}{c|}{$(0.011)$}             & \multicolumn{1}{c|}{$(0.011)$}                & \multicolumn{1}{c|}{$(0.008)$} & \multicolumn{1}{|c|}{(0.014)}                            &    \multicolumn{1}{|c|}{(0.016)}        & \multicolumn{1}{c|}{(0.009)}            & \multicolumn{1}{c|}{$(0.016)$}          & \multicolumn{1}{c|}{$(0.021)$}               \\ \cline{2-11}
\multicolumn{1}{|c|}{}                            & \multirow{2}{*}{$\xi$}      & \multicolumn{1}{c|}{$0.0026$}   & \multicolumn{1}{c|}{0.0029}             &    \multicolumn{1}{c|}{0.0036}            & \multicolumn{1}{c|}{$0.0038$}   & \multicolumn{1}{c|}{$0.0043$}          & \multicolumn{1}{c|}{$0.0049$}           & \multicolumn{1}{c|}{$0.0035$}            & \multicolumn{1}{c|}{$-0.0008$}          &   \multicolumn{1}{c|}{$-0.0002$}           \\
\multicolumn{1}{|c|}{}                            &                             & \multicolumn{1}{c|}{$(0.0030)$} & \multicolumn{1}{c|}{$(0.0043)$}             &    \multicolumn{1}{c|}{$(0.0047)$}            & \multicolumn{1}{c|}{$(0.0033)$} & \multicolumn{1}{c|}{(0.0048)}          &     \multicolumn{1}{c|}{(0.0054)}       & \multicolumn{1}{c|}{$(0.0033)$}            & \multicolumn{1}{c|}{$(0.0051)$}          &       \multicolumn{1}{c|}{$(0.0059)$}       \\ \hline
\end{tabular}}
  \end{center}
\end{table}

\begin{figure}[h!]
\begin{center}
\begin{tabular}{ccc}
\includegraphics[width=4.5cm,height=5.2cm]{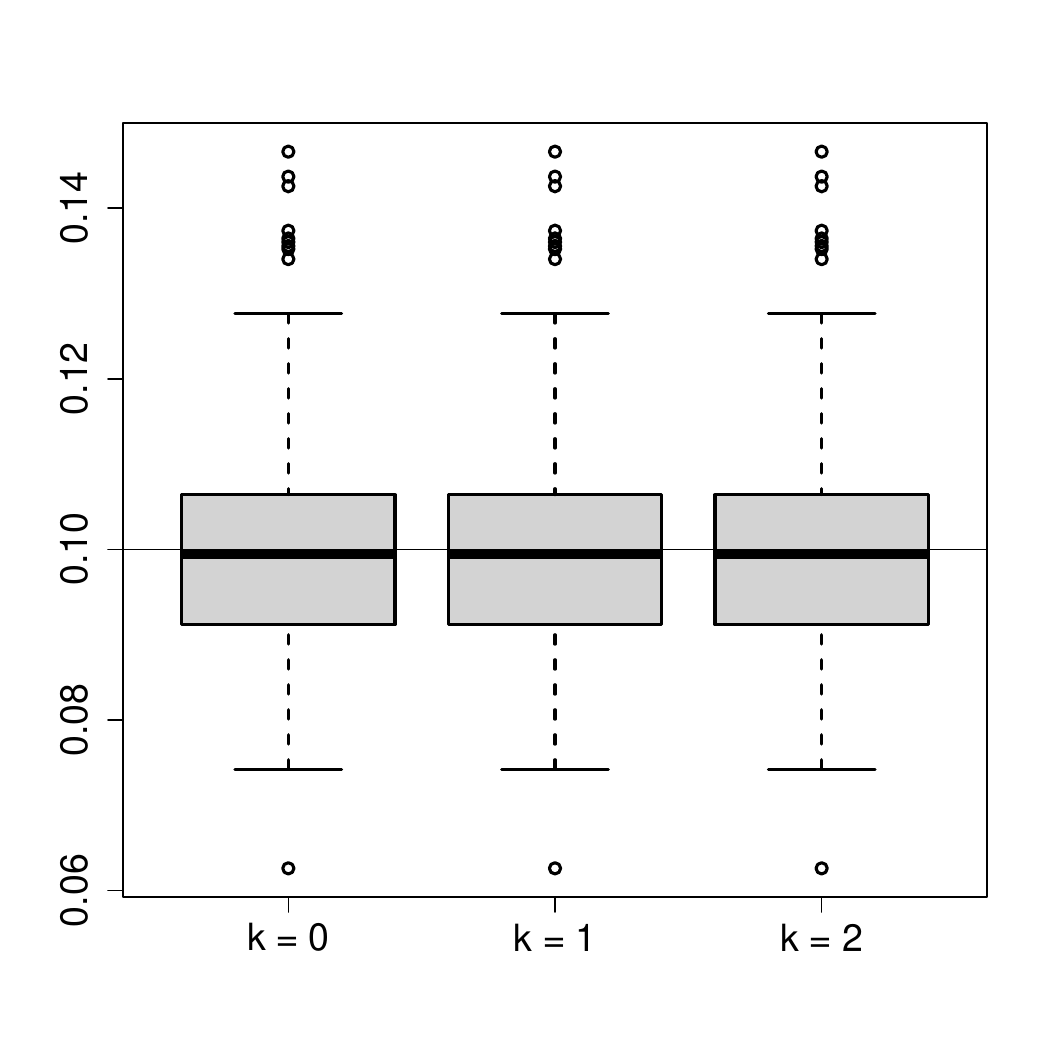}&\includegraphics[width=4.5cm,height=5.2cm]{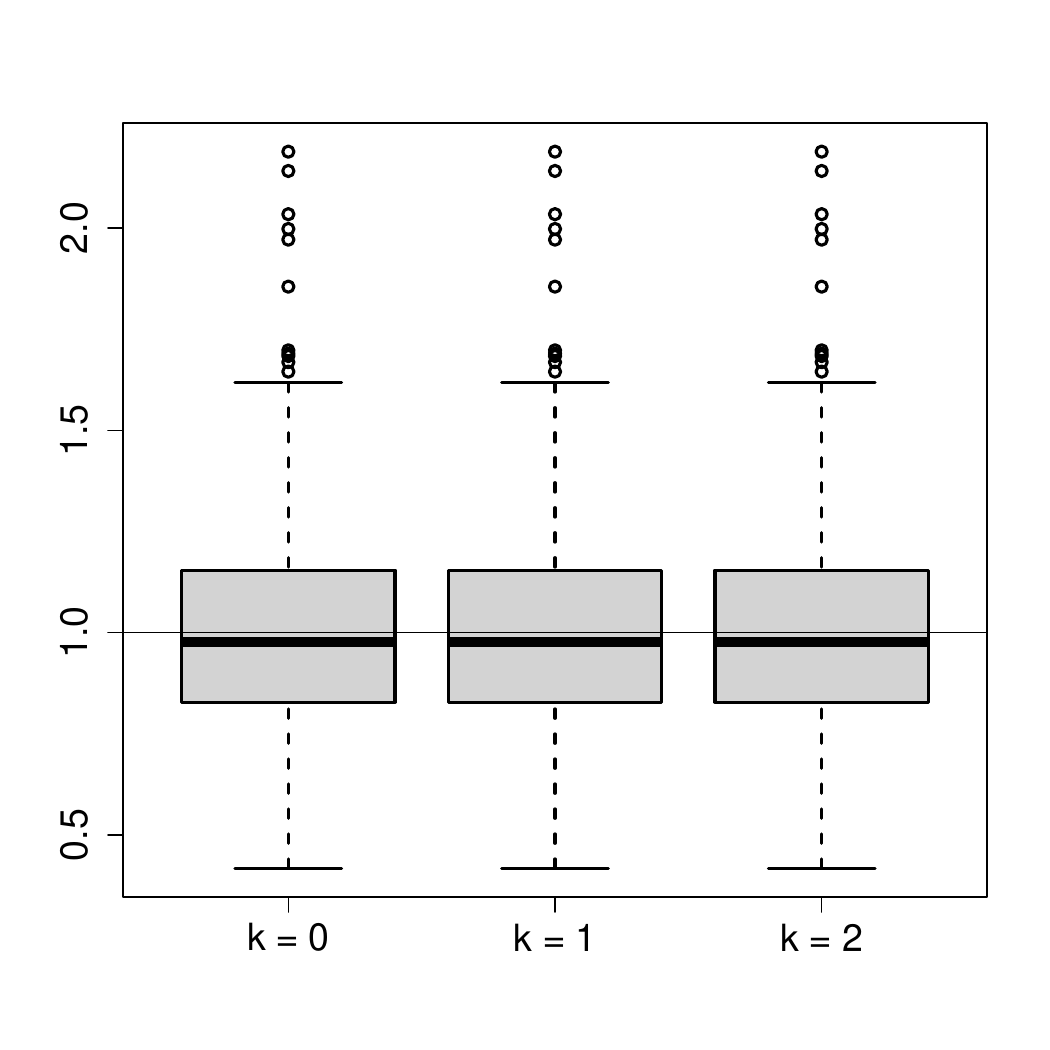}&\includegraphics[width=4.5cm,height=5.2cm]{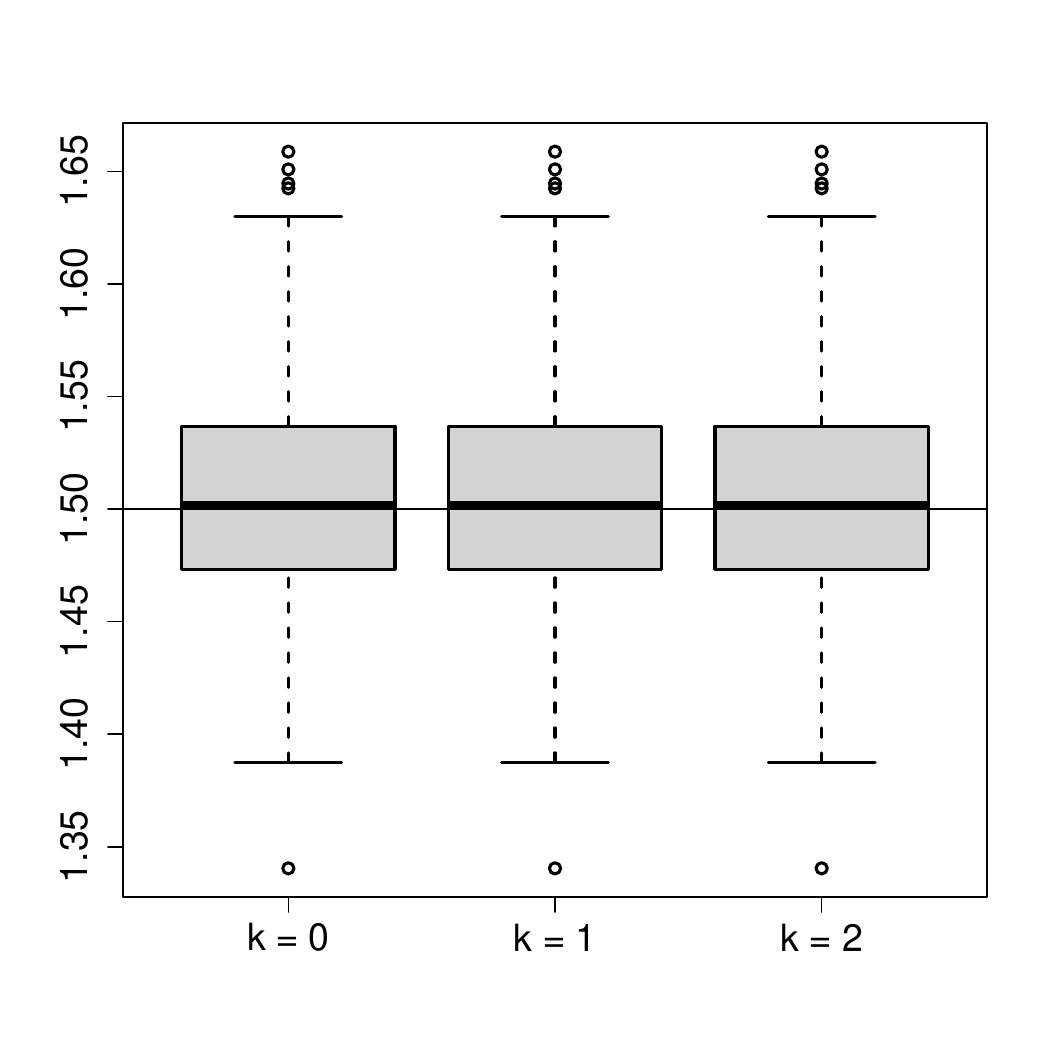}\\
\end{tabular}
\end{center}
\caption{Boxplots of the maximum likelihood  estimates of the scale dependence  parameter $a$ (left), variance parameter $\sigma^2$  (center) and smoothness parameter $\xi$ (right)
when estimating the covariance model  $\sigma^2{\cal GW}_{20a,\xi-{1/2},20,2,k}$ for $k=0,1,2$ when $a=0.1$, $\sigma^2=1$ and $\xi=1.5$.}
 \label{bxplots}
\end{figure}

\section{Real Data Illustration}
\label{sec:real}

We now consider a pedological dataset consisting of measurements of soil pH between 60 and 100 cm depth taken all over Madagascar island, {where each measurement is indexed by its easting and northing relative coordinates expressed in kilometers (i.e., $d=2$ throughout this section)}. In precision agriculture, soil pH is an important variable to assess the availability of micronutrients like nitrogen, phosphorus, and potassium to plants, and to decide whether it is necessary to neutralize soil acidity or alkalinity so as to ensure the most productive agricultural soils. The data can be downloaded using the R package  \emph{geodata}  \citep{geodata} and are documented in \cite{bokk}.

To reduce the computational burden of maximum likelihood estimation, we select a random  sample of $3,000$ locations from the original dataset. Then, following \cite{Li:Zhang:2011}, we detrend the data using splines to remove the {large-scale patterns} along the first and second coordinates.
\begin{figure}[h!]
\begin{center}
\begin{tabular}{ccc}
\includegraphics[width=5cm,height=5.8cm]{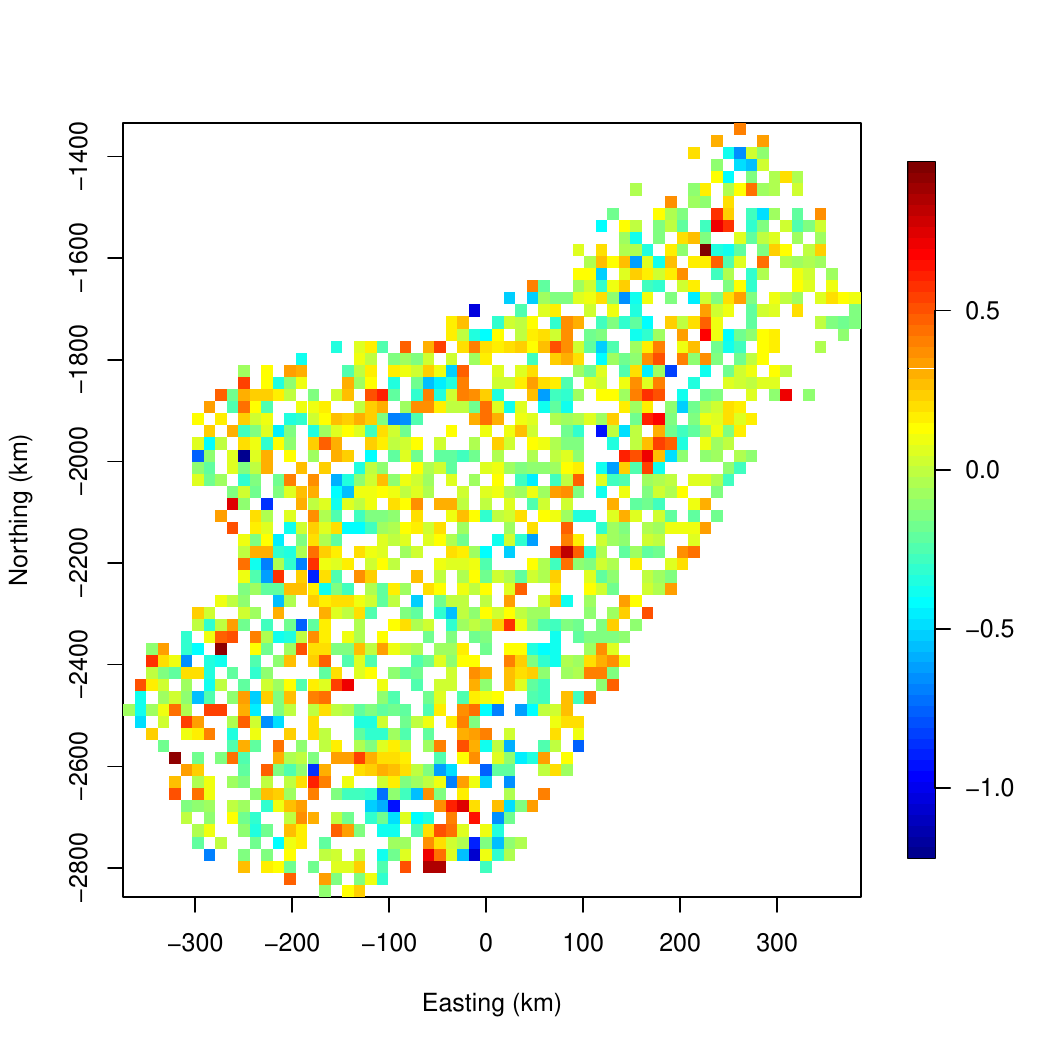}&\includegraphics[width=5cm,height=5.8cm]{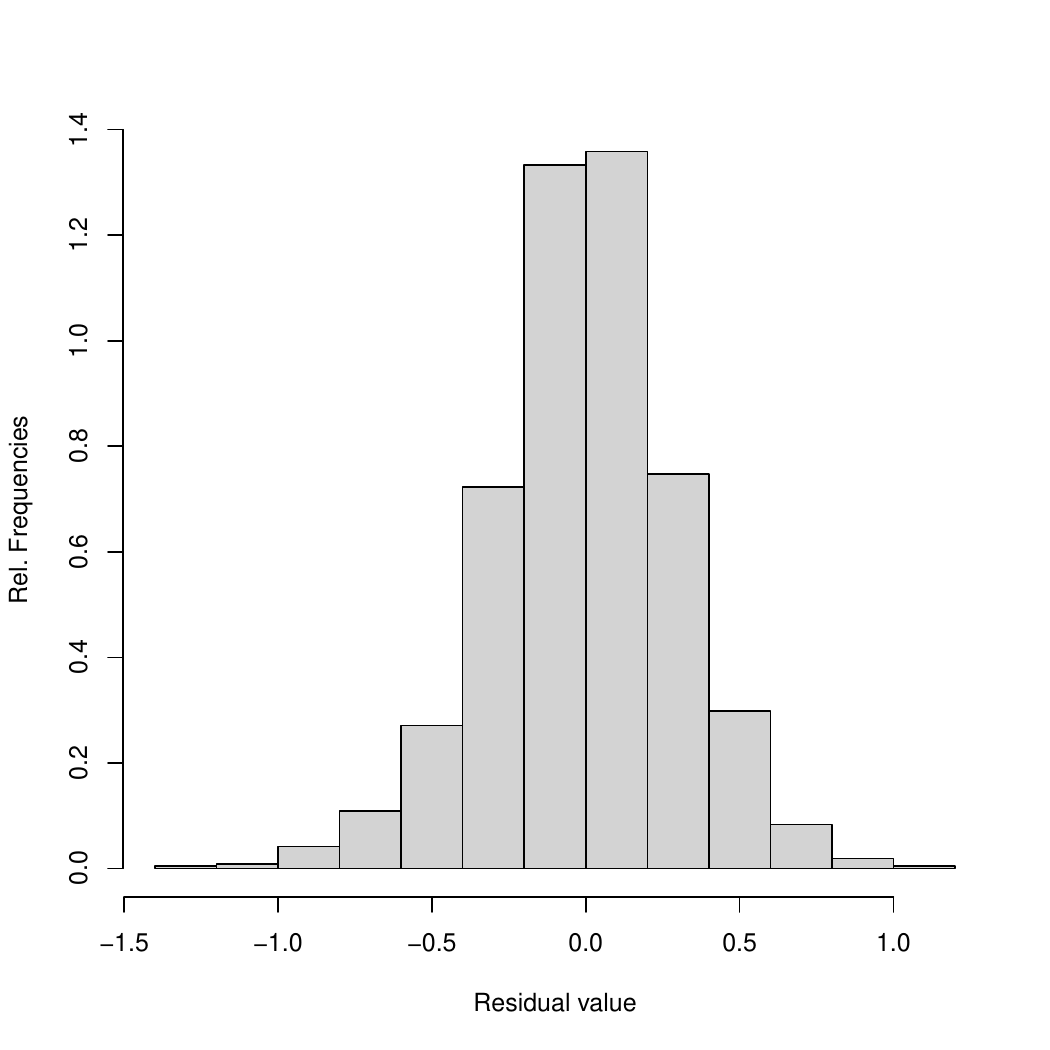}&\includegraphics[width=5cm,height=5.8cm]{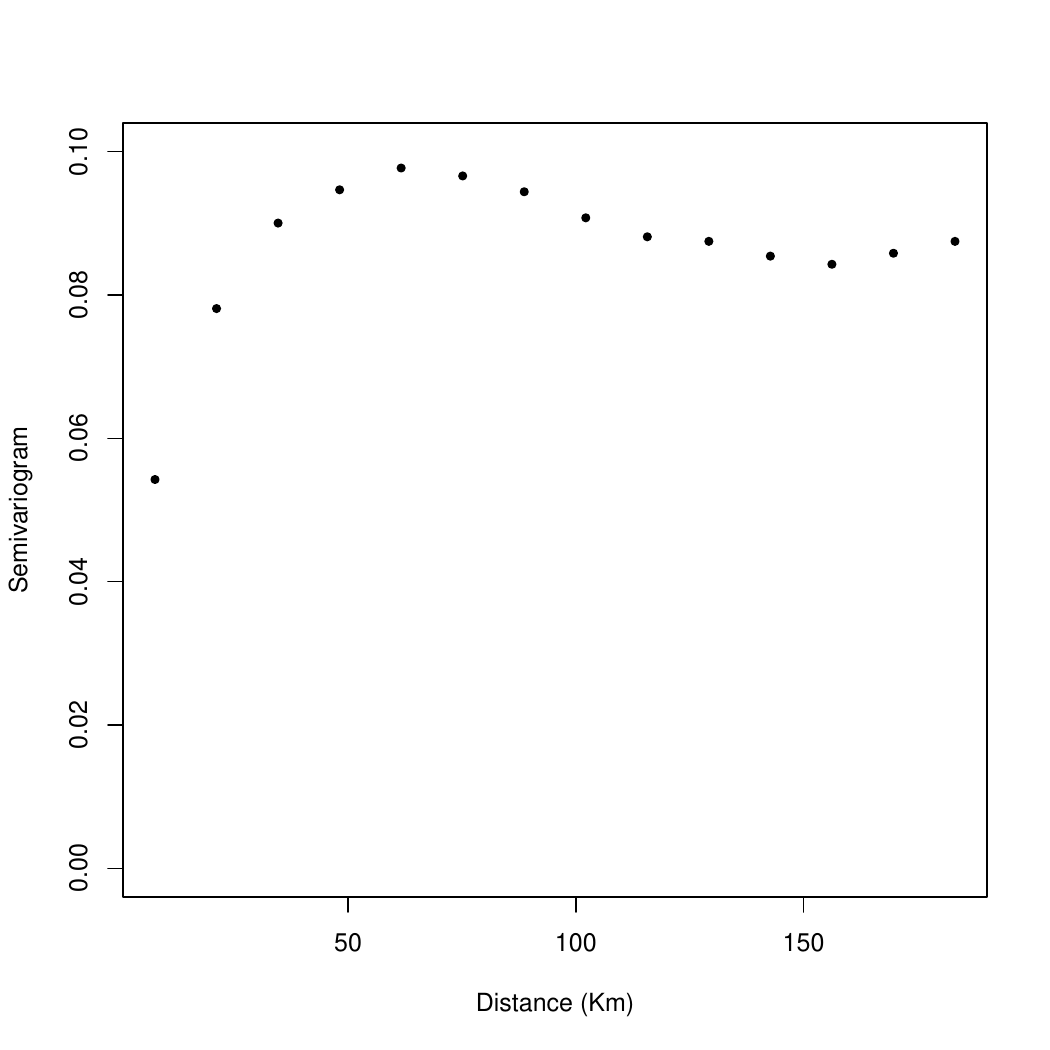}\\
\end{tabular}
\end{center}
\caption{ {From left to right: a color-coded spatial map of the residuals for the Madagascar soil pH data, the corresponding histogram of relative frequencies, and the associated empirical semivariogram.}}
 \label{apps}
\end{figure}

A map of the residuals, the associated histogram and the empirical semivariogram are depicted in Figure \ref{apps}.
It is apparent that a Gaussian random field  with a hole effect isotropic correlation function can be a suitable model in this case. {Indeed, a semivariogram attaining values greater than its sill implies that the covariance function attains negative values. This may reflect cyclic or pseudo-cyclic spatial correlation patterns produced by variations of the soil physico-chemical properties, due to geologic (soil types), geographic (relief), hydrographic (river network), climatic (chemical weathering and leaching), and anthropogenic (deforestation and grazing) factors acting at a scale of a few tens to a hundred kilometers.} As a consequence, we model the residuals as a realization of a zero-mean Gaussian random field in $\mathbb{R}^2$ and we specify the covariance function using the proposed reparameterized hole effect Generalized Wendland model ${\sigma^2 \cal GW}_{\nu a,\xi-{1/2},\nu,2,k}$ for $k=0,1,2,3,{4,5}$.

To evaluate the predictive performances of the different covariance models, we randomly choose 80\% of the spatial locations (2400 data) for estimation and we use the remaining 20\% (600 data) as a validation dataset for predictions.
Table \ref{taba222} reports  detailed statistics on both the estimation and prediction quality for the hole effect reparameterized Generalized Wendland model ${\sigma^2 \cal GW}_{\nu a,\xi-{1/2},\nu,2,k}$ and hole effect Mat\'ern model $\sigma^2{\cal M}_{a,\xi,2,k}$, obtained as a special limit case of ${\sigma^2 \cal GW}_{\nu a,\xi-{1/2},\nu,2,k}$ as $\nu $ tends to $+\infty$ (Section \ref{seccc5}), for $k=1, 2, 3, {4, 5}$. These are:
\begin{enumerate}
\item The maximum likelihood estimates of the model parameters ($\hat{\nu}, \hat{a}, \hat{\sigma}^2, \hat{\xi}$) and the associated standard errors.
\item The Akaike information criterion (AIC) for the estimated models.
\item The root mean squared error (RMSE) and mean absolute value (MAE) obtained when predicting by simple kriging the points included in the validation dataset using the estimated covariance models.
\end{enumerate}

It can be appreciated that the hole effect  reparameterized Generalized Wendland model ${\sigma^2 \cal GW}_{\nu a,\xi-{1/2},\nu,2,k}$ with $k=5$ achieves the lowest AIC and, for a fixed $k$, the AIC criteria always selects this model over the hole effect Mat\'ern model $\sigma^2{\cal M}_{a,\xi,2,k}$. Also, the covariance models with hole effect slightly outperform the covariance models without hole effect in terms of prediction performance. In particular, the predictions obtained using ${\sigma^2 \cal GW}_{\nu a,\xi-{1/2},\nu,2,k}$ with $k=3$ achieve the best RMSE and MAE. The estimated compact support of this best model is $\hat{ \nu} \hat{a}=206.6$ and the associated  covariance matrix has 93\% of zero values, which allows considerably {speeding} the computation of the kriging predictor using algorithms for sparse matrices.

Figure \ref{VVKK}, from left to right, compares the estimated and empirical semivariograms when considering the model ${\sigma^2 \cal GW}_{\nu a,\xi-{1/2},\nu,2,k}$ without and with hole effect ($k=0$ and $k=3$) (left part)
and the model $\sigma^2{\cal M}_{a,\xi,2,k}$ without and with hole effect ($k=0$ and $k=3$) (right part). It is apparent that both models better reproduce the empirical semivariogram when $k=3$, {with an advantage to the $\cal GW$ model, which attains a negative correlation of $-0.065$ (very close to what is observed experimentally), over the $\cal M$ model that only reaches a negative correlation of $-0.020$.} 

\begin{figure}[h!]
\begin{center}
\begin{tabular}{cc}
\includegraphics[width=6.5cm,height=7.2cm]{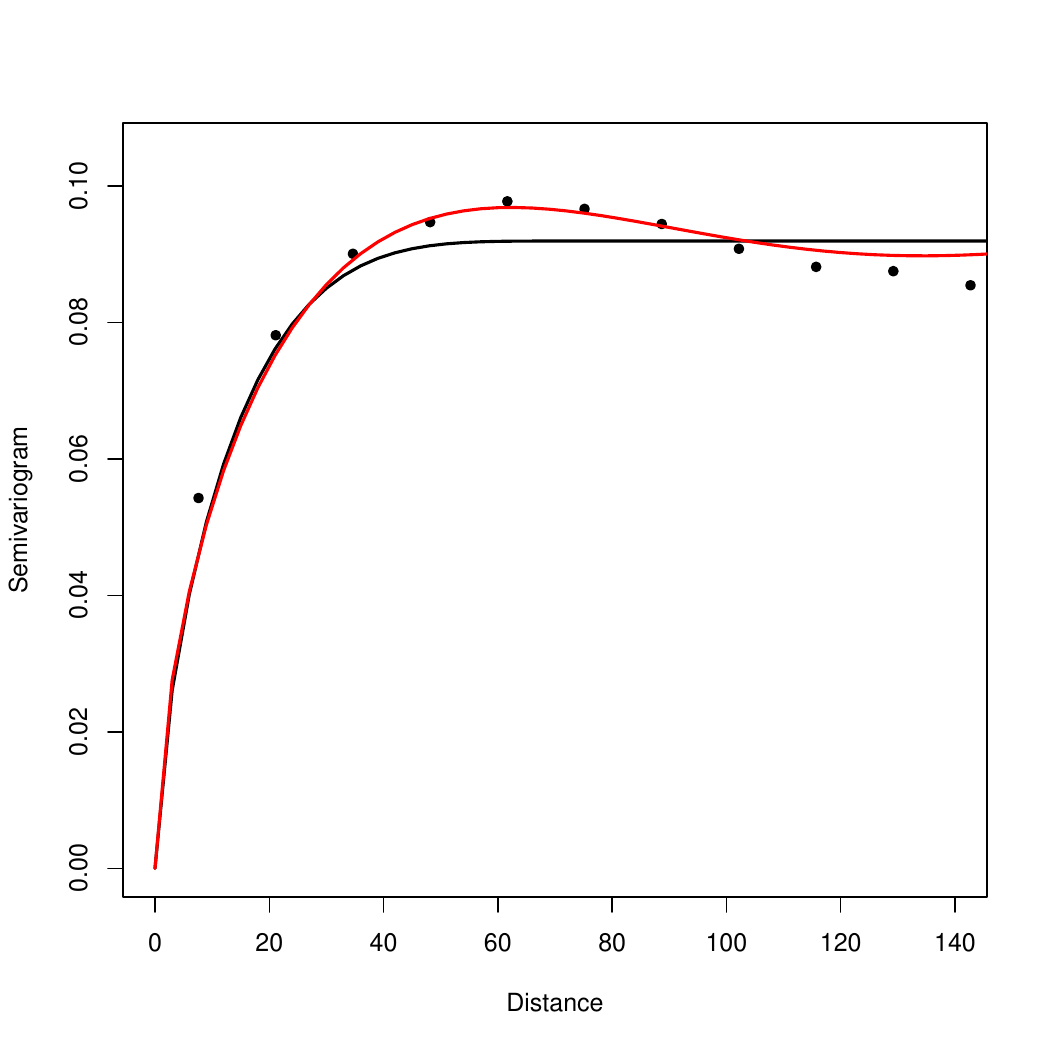}&\includegraphics[width=6.5cm,height=7.2cm]{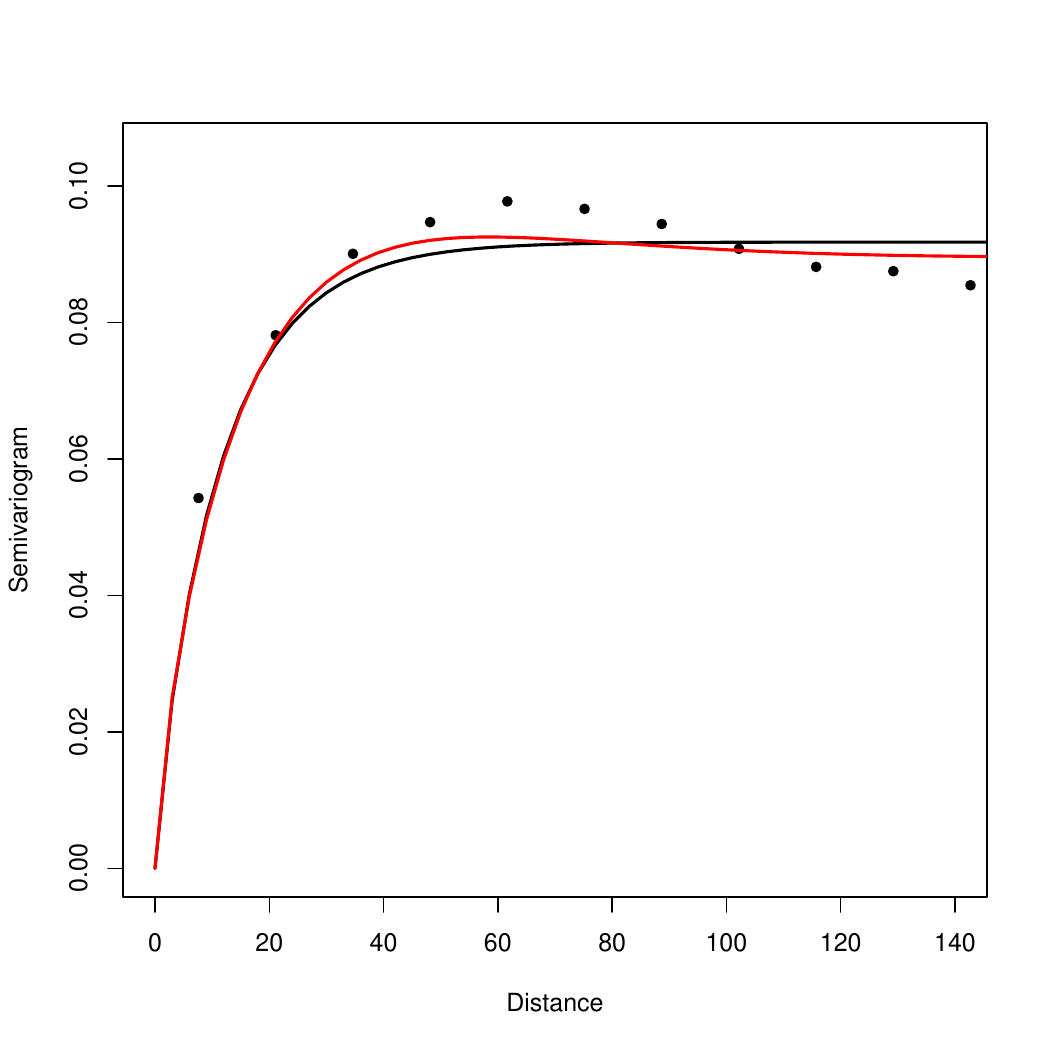}\\
\end{tabular}
\end{center}
\caption{Left: empirical (dots) and estimated (solid lines) semivariograms of the residuals using the hole effect reparameterized Generalized Wendland model ${\sigma^2 \cal GW}_{\nu a,\xi-{1/2},\nu,2,k}$ with $k=0$ (black line) and $k=3$ (red line). Right: the same comparison using the hole effect Mat\'ern model $\sigma^2{\cal M}_{a,\xi,2,k}$ with $k=0$ (black line) and $k=3$ (red line).
}
 \label{VVKK}
\end{figure}

\begin{table}[h!]
\caption{Maximum likelihood estimates, associated standard errors, AIC, and prediction measures (RMSE and MAE) for the hole effect reparameterized Generalized Wendland covariance model ${\sigma^2 \cal GW}_{\nu a,\xi-{1/2},\nu,2,k}$ and the hole effect Mat\'ern covariance model $\sigma^2{\cal M}_{a,\xi,2,k}$, when $k=0,1,2,3,{4,5}$.} 
\label{taba222}
\begin{center}
\scalebox{0.8}{
\begin{tabular}{c|c|c|c|c|c|c|c|}
\cline{2-8}
 & $\hat{\nu}$ & $\hat{a}$ & $\hat{\sigma}^2$ & $\hat{\xi}$ & AIC & RMSE & MAE \\ \hline
\multicolumn{1}{|c|}{\multirow{2}{*}{${\sigma^2 \cal GW}_{\nu a,\xi-{1/2},\nu,2,0}$}} & \multirow{2}{*}{\begin{tabular}[c]{@{}c@{}}$4.289$\\ $(2.541)$\end{tabular}} & \multirow{2}{*}{\begin{tabular}[c]{@{}c@{}}$17.065$\\ $(2.799)$\end{tabular}} & \multirow{2}{*}{\begin{tabular}[c]{@{}c@{}}$0.091$\\ $(0.003)$\end{tabular}} & \multirow{2}{*}{\begin{tabular}[c]{@{}c@{}}$0.336$\\ $(0.046)$\end{tabular}} & \multirow{2}{*}{$410.0$} & \multirow{2}{*}{$0.23750$} & \multirow{2}{*}{$0.17000$} \\
\multicolumn{1}{|c|}{} &  &  &  &  &  &  & \\ \hline
\multicolumn{1}{|c|}{\multirow{2}{*}{$\sigma^2{\cal GW}_{\nu a,\xi-{1/2},\nu,2,1}$}} & \multirow{2}{*}{\begin{tabular}[c]{@{}c@{}}$4.160$\\ $(1.007)$\end{tabular}} & \multirow{2}{*}{\begin{tabular}[c]{@{}c@{}}$29.239$\\ $(2.930)$\end{tabular}} & \multirow{2}{*}{\begin{tabular}[c]{@{}c@{}}$0.091$\\ $(0.003)$\end{tabular}} & \multirow{2}{*}{\begin{tabular}[c]{@{}c@{}}$0.306$\\ $(0.028)$\end{tabular}} & \multirow{2}{*}{$404.8$} & \multirow{2}{*}{$0.23679$} & \multirow{2}{*}{$0.16905$} \\
\multicolumn{1}{|c|}{} &  &  &  &  &  &  & \\ \hline
\multicolumn{1}{|c|}{\multirow{2}{*}{$\sigma^2{\cal GW}_{\nu a,\xi-{1/2},\nu,2,2}$}} & \multirow{2}{*}{\begin{tabular}[c]{@{}c@{}}$4.572$\\ $(0.596)$\end{tabular}} & \multirow{2}{*}{\begin{tabular}[c]{@{}c@{}}$38.053$\\ $(2.976)$\end{tabular}} & \multirow{2}{*}{\begin{tabular}[c]{@{}c@{}}$0.091$\\ $(0.003)$\end{tabular}} & \multirow{2}{*}{\begin{tabular}[c]{@{}c@{}}$0.300$\\ $(0.023)$\end{tabular}} & \multirow{2}{*}{$370.2$} & \multirow{2}{*}{$0.23648$} & \multirow{2}{*}{$0.16866$} \\
\multicolumn{1}{|c|}{} &  &  &  &  &  &  & \\ \hline
\multicolumn{1}{|c|}{\multirow{2}{*}{$\sigma^2{\cal GW}_{\nu a,\xi-{1/2},\nu,2,3}$}} & \multirow{2}{*}{\begin{tabular}[c]{@{}c@{}}$4.514$\\ $(0.132)$\end{tabular}} & \multirow{2}{*}{\begin{tabular}[c]{@{}c@{}}$45.780$\\ $(1.163)$\end{tabular}} & \multirow{2}{*}{\begin{tabular}[c]{@{}c@{}}$0.091$\\ $(0.003)$\end{tabular}} & \multirow{2}{*}{\begin{tabular}[c]{@{}c@{}}$0.295$\\ $(0.017)$\end{tabular}} & \multirow{2}{*}{$362.6$} & \multirow{2}{*}{$0.23630$} & \multirow{2}{*}{$0.16855$} \\
\multicolumn{1}{|c|}{} &  &  &  &  &  &  & \\ \hline
\multicolumn{1}{|c|}{\multirow{2}{*}{$\sigma^2{\cal GW}_{\nu a,\xi-{1/2},\nu,2,4}$}} & \multirow{2}{*}{\begin{tabular}[c]{@{}c@{}}$5.290$\\ $(0.153)$\end{tabular}} & \multirow{2}{*}{\begin{tabular}[c]{@{}c@{}}$50.271$\\ $(1.197)$\end{tabular}} & \multirow{2}{*}{\begin{tabular}[c]{@{}c@{}}$0.091$\\ $(0.003)$\end{tabular}} & \multirow{2}{*}{\begin{tabular}[c]{@{}c@{}}$0.302$\\ $(0.017)$\end{tabular}} & \multirow{2}{*}{$359.9$} & \multirow{2}{*}{$0.23639$} & \multirow{2}{*}{$0.16862$} \\
\multicolumn{1}{|c|}{} &  &  &  &  &  &  & \\ \hline
\multicolumn{1}{|c|}{\multirow{2}{*}{$\sigma^2{\cal GW}_{\nu a,\xi-{1/2},\nu,2,5}$}} & \multirow{2}{*}{\begin{tabular}[c]{@{}c@{}}$6.019$\\ $(0.118)$\end{tabular}} & \multirow{2}{*}{\begin{tabular}[c]{@{}c@{}}$54.832$\\ $(0.975)$\end{tabular}} & \multirow{2}{*}{\begin{tabular}[c]{@{}c@{}}$0.091$\\ $(0.003)$\end{tabular}} & \multirow{2}{*}{\begin{tabular}[c]{@{}c@{}}$0.304$\\ $(0.017)$\end{tabular}} & \multirow{2}{*}{$357.6$} & \multirow{2}{*}{$0.23638$} & \multirow{2}{*}{$0.16858$} \\
\multicolumn{1}{|c|}{} &  &  &  &  &  &  & \\ \hline \hline
\multicolumn{1}{|c|}{\multirow{2}{*}{$\sigma^2{\cal M}_{a,\xi,2,0}$}} & \multirow{2}{*}{\begin{tabular}[c]{@{}c@{}}$-$\\ $ $\end{tabular}} & \multirow{2}{*}{\begin{tabular}[c]{@{}c@{}}$13.094$\\ $(1.491)$\end{tabular}} & \multirow{2}{*}{\begin{tabular}[c]{@{}c@{}}$0.092$\\ $(0.003)$\end{tabular}} & \multirow{2}{*}{\begin{tabular}[c]{@{}c@{}}$0.411$\\ $(0.039)$\end{tabular}} & \multirow{2}{*}{$411.1$} & \multirow{2}{*}{$0.23760$} & \multirow{2}{*}{$0.17034$} \\
\multicolumn{1}{|c|}{} &  &  &  &  &  &  & \\ \hline
\multicolumn{1}{|c|}{\multirow{2}{*}{$\sigma^2{\cal M}_{a,\xi,2,1}$}} & \multirow{2}{*}{\begin{tabular}[c]{@{}c@{}}$-$\\ $ $\end{tabular}} & \multirow{2}{*}{\begin{tabular}[c]{@{}c@{}}$21.595$\\ $(1.811)$\end{tabular}} & \multirow{2}{*}{\begin{tabular}[c]{@{}c@{}}$0.090$\\ $(0.003)$\end{tabular}} & \multirow{2}{*}{\begin{tabular}[c]{@{}c@{}}$0.386$\\ $(0.030)$\end{tabular}} & \multirow{2}{*}{$388.9$} & \multirow{2}{*}{$0.23701$} & \multirow{2}{*}{$0.16956$} \\
\multicolumn{1}{|c|}{} &  &  &  &  &  &  & \\ \hline
\multicolumn{1}{|c|}{\multirow{2}{*}{$\sigma^2{\cal M}_{a,\xi,2,2}$}} & \multirow{2}{*}{\begin{tabular}[c]{@{}c@{}}$-$\\ $ $\end{tabular}} & \multirow{2}{*}{\begin{tabular}[c]{@{}c@{}}$28.281$\\ $(2.153)$\end{tabular}} & \multirow{2}{*}{\begin{tabular}[c]{@{}c@{}}$0.090$\\ $(0.003)$\end{tabular}} & \multirow{2}{*}{\begin{tabular}[c]{@{}c@{}}$0.376$\\ $(0.028)$\end{tabular}} & \multirow{2}{*}{$381.5$} & \multirow{2}{*}{$0.23682$} & \multirow{2}{*}{$0.16924$} \\
\multicolumn{1}{|c|}{} &  &  &  &  &  &  & \\ \hline
\multicolumn{1}{|c|}{\multirow{2}{*}{$\sigma^2{\cal M}_{a,\xi,2,3}$}} & \multirow{2}{*}{\begin{tabular}[c]{@{}c@{}}$-$\\ $ $\end{tabular}} & \multirow{2}{*}{\begin{tabular}[c]{@{}c@{}}$33.776$\\ $(2.454)$\end{tabular}} & \multirow{2}{*}{\begin{tabular}[c]{@{}c@{}}$0.090$\\ $(0.003)$\end{tabular}} & \multirow{2}{*}{\begin{tabular}[c]{@{}c@{}}$0.371$\\ $(0.027)$\end{tabular}} & \multirow{2}{*}{$378.2$} & \multirow{2}{*}{$0.23677$} & \multirow{2}{*}{$0.16913$} \\
\multicolumn{1}{|c|}{} &  &  &  &  &  &  & \\ \hline
\multicolumn{1}{|c|}{\multirow{2}{*}{$\sigma^2{\cal M}_{a,\xi,2,4}$}} & \multirow{2}{*}{\begin{tabular}[c]{@{}c@{}}$-$\\ $ $\end{tabular}} & \multirow{2}{*}{\begin{tabular}[c]{@{}c@{}}$38.512$\\ $(2.720)$\end{tabular}} & \multirow{2}{*}{\begin{tabular}[c]{@{}c@{}}$0.090$\\ $(0.003)$\end{tabular}} & \multirow{2}{*}{\begin{tabular}[c]{@{}c@{}}$0.369$\\ $(0.026)$\end{tabular}} & \multirow{2}{*}{$376.3$} & \multirow{2}{*}{$0.23676$} & \multirow{2}{*}{$0.16907$} \\
\multicolumn{1}{|c|}{} &  &  &  &  &  &  & \\ \hline
\multicolumn{1}{|c|}{\multirow{2}{*}{$\sigma^2{\cal M}_{a,\xi,2,5}$}} & \multirow{2}{*}{\begin{tabular}[c]{@{}c@{}}$-$\\ $ $\end{tabular}} & \multirow{2}{*}{\begin{tabular}[c]{@{}c@{}}$42.752$\\ $(2.961)$\end{tabular}} & \multirow{2}{*}{\begin{tabular}[c]{@{}c@{}}$0.0897 $\\ $(0.003)$\end{tabular}} & \multirow{2}{*}{\begin{tabular}[c]{@{}c@{}}$0.367$\\ $(0.026)$\end{tabular}} & \multirow{2}{*}{$375.1$} & \multirow{2}{*}{$0.23675$} & \multirow{2}{*}{$0.16904$} \\
\multicolumn{1}{|c|}{} &  &  &  &  &  &  & \\ \hline
\end{tabular}
}
\end{center}
\end{table}

\section{Concluding remarks}
\label{sec:concl}

This paper generalizes the Mat\'ern  and Generalized Wendland isotropic correlation models, allowing them to attain negative values. An additional positive integer parameter describes the amplitude of negative correlations. As a result, the proposed correlation models are very flexible since they allow {jointly parameterizing} smoothness, global or compact support, and hole effects.

The proposed generalizations depend on the evaluation of some special functions, which can be performed through efficient implementation of the Bessel and Gauss hypergeometric functions as in the R package \texttt{GeoModels}  \citep{Bevilacqua:2018aa}. However,
the computation greatly simplifies in some important special cases, which makes them attractive to practitioners.

Numerical evidences of the versatility of these hole effect models have been provided. In particular, the real data application showed how accounting for a hole effect can improve goodness of fit and prediction accuracies. In addition, under an  increasing domain scenario, the results of a simulation study suggest that the hole effect does not affect the estimation of the covariance parameters. A topic for future work is the estimation of the proposed models under fixed domain asymptotics, which implies the study of the equivalence of Gaussian measures for these models.
{
It is well known that the Matérn and Generalized Wendland isotropic correlation models are compatible in dimensions 
$d=1,2,3$ under specific conditions \citep{BFFP}. It would be interesting to verify whether this compatibility still holds when using the hole-effect generalizations proposed in this paper.}

Another interesting topic for future work is the comparison, from a statistical point of view, of the two different parameterizations (\ref{ppoi}) and (\ref{ppoi4}) of the Generalized Wendland model that include the Mat\'ern model as a special limit case.

\section{{Appendix}}
\label{suppmaterial}

The {Appendix} is outlined as follows. Section \ref{appendA} contains additional notation. Sections \ref{app:TBoperator} and \ref{app:montee} recall two operators (the turning bands and mont\'ee, respectively) that transform a covariance defined in a $d$-dimensional Euclidean space to another covariance defined in a space of lower or higher dimension. Section \ref{app:lemmas} gives two useful lemmas. Sections \ref{alternatives_Matern} and \ref{alternatives} provide alternative analytical expressions of the hole effect Mat\'ern and hole effect Generalized Wendland models, respectively. Finally, Section \ref{app:proofs} contains the proofs of the propositions stated in this paper.

\subsection{Additional Notation}
\label{appendA}

In what follows, $\mathbb{N}_{\geq \alpha}$ denotes the set of integers greater than, or equal to, $\alpha$, and $\mathbb{R}_{> \alpha}$ is the set of real numbers greater than $\alpha$; $\mathbb{N}_{\geq 0}$ is simply denoted as $\mathbb{N}$. Also, $\mathrm{i}$ is the imaginary unit.

The functions in Table \ref{tab:special function} will be used in this {Appendix}.
\begin{table}[h!]
    \caption{Special functions  \citep{exton, Olver}.}
    \label{tab:special function}
    \begin{tabular}{ c  l l }
        \hline
        Notation & Function name & Parameters \\
        \hline
        $\lceil \cdot \rceil$ & Ceil function\\
        $\Gamma(\cdot,\cdot)$ & Incomplete Gamma function &\\
        $I_{\nu}$ & Modified Bessel function of the first kind & $\nu \in \mathbb{R}$\\
        $P_{\nu}^\mu$ & Associated Legendre function of the first kind & $\mu, \nu \in \mathbb{R}$\\
        $Q_{\nu}^\mu$ & Associated Legendre function of the second kind & $\mu, \nu \in \mathbb{R}$\\
        $G_{p,q}^{m,n}({\boldsymbol{\beta};\boldsymbol{\gamma};\cdot})$ & Meijer-$G$ function & $p,q \in \mathbb{N}$, $\boldsymbol{\beta} \in \mathbb{R}^p$, $\boldsymbol{\gamma} \in \mathbb{R}^q$        \\
        \hline
    \end{tabular}
\end{table}

\subsection{The Turning Bands Operator}
\label{app:TBoperator}

The turning bands operator connects members of the class $\Phi_{d+2}$ with members of the class  $\Phi_{d}$ and can be summarized in the following lemma \citep[p. 21]{Matheron1972}.
\begin{lemma}
\label{TB2}
Let $d \in \mathbb{N}_{\geq 1}$ and $C \in \Phi_{d+2}$. The turning bands operator ${\mathfrak T}_{d+2,d}$ applied to $C$ is the mapping defined by
\begin{equation}
\label{turning03}
      {\mathfrak T}_{d+2,d}\left[ C\right](h) = \frac{h^{1-d}}{d} \frac{\partial [h^{d} C(h)]}{\partial h} , \quad h > 0,
\end{equation}
which belongs to $\Phi_{d}$.
\end{lemma}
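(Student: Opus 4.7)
My plan is to reduce membership of $\mathfrak{T}_{d+2,d}[C]$ in $\Phi_d$ to a single pointwise identity on the Schoenberg radial kernels. First I would rewrite the operator in the more transparent form
\begin{equation*}
  \mathfrak{T}_{d+2,d}[C](h) \;=\; C(h) + \frac{h}{d}\,C'(h), \qquad h>0,
\end{equation*}
obtained by expanding the derivative in (\ref{turning03}). By Schoenberg's theorem, every $C \in \Phi_{d+2}$ admits the integral representation $C(h) = \int_0^{+\infty} \Omega_{d+2}(rh)\,d\mu(r)$, where $\mu$ is a probability measure on $[0,+\infty)$ and $\Omega_m(t) := 2^{m/2-1}\,\Gamma(m/2)\, t^{1-m/2}\, J_{m/2-1}(t)$ is the normalized Schoenberg kernel in dimension $m$ (so $\Omega_m(0)=1$). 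Since $\mathfrak{T}_{d+2,d}$ acts linearly on $C$, it therefore suffices to prove the pointwise identity $\mathfrak{T}_{d+2,d}\bigl[\Omega_{d+2}(r\cdot)\bigr](h) = \Omega_d(rh)$ for every $r,h \geq 0$: interchanging the operator with the $\mu$-integral will then yield $\mathfrak{T}_{d+2,d}[C](h) = \int_0^{+\infty} \Omega_d(rh)\,d\mu(r)$, which lies in $\Phi_d$ by the converse direction of Schoenberg's theorem and automatically normalizes to $1$ as $h \to 0^+$.

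After substituting $t = rh$, this step reduces to the one-variable Bessel identity
\begin{equation*}
  \Omega_d(t) \;=\; \Omega_{d+2}(t) + \frac{t}{d}\,\Omega_{d+2}'(t), \qquad t > 0.
\end{equation*}
To establish it I would differentiate $\Omega_{d+2}$ using the classical formula $\frac{d}{dt}[t^{-\nu}J_\nu(t)] = -t^{-\nu}J_{\nu+1}(t)$ with $\nu = d/2$, obtaining $\Omega_{d+2}'(t) = -2^{d/2-1}\,d\,\Gamma(d/2)\,t^{-d/2}\,J_{d/2+1}(t)$. Substituting into the right-hand side and invoking the three-term recurrence $J_{d/2+1}(t) = \tfrac{d}{t}\,J_{d/2}(t) - J_{d/2-1}(t)$ causes the $J_{d/2}$ contributions to cancel exactly, leaving $2^{d/2-1}\,\Gamma(d/2)\,t^{1-d/2}\,J_{d/2-1}(t) = \Omega_d(t)$, as required.

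The main technical obstacle is the justification of the interchange of $\mathfrak{T}_{d+2,d}$ with $\int \cdot\, d\mu(r)$, since the bare derivative $r\,\Omega_{d+2}'(rh)$ need not be $\mu$-integrable. The resolution is that the factor $h$ in front of $C'(h)$ compensates exactly: the combined integrand equals $\Omega_{d+2}(rh) + \tfrac{rh}{d}\,\Omega_{d+2}'(rh)$, and both $\Omega_{d+2}$ and the map $z \mapsto z\,\Omega_{d+2}'(z)$ are globally bounded on $[0,+\infty)$ (the latter because $z\,\Omega_{d+2}'(z) = O(z^2)$ near $0$ and $O(z^{(1-d)/2})$ at infinity, from standard Bessel asymptotics, for every $d \geq 1$). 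On any compact $[a,b] \subset (0,+\infty)$ this produces a bound on the $h$-derivative of $h^d\,\Omega_{d+2}(rh)$ uniform in $r$, so dominated convergence legitimizes the exchange and the resulting identity $\mathfrak{T}_{d+2,d}[C](h) = \int_0^{+\infty} \Omega_d(rh)\,d\mu(r)$ extends continuously to $h=0$ with value $\Omega_d(0)=1$, completing the proof that $\mathfrak{T}_{d+2,d}[C] \in \Phi_d$.
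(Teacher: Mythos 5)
Your proof is correct. Note that the paper itself offers no proof of this lemma: it is quoted directly from Matheron (1972, p.~21), so there is no internal argument to compare against. Your self-contained route is the classical one: write $C(h)=\int_0^{+\infty}\Omega_{d+2}(rh)\,{\rm d}\mu(r)$ via Schoenberg's theorem, observe that $\mathfrak{T}_{d+2,d}[C]=C+\tfrac{h}{d}C'$, and reduce everything to the kernel identity $\Omega_d(t)=\Omega_{d+2}(t)+\tfrac{t}{d}\Omega_{d+2}'(t)$, which follows from ${\rm d}[t^{-\nu}J_\nu(t)]/{\rm d}t=-t^{-\nu}J_{\nu+1}(t)$ and the three-term recurrence; I checked the constants and the cancellation, and they are right. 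The point you flag as the main obstacle — justifying differentiation under the integral — is handled correctly: $z\mapsto z\,\Omega_{d+2}'(z)$ is $O(z^2)$ at the origin and $O(z^{(1-d)/2})$ at infinity, hence globally bounded for $d\geq 1$, which gives a dominating constant for $r\,\Omega_{d+2}'(rh)$ uniformly over $h$ in compacts of $(0,+\infty)$ (and, as a byproduct, establishes that $C$ is differentiable there, which the statement of the lemma implicitly assumes). A further advantage of your route over a spectral-density argument along the lines of the Fourier--Hankel pair (\ref{fourier1})--(\ref{fourier2}) is that the Schoenberg mixture representation requires no integrability of $C(\|\cdot\|_{d+2})$, so the proof covers all of $\Phi_{d+2}$, not just the absolutely integrable members.
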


It is well-known that the turning bands operator preserves the local behavior of the correlation function at the  origin and allows {attaining} negative value at same time \citep{gneiting2002compactly}. This implies that a natural approach to obtain a Mat\'ern or Generalized Wendland model with hole effect is an application, or more generally a recursive application, of the turning bands operator, which will be done in the next sections of this {Appendix}.

With this goal in mind, the following lemma provides a general expression for the correlation model obtained by applying $k$ times the turning bands operator and is of independent interest.

\begin{lemma}
\label{TB2k}
Let $k \in \mathbb{N}$, $d \in \mathbb{N}_{\geq 1}$ and $C \in \Phi_{d+2k}$. A recursive application of the turning bands operator applied to $C$ gives the following mapping, which belongs to $\Phi_d$:
\begin{equation}
\label{explicitTB}
\begin{split}
{\mathfrak T}_{d+2k,d}\left[C\right](h) :&= {\mathfrak T}_{d+2,d} \circ \ldots \circ {\mathfrak T}_{d+2(k-1),d+2(k-2)} \circ {\mathfrak T}_{d+2k,d+2(k-1)} \left[C\right](h)\\
&= \sum_{q=0}^k \sum_{r=0}^{\max\{0,q-1\}} \frac{(-1)^r (k-q+1)_q (q)_r (q-r)_r \, h^{q-r} \, C^{(q-r)}(h)}{2^{q+r} q! \, r! (\frac{d}{2})_q}, \quad h>0,
\end{split}
\end{equation}
where ${\mathfrak T}_{d+2,d} \circ \ldots \circ {\mathfrak T}_{d+2(k-1),d+2(k-2)} \circ {\mathfrak T}_{d+2k,d+2(k-1)} \left[C\right]$ is the $k$-fold composition of $C$ recursively by

 $${\mathfrak T}_{d+2k,d+2(k-1)}, {\mathfrak T}_{d+2(k-1),d+2(k-2)} \ldots,{\mathfrak T}_{d+2,d}.$$ Moreover, if $C$ has a $(d+2k)$-radial density $\widehat{C}$, then ${\mathfrak T}_{d+2k,d}\left[C\right]$ has a $d$-radial density given by
\begin{equation}
\label{tb2kdensity}
\widehat{{\mathfrak T}_{d+2k,d}\left[C\right]}(u) := \frac{\pi^{k}}{(\frac{d}{2})_k} \, u^{2k} \, \widehat{C}(u), \quad u \geq 0.
\end{equation}
\end{lemma}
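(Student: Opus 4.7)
\emph{Plan.} The lemma has two distinct assertions: the explicit expansion (\ref{explicitTB}) of ${\mathfrak T}_{d+2k,d}[C](h)$ as a linear combination of $h^{m}C^{(m)}(h)$, and the spectral identity (\ref{tb2kdensity}). The membership ${\mathfrak T}_{d+2k,d}[C]\in\Phi_{d}$ is for free by $k$ applications of Lemma \ref{TB2}. I would dispatch the spectral identity first, since it is essentially mechanical, and then tackle the expansion by induction on $k$.

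\emph{Step 1 (spectral density).} Starting from the inverse Fourier--Hankel representation (\ref{fourier1}) applied in dimension $d+2$, I would write $h^{d}C(h)=(2\pi)^{(d+2)/2}h^{d/2}\int_{0}^{\infty}u^{(d+2)/2}J_{d/2}(uh)\widehat{C}_{d+2}(u)\,\mathrm{d}u$ and differentiate under the integral via the standard Bessel identity $\frac{\mathrm{d}}{\mathrm{d}x}[x^{\nu}J_{\nu}(x)]=x^{\nu}J_{\nu-1}(x)$. Dividing by $d\,h^{d-1}$ as in (\ref{turning03}) and comparing the result with (\ref{fourier1}) in dimension $d$ gives the one-step formula $\widehat{{\mathfrak T}_{d+2,d}[C]}(u)=\tfrac{2\pi u^{2}}{d}\widehat{C}_{d+2}(u)$. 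Iterating and collapsing the product $\prod_{j=0}^{k-1}\tfrac{2\pi u^{2}}{d+2j}=\tfrac{\pi^{k}u^{2k}}{(d/2)_{k}}$ immediately yields (\ref{tb2kdensity}).

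\emph{Step 2 (explicit expansion).} I would induct on $k$, with the trivial base $k=0$ (where only $q=r=0$ survives on the right-hand side of (\ref{explicitTB}), giving $C(h)$). For the inductive step, I would use the factorization ${\mathfrak T}_{d+2(k+1),d}={\mathfrak T}_{d+2,d}\circ{\mathfrak T}_{d+2(k+1),d+2}$: by the induction hypothesis (with $d$ shifted to $d+2$), the inner operator already has the claimed form, and the outer operator ${\mathfrak T}_{d+2,d}=I+\tfrac{h}{d}\partial_{h}$ acts on a monomial piece as
\[
\Bigl(I+\tfrac{h}{d}\partial_{h}\Bigr)\bigl[h^{m}C^{(m)}(h)\bigr]=\tfrac{d+m}{d}h^{m}C^{(m)}(h)+\tfrac{1}{d}h^{m+1}C^{(m+1)}(h).
\]
After reindexing and grouping by the total power $m=q-r$, the induction reduces to the scalar recursion
\[
B_{k+1,m,d}=\tfrac{d+m}{d}\,B_{k,m,d+2}+\tfrac{1}{d}\,B_{k,m-1,d+2},
\]
where $B_{k,m,d}$ denotes the coefficient of $h^{m}C^{(m)}(h)$ predicted by (\ref{explicitTB}) under $r=q-m$.

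\emph{Main obstacle.} The one genuine piece of computation is the verification of this Pochhammer recursion. Using $(d/2)_{q}=\tfrac{d}{2}(d/2+1)_{q-1}$ to bring all three sums over a common denominator, the identity becomes a finite polynomial identity in $d$ that I would handle either by matching like powers of $1/d$ term by term, or by recognizing both sides as the same truncating $_{2}F_{1}$-type sum evaluated in two different ways (after an Euler or Pfaff transformation of the inner series). A slicker alternative would be to bypass the induction altogether and derive (\ref{explicitTB}) by inverting (\ref{tb2kdensity}) term by term through the Fourier--Hankel transform of $h^{m}C^{(m)}(h)$, but the combinatorial bookkeeping of the associated Rodrigues-type expansion is of comparable weight.
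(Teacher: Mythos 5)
Your proposal is correct in structure but takes a genuinely different route from the paper. For the spectral identity you derive the one-step relation $\widehat{{\mathfrak T}_{d+2,d}[C]}(u)=\tfrac{2\pi u^{2}}{d}\widehat{C}_{d+2}(u)$ directly from the Hankel representation and the recurrence $\tfrac{\mathrm{d}}{\mathrm{d}x}[x^{\nu}J_{\nu}(x)]=x^{\nu}J_{\nu-1}(x)$, then telescope; the paper simply cites Matheron's formula 5.3, so your version is self-contained and the product collapse $\prod_{j=0}^{k-1}\tfrac{2\pi u^{2}}{d+2j}=\pi^{k}u^{2k}/(\tfrac{d}{2})_{k}$ is right. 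For the expansion (\ref{explicitTB}) the paper does \emph{not} induct: it starts from Matheron's inverse (descente-type) integral $C(h)=\mathrm{const}\cdot h^{2-d-2k}\int_{0}^{h}u^{d-1}(h^{2}-u^{2})^{k-1}{\mathfrak T}_{d+2k,d}[C](u)\,\mathrm{d}u$, substitutes $x=h^{2}$, and differentiates $k$ times via Leibniz-type formulas, which produces the double sum in one shot. Your induction via ${\mathfrak T}_{d+2,d}=I+\tfrac{h}{d}\partial_{h}$ is more elementary and avoids importing Matheron's inversion formula, at the price of having to verify the coefficient recursion $B_{k+1,m,d}=\tfrac{d+m}{d}B_{k,m,d+2}+\tfrac{1}{d}B_{k,m-1,d+2}$ for the sums $B_{k,m,d}=\sum_{q}(-1)^{q-m}(k-q+1)_{q}(q)_{q-m}(m)_{q-m}/\bigl(2^{2q-m}q!\,(q-m)!\,(\tfrac{d}{2})_{q}\bigr)$. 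Your reduction to that recursion is correct (I checked it reproduces the paper's $k=2$ coefficients, e.g.\ $B_{2,1,d}=\tfrac{2d+3}{d(d+2)}$), but the recursion itself is asserted rather than proved: since each side is a sum over $q$, ``matching powers of $1/d$'' still leaves a family of Pochhammer identities to establish, and that verification is the actual content of the inductive step. It does hold, and a hypergeometric (contiguous-relation) argument of the kind you sketch will close it, but as written this is the one item still owed; the paper's direct differentiation sidesteps it entirely by never splitting the double sum into fixed-$m$ slices.
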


\begin{proof}[Proof of Lemma \ref{TB2k}]
The fact that ${\mathfrak T}_{d+2k,d}\left[ C\right]$ belongs to $\Phi_d$ is a consequence of Lemma \ref{TB2}. Assume that $k$ is non-zero. To get an analytical expression of this mapping, we invoke formula 5.4'' of \cite{Matheron1972} to write
\begin{equation*}
  C(h) = \frac{2 \Gamma(\frac{d}{2}+k)}{ \Gamma(k) \Gamma(\frac{d}{2})} h^{2-d-2k} \int_0^{h} u^{d-1} (h^2-u^2)^{k-1} {\mathfrak T}_{d+2k,d}\left[C\right](u) {\rm d}u, \quad h > 0.
\end{equation*}
Equivalently,
\begin{equation*}
  x^{{d/2}+k-1} C(\sqrt{x}) = \frac{2 \Gamma(\frac{d}{2}+k)}{ \Gamma(k) \Gamma(\frac{d}{2})} \int_0^{\sqrt{x}} u^{d-1} (x-u^2)^{k-1} {\mathfrak T}_{d+2k,d}\left[C\right](u) {\rm d}u, \quad h > 0.
\end{equation*}
Differentiating $k$ times this identity and using \cite[0.410, 0.42 and 0.433.1]{grad} gives (\ref{explicitTB}), which remains valid for $k=0$. Concerning the spectral densities, (\ref{tb2kdensity}) stems from formula 5.3 of \cite{Matheron1972}.
\end{proof}

\subsection{The Mont\'ee and Descente Operators}
\label{app:montee}

Let $p \in \mathbb{N}$ and $C \in \Phi_{d+p}$ such that $C(\| \cdot \|_{d+p})$ is absolutely integrable in $\mathbb{R}^{d+p}$. Then, it possesses a $(d+p)$-radial spectral density $\widehat{C}_{d+p}$. Its \emph{mont\'ee} (\emph{upgrading}) of order $p$ is the member of $\Phi_{d}$ (denoted as ${\mathfrak M}_{d+p,d}[C]$) with $d$-radial spectral density $f_{d+p}$. This translates into the integral relation  \citep[I.4.18]{matheron1965}
\begin{equation*}
      {\mathfrak M}_{d+p,d}[C](h) = \frac{2\pi^{{p/2}}}{\Gamma(\frac{p}{2})} \int_h^{+\infty} u (u^2-h^2)^{{p/2}-1} C(u) {\rm d}u, \quad h \geq 0.
\end{equation*}
This equation can be taken as the definition of a mont\'ee of fractional order when $p$ is not an integer.

The descente operator is defined as the reciprocal of the mont\'ee. Let $C$ belong to $\Phi_{d-p}$ with $(d-p)$-radial spectral density $\widehat{C}_{d-p}$ such that $\widehat{C}_{d-p}(\| \cdot \|_{d})$ is absolutely integrable in $\mathbb{R}^{d}$. Its \emph{descente} (\emph{downgrading}) of order $p$ is the member of $\Phi_{d}$ (denoted as ${\mathfrak M}_{d-p,d}[C]$) with $d$-radial spectral density $f_{d-p}$. If $p$ is not an integer, one can define the descente of fractional order $p$ by a descente of order $\lceil p \rceil$ followed by a mont\'ee of order $\lceil p \rceil - p$ \citep{matheron1965}.

\subsection{Other Useful Lemmas}
\label{app:lemmas}

\begin{lemma}[Dini's second theorem]
\label{Dini}
Let $\{ f_n : n \in \mathbb{N} \}$ be a sequence of real-valued non-increasing functions on $[0,+\infty)$ that converges pointwise to a continuous function $f$. Then, the convergence is uniform on $[0,+\infty)$.
\end{lemma}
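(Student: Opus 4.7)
The plan is to adapt the standard Dini--Pólya monotonicity argument to the unbounded domain $[0,+\infty)$, exploiting the sandwich provided by non-increasingness of each $f_n$. First I would observe that $f$, being the pointwise limit of non-increasing functions, is itself non-increasing on $[0,+\infty)$. Combined with continuity, this guarantees that the limit $L:=\lim_{h\to\infty} f(h)$ exists in $[-\infty,f(0)]$. In the setting where this lemma is applied (Proposition \ref{wend2mat}, where the $f_n$ and $f$ are correlation functions, hence bounded in modulus by $1$ and vanishing at infinity), $L$ is finite and equals $\lim_{h\to\infty} f_n(h)=0$ for every $n$; I would record this as a working hypothesis to carry through the tail estimate below.

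Next, given $\varepsilon>0$, I would use the continuity of the non-increasing function $f$ together with the existence of the finite limit $L$ at infinity to select a finite partition $0=h_0<h_1<\cdots<h_K$ with
\[
0\le f(h_i)-f(h_{i+1})<\tfrac{\varepsilon}{3}\quad(i=0,\ldots,K-1),\qquad 0\le f(h_K)-L<\tfrac{\varepsilon}{3}.
\]
By pointwise convergence at the finitely many points $h_0,\ldots,h_K$, there exists $N\in\mathbb{N}$ such that $|f_n(h_i)-f(h_i)|<\varepsilon/3$ for all $n\ge N$ and all $i$.

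For $h$ in an interval $[h_i,h_{i+1}]$ with $i<K$, the monotonicity of both $f_n$ and $f$ yields the sandwich
\[
f_n(h_{i+1})-f(h_i)\le f_n(h)-f(h)\le f_n(h_i)-f(h_{i+1}),
\]
and each bound is a telescoping sum of at most three terms bounded by $\varepsilon/3$ (one from pointwise convergence at a partition point, one from the partition spacing of $f$). Hence $|f_n(h)-f(h)|<\varepsilon$ uniformly in $i$ and in $h\in[h_i,h_{i+1}]$, for $n\ge N$.

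The main obstacle is the tail $h\ge h_K$, because a priori the sequence $\{f_n\}$ might fail to be uniformly controlled at infinity; this is the step where some form of joint control at $+\infty$ is required. Using the fact that $f(h)\in[L,f(h_K)]$ and $f_n(h)\le f_n(h_K)$ for $h\ge h_K$, monotonicity gives $f_n(h)-f(h)\le f_n(h_K)-L\le |f_n(h_K)-f(h_K)|+(f(h_K)-L)<2\varepsilon/3$. The symmetric bound $f(h)-f_n(h)<\varepsilon$ follows in the present setting from $\lim_{h\to\infty} f_n(h)=L$ (either known from the hypothesis, as in the paper's application, or derived from an additional asymptotic-value hypothesis by choosing a further point $h_{K+1}>h_K$ where $f(h_{K+1})<L+\varepsilon/6$ and invoking pointwise convergence at $h_{K+1}$). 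Combining the estimates on the finitely many intervals and on the tail yields $\sup_{h\ge 0}|f_n(h)-f(h)|<\varepsilon$ for $n\ge N$, which is the desired uniform convergence.
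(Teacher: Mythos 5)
Your argument is correct, and it is in fact more careful than the paper, which disposes of this lemma with a bare citation to P\'olya--Szeg\H{o}; the classical Dini--P\'olya theorem cited there concerns monotone functions on a \emph{compact} interval. You have correctly identified that the lemma as literally stated is false on the non-compact domain $[0,+\infty)$: for instance, $f_n(h)=\exp(-h/n)$ is non-increasing for each $n$ and converges pointwise to the continuous function $f\equiv 1$, yet $\sup_{h\ge 0}\lvert f_n(h)-f(h)\rvert =1$ for every $n$. Your finite-partition sandwich argument on $[0,h_K]$ is exactly the standard proof, and your diagnosis that the tail $h\ge h_K$ is the only problematic step is right: there, bounding $f(h)-f_n(h)$ requires knowing that $\lim_{u\to\infty}f_n(u)$ converges to $\lim_{u\to\infty}f(u)$ (equivalently, that the pointwise convergence extends to the one-point compactification $[0,+\infty]$, after which the compact-interval theorem applies). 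The extra hypothesis you record --- a common limit $L$ at infinity --- does hold in every application the paper makes of the lemma in the proof of Proposition \ref{wend2mat} (compactly supported and Mat\'ern correlations, and integrable non-increasing spectral densities, all of which vanish at infinity), so the paper's conclusions are unaffected; but the lemma itself should be restated with this tail condition, and your proof is a valid proof of that corrected statement.
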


\begin{proof}
See \cite[p. 81]{Polya-Szego:1998}.
\end{proof}

\begin{lemma}
\label{monotonicity}
Let $C \in \Phi_d$ with $d$-radial spectral density $\widehat{C}_d$. Then $C \in \Phi_{d+2}$ if, and only if, $\widehat{C}_d$ is non-increasing on $[0,+\infty)$.
\end{lemma}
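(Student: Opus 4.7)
The plan is to deduce the equivalence from Schoenberg's characterization of radial positive semidefinite functions, by explicitly producing the $(d+2)$-radial spectral representation of $C$ from its $d$-radial one. The key ingredient is the Bessel recurrence
\[
\frac{d}{du}\bigl[u^{d/2}\,J_{d/2}(uh)\bigr] \;=\; h\, u^{d/2}\, J_{d/2-1}(uh),
\]
which converts the $u^{d/2}\,J_{d/2-1}(uh)$ factor in (\ref{fourier1}) into a total derivative in $u$.

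First I would insert this identity into (\ref{fourier1}) and integrate by parts in $u$. The boundary terms vanish at $0$ (since $u^{d/2}J_{d/2}(uh) = O(u^d)$ by the series expansion of $J_{d/2}$) and at $+\infty$ (using the decay of $\widehat{C}_d$ forced by the integrability of $u\mapsto u^{d-1}\widehat{C}_d(u)$). After rewriting $J_{d/2}$ through the Schoenberg kernel $\Omega_{d+2}(r) := 2^{d/2}\Gamma(\tfrac{d}{2}+1)\, r^{-d/2} J_{d/2}(r)$, this produces the identity
\[
C(h) \;=\; \frac{\pi^{d/2}}{\Gamma(\tfrac{d}{2}+1)} \int_0^{+\infty} u^{d}\,\Omega_{d+2}(uh)\, d(-\widehat{C}_d)(u), \qquad h \geq 0,
\]
where $d(-\widehat{C}_d)$ denotes the Lebesgue--Stieltjes measure associated with $-\widehat{C}_d$.

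For the ``if'' direction, non-increasingness of $\widehat{C}_d$ makes $d(-\widehat{C}_d)$ a non-negative measure, so the display above is a Schoenberg integral representation of $C$ in dimension $d+2$ (the total mass equals $C(0)=1$ via a second integration by parts using $\int_{\mathbb{R}^d} \widehat{C}_d(\|\xi\|_d)\,d\xi = 1$), yielding $C \in \Phi_{d+2}$. For the ``only if'' direction, Schoenberg's theorem applied in dimension $d+2$ identifies a unique non-negative spectral measure for $C$; by uniqueness it coincides with the measure constructed above, forcing $d(-\widehat{C}_d) \geq 0$ and hence $\widehat{C}_d$ to be non-increasing. The main obstacle will be legitimizing the integration by parts when $\widehat{C}_d$ has no a priori regularity beyond being a density; I would handle this by first smoothing $\widehat{C}_d$ via a mollifier and passing to the limit by dominated convergence, or equivalently by reading the Matheron-type descente identity $\widehat{C}_{d+2}(u) = -\widehat{C}_d'(u)/(2\pi u)$ as an equality of measures.
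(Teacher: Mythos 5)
The paper does not actually prove this lemma: its ``proof'' is a one-line citation to Gneiting (2002, p.~497), so there is no in-paper argument to compare against. Your route --- converting the factor $u^{d/2}J_{d/2-1}(uh)$ in (\ref{fourier1}) into the total derivative $\frac{d}{du}\bigl[u^{d/2}J_{d/2}(uh)\bigr]/h$, integrating by parts, and recognizing a Schoenberg representation in dimension $d+2$ with spectral measure proportional to $u^{d}\,d(-\widehat{C}_d)(u)$ --- is precisely the classical mont\'ee/descente mechanism underlying Gneiting's criterion. The \emph{if} direction is sound as sketched: a non-increasing $\widehat{C}_d$ is locally of bounded variation, so $d(-\widehat{C}_d)$ is a genuine non-negative measure; monotonicity together with integrability of $u^{d-1}\widehat{C}_d(u)$ forces $u^{d}\widehat{C}_d(u)\to 0$ at both endpoints, which kills the boundary terms; and the normalization $\int_0^{\infty}u^{d}\,d(-\widehat{C}_d)(u)=d\int_0^{\infty}u^{d-1}\widehat{C}_d(u)\,du$ makes the total mass equal to $C(0)=1$.

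The genuine gap is in the \emph{only if} direction. You argue that the unique $(d+2)$-Schoenberg measure of $C$ must ``coincide with the measure constructed above, forcing $d(-\widehat{C}_d)\geq 0$.'' But the measure constructed above exists only if $\widehat{C}_d$ is locally of bounded variation, and at that point in the argument you have no such information --- bounded variation is essentially the conclusion you are after, so uniqueness cannot force a possibly undefined object to be non-negative. Mollifying $\widehat{C}_d$ does not obviously repair this, since smoothing the density changes $C$ and you would still need variation bounds uniform in the mollification parameter. The clean fix runs the mont\'ee in the opposite direction: let $F$ denote the $(d+2)$-Schoenberg measure of $C$; the marginal on $\mathbb{R}^d$ of the associated rotation-invariant spectral measure on $\mathbb{R}^{d+2}$ is the spectral measure of $C$ viewed in $\Phi_d$, and since the marginal on $\mathbb{R}^d$ of the uniform distribution on a sphere of radius $t$ in $\mathbb{R}^{d+2}$ is the uniform distribution on the ball of radius $t$, one gets
\[
\widehat{C}_d(u)\;=\;\mathrm{const}\times\int_{(u,+\infty)} t^{-d}\,F(\mathrm{d}t)\qquad\text{for a.e. } u>0,
\]
which is manifestly non-increasing (and absolute continuity of the $d$-radial spectral measure excludes an atom of $F$ at the origin). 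With that substitution, your proof is complete.
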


\begin{proof}
See \cite[p. 497]{gneiting2002compactly}.
\end{proof}

\subsection{Alternative Analytical Expressions of the Hole Effect Mat\'ern Model}
\label{alternatives_Matern}

We provide alternative expressions of ${\cal M}_{a,\xi,d,k}$ involving special functions.

\begin{enumerate}
\item[1.] \textbf{Expression in terms of generalized hypergeometric functions.} The $(d,k)$-hole effect Mat\'ern model is the Fourier-Hankel transformation (\ref{fourier1}) of its $d$-radial spectral density (\ref{stein2}), $i.e.$:
\begin{equation}
\label{generalizedMatern0}
    {\cal M}_{a,\xi,d,k}(h) = \frac{2^{{d/2}}\Gamma(\xi+\frac{d}{2}+k) \Gamma(\frac{d}{2}) a^{d+2k}}{\Gamma(\frac{d}{2}+k) \Gamma(\xi)} h^{1-{d/2}} \int_0^{+\infty} \frac{u^{2k+{d/2}} J_{{d/2}-1}\left(\frac{uh}{a}\right)}{(1+a^2 u^2)^{\xi+{d/2}+k}} {\rm d}u.
\end{equation}
For $\xi \notin \mathbb{N}_{\geq 1}$, this leads to \cite[6.565.8]{grad}
\begin{equation}
\label{generalizedMatern}
    \begin{split}
    {\cal M}_{a,\xi,d,k}(h)
    &= \frac{\Gamma(\xi+\frac{d}{2}+k) \Gamma(\frac{d}{2}) \Gamma(-\xi)}{\Gamma(\frac{d}{2}+k) \Gamma(\xi)\Gamma(\xi+\frac{d}{2})} \left(\frac{h}{2a}\right)^{2\xi} {{}_1F_2\left(\xi+\frac{d}{2}+k;\xi+\frac{d}{2},\xi+1;\frac{h^2}{4a^2}\right)} \\
    &+{{}_1F_2\left(k+\frac{d}{2};1-\xi,\frac{d}{2};\frac{h^2}{4a^2}\right)}, \quad h \geq 0, \xi \notin \mathbb{N}_{\geq 1}.
    \end{split}
\end{equation}
\item[2.] \textbf{Expression in terms of modified Bessel functions of the first kind.} Using a Kummer-type transformation \citep[Theorem 2.1]{Withers}, one can rewrite (\ref{generalizedMatern}) as
\begin{equation*}
    \begin{split}
    {\cal M}_{a,\xi,d,k}(h) &= \sum_{n=0}^k \frac{k!}{n!(k-n)! (1-\xi)_n (\frac{d}{2})_n} {{}_0F_1\left(;1-\xi+n;\frac{h^2}{4a^2}\right)} \left(\frac{h^2}{4a^2}\right)^n \\
    &+ \frac{\Gamma(\xi+\frac{d}{2}+k) \Gamma(\frac{d}{2}) \Gamma(-\xi)}{\Gamma(k+\frac{d}{2}) \Gamma(\xi)\Gamma(\xi+\frac{d}{2})} \left(\frac{h}{2a}\right)^{2\xi} \\
    &\times \sum_{n=0}^k \frac{k!}{n!(k-n)! (\xi+1)_n (\xi+\frac{d}{2})_n} {{}_0F_1\left(;\xi+1+n;\frac{h^2}{4a^2}\right)} \left(\frac{h^2}{4a^2}\right)^n, \quad h \geq 0, \xi \notin \mathbb{N}_{\geq 1}.
    \end{split}
\end{equation*}

In turn, the ${}_0F_1$ hypergeometric function can be expressed in terms of modified Bessel functions of the first kind \citep[10.39.9]{Olver}, which leads to:
\begin{equation*}
    \begin{split}
    {\cal M}_{a,\xi,d,k}(h) &= \sum_{n=0}^k \frac{k! \Gamma(\frac{d}{2}) \Gamma(1-\xi)}{n!(k-n)!} \left(\frac{h}{2a}\right)^{n+\xi} \\
    &\times \left[\frac{I_{n-\xi}\left(\frac{h}{a}\right)}{\Gamma(\frac{d}{2}+n)} - \frac{\Gamma(\xi+\frac{d}{2}+k) I_{n+\xi}\left(\frac{h}{a}\right)}{\Gamma(k+\frac{d}{2}) \Gamma(\xi+\frac{d}{2}+n)}\right], \quad h \geq 0, \xi \notin \mathbb{N}_{\geq 1}.
    \end{split}
\end{equation*}
If $k=0$, then one recovers the traditional Mat\'ern correlation (\ref{Matern}) owing to formulae 5.5.3 and 10.27.4 of \cite{Olver}, that is ${\cal M}_{a,\xi,d,0}= {\cal M}_{a,\xi}$.

\item[3.] \textbf{Expression in terms of a Meijer function.} The integral in (\ref{generalizedMatern0}) can be expressed by means of a Meijer-$G$ function, which leads to [\citealp[14.4.21]{erdelyi1954b}; \citealp[8.2.2.15]{prud}]:
\begin{equation*}
    {\cal M}_{a,\xi,d,k}(h) =
    \begin{cases}
    \frac{\Gamma(\frac{d}{2})}{\Gamma(k+{d/2}) \Gamma(\xi)} {G_{1,3}^{2,1}\left(1-k-\frac{d}{2};\xi,0,1-\frac{d}{2};\frac{h^2}{4a^2}\right)} 
    \text{ if $h > 0$}\\
    1 \text{ if $h = 0$.}
    \end{cases}
\end{equation*}
The function on the right-hand side is well defined for all $\xi>0$, showing that ${\cal M}_{a,\xi,d,k}$ can be continued to a member of $\Phi_d$ when $\xi \in \mathbb{N}$.

\end{enumerate}

\subsection{Alternative Analytical Expressions of the Hole Effect Generalized Wendland Model}
\label{alternatives}

Let $a \in \mathbb{R}_{> 0}$, $\xi \in \mathbb{R}_{>-{1/2}}$ and $k \in \mathbb{N}$ satisfying conditions (A) to (C) of Proposition \ref{genW}.
We provide alternative expressions of ${\cal GW}_{a,\xi,\nu,d,k}$ involving special functions.

\begin{enumerate}
\item[1.] \textbf{Expressions in terms of Gauss hypergeometric functions.}
Using formula 9.6.5 of \cite{lebedev65}, one can rewrite (\ref{wendlandext}) as
\begin{equation*}
\label{genordwend3}
\begin{split}
    &{\cal GW}_{a,\xi,\nu,d,k}(h)
    = \sum_{n=0}^k \frac{(-1)^n k! (\frac{1-\nu}{2}-\xi)_n (-\xi-\frac{\nu}{2})_n}{n! (k-n)! (1-\frac{d}{2}-n)_n (\frac{1}{2}-\xi)_n} \left(\frac{h}{a}\right)^{2n} \\
    & \times \left(\frac{1+\sqrt{1-\frac{h^2}{a^2}}}{2}\right)^{2\xi+\nu-2n} {{}_2F_1\left(2n-2\xi-\nu,\frac{1}{2}-\xi-\nu+n;\frac{1}{2}-\xi+n;\frac{1-\sqrt{1-\frac{h^2}{a^2}}}{1+\sqrt{1-\frac{h^2}{a^2}}}\right)}\\
    &+ \frac{\Gamma(\xi+\frac{1+\nu}{2}) \Gamma(\xi+\frac{\nu}{2}+1) \Gamma(\frac{d}{2})\Gamma(-\xi-\frac{1}{2})}{\Gamma(\frac{d}{2}+k) \Gamma(\xi+\frac{1}{2})\Gamma(\frac{\nu}{2})\Gamma(\frac{\nu+1}{2})}\\
    & \times \sum_{n=0}^k \frac{(-1)^{n+k} k! (\frac{1-d}{2}-\xi-k)_{k-n} (1-\frac{\nu}{2})_n (\frac{1-\nu}{2})_n}{n! (k-n)! (\xi+\frac{3}{2})_n} \left(\frac{h}{a}\right)^{2\xi+1+2n}\\
    & \times \left(\frac{1+\sqrt{1-\frac{h^2}{a^2}}}{2}\right)^{\nu-1-2n} {{}_2F_1\left(1-\nu+2n,n-\nu-\xi+\frac{1}{2};\xi+\frac{3}{2}+n;\frac{1-\sqrt{1-\frac{h^2}{a^2}}}{1+\sqrt{1-\frac{h^2}{a^2}}}\right)}, \quad 0 \leq h < a.
\end{split}
\end{equation*}
One can also use a quadratic transformation \citep[15.8.24]{Olver} of the arguments in the Gauss hypergeometric functions to obtain
\begin{equation*}
\begin{split}
    &{\cal GW}_{a,\xi,\nu,d,k}(h)
    \\&= \sum_{n=0}^k \frac{(-1)^n \Gamma(\frac{1}{2}) \Gamma(\frac{1}{2}-\xi-n) k!}{n! (k-n)! (1-\frac{d}{2}-n)_n (\frac{1}{2}-\xi)_n} \left(\frac{h}{a}\right)^{2n} \left(1-\frac{h^2}{a^2}\right)^{\xi+{(\nu-1)/2}-n} \\
    &\times \Bigg[\frac{(-\xi-\frac{\nu}{2})_n }{\Gamma(\frac{1-\nu}{2}-\xi)\Gamma(\frac{1+\nu}{2})} \left(1-\frac{h^2}{a^2}\right)^{{1/2}} {{}_2F_1\left(n-\xi-\frac{\nu}{2},\frac{\nu}{2};\frac{1}{2};\frac{1}{1-\frac{h^2}{a^2}}\right)}\\
    &-\frac{2(\frac{1-\nu}{2}-\xi)_n}{\Gamma(-\xi-\frac{\nu}{2})\Gamma(\frac{\nu}{2})}     {{}_2F_1\left(n-\xi+\frac{1-\nu}{2},\frac{1+\nu}{2};\frac{3}{2};\frac{1}{1-\frac{h^2}{a^2}}\right)}\Bigg]
    \\
    &- \sum_{n=0}^k \frac{(-1)^{n+k} \Gamma(\frac{1}{2}-\xi) k! (\frac{1-d}{2}-\xi-k)_{k-n} (1-\frac{\nu}{2})_n (\frac{1-\nu}{2})_n}{2^{1-\nu} (\frac{d}{2})_k \Gamma(\nu) n! (k-n)!} \left(\frac{h}{a}\right)^{2\xi+1+2n} \\
    & \times \left(1-\frac{h^2}{a^2}\right)^{{\nu/2}-1-n} \Bigg[ \frac{\Gamma(\xi+\frac{1+\nu}{2})}{\Gamma(1-\frac{\nu}{2}+n)} \left(1-\frac{h^2}{a^2}\right)^{{1/2}} {{}_2F_1\left(\frac{1-\nu}{2}+n,\xi+\frac{1+\nu}{2};\frac{1}{2};\frac{1}{1-\frac{h^2}{a^2}}\right)}\\
    &- \frac{2 \Gamma(\xi+1+\frac{\nu}{2})}{\Gamma(\frac{1-\nu}{2}+n)} {{}_2F_1\left(1-\frac{\nu}{2}+n,\xi+\frac{\nu}{2}+1;\frac{3}{2};\frac{1}{1-\frac{h^2}{a^2}}\right)}\Bigg], \quad 0 \leq h < a.
\end{split}
\end{equation*}
Alternatively, one can use the fact that ${\cal GW}_{a,\xi,\nu,d,k}$ is obtained from ${\cal GW}_{a,\xi,\nu,d+2k,0}$ by applying $k$ times the turning bands operator (see (\ref{meijer}) in the proof of Proposition \ref{genW}). Using Lemma \ref{TB2k}, (\ref{WG4*}) and formula 0.432 of \cite{grad}, one finds
\begin{equation*}
\label{GeneralizedHypergeometric4}
\begin{split}
    {\cal GW}_{a,\xi,\nu,d,k} &=\sum_{q=0}^k \sum_{r=0}^q \sum_{s=0}^{\max \{0,q-r-1\}} \frac{(-1)^q (k-q+1)_q (q)_r (q-r)_r (q-r-2s+1)_{2s}}{2^{2r+2s} q! \, r! \, s! \, (\frac{d}{2})_q} \\
    &\quad \times \frac{\Gamma(\xi+\frac{1+\nu}{2})\Gamma(\xi+\frac{\nu}{2}+1) }{\Gamma(\xi+\nu+1-q+r+s) \Gamma(\xi+\frac{1}{2})} \\
    &\quad \times \left(\frac{h}{a}\right)^{q-r} \left(1-\frac{h}{a}\right)_+^{\xi+\nu-s} \left(1+\frac{h}{a}\right)^{\xi+\nu-q+r+s} \\
    &\quad \times {{}_2F_1\left(\frac{\nu}{2},\frac{\nu+1}{2};\xi+\nu+1-q+r+s;1-\frac{h^2}{a^2}\right)}, \quad 0 < h \leq a.
\end{split}
\end{equation*}
The latter expression is well-defined and is continuous on $(0,a]$ and vanishes at $h=a$, even when condition (C) does not hold, which proves that ${\cal GW}_{a,\xi,\nu,d,k}$ can be continued when $\xi$ is a half-integer. The continuation so obtained is still a member of $\Phi_d$ insofar as ${\cal GW}_{a,\xi,\nu,d,k} = {\mathfrak T}_{d+2k,d}[{\cal GW}_{a,\xi,\nu,d+2k,0}]$ with ${\cal GW}_{a,\xi,\nu,d+2k,0} \in \Phi_{d+2k}$.
\item[2.] \textbf{Expressions in terms of associated Legendre functions.}
Using formulae 7.3.1.100 and 7.3.1.102 of \cite{prud}, one can express the hypergeometric functions in (\ref{wendlandext}) in terms of associated Legendre functions of the first or second kind. This gives:
\begin{equation*}
\begin{split}
    &{\cal GW}_{a,\xi,\nu,d,k}(h)
    = \sum_{n=0}^k \frac{(-1)^n k! (-\xi+\frac{1-\nu}{2})_n (-\xi-\frac{\nu}{2})_n \Gamma(1-\xi-\frac{1}{2})}{2^{\xi+{1/2}-n} n! (k-n)! (1-\frac{d}{2}-n)_n} \\
    &\qquad \times \left(\frac{h}{a}\right)^{\xi+{1/2}+n}  \left(1-\frac{h^2}{a^2}\right)^{{(\xi+\nu-n)/2}-{1/4}} P^{\xi+{1/2}-n}_{-\xi-{1/2}-\nu+n}\left(\frac{1}{\sqrt{1-\frac{h^2}{a^2}}}\right)\\
    &- \frac{\Gamma(2\xi+1+\nu) \Gamma(\frac{d}{2})\Gamma(\frac{1}{2}-\xi)}{\Gamma(\frac{d}{2}+k) \Gamma(\nu)} \sum_{n=0}^k \frac{(-1)^{n+k} k! (\frac{1-d}{2}-\xi-k)_{k-n} (1-\frac{\nu}{2})_n (\frac{1-\nu}{2})_n}{2^{\xi+{1/2}-n} n! (k-n)!}\\
    &\qquad \times \left(\frac{h}{a}\right)^{\xi+{1/2}+n} \left(1-\frac{h^2}{a^2}\right)^{{(\xi+\nu-n)/2}-{1/4}} P^{-\xi-{1/2}-n}_{-\xi-{1/2}-\nu+n}\left(\frac{1}{\sqrt{1-\frac{h^2}{a^2}}}\right), \quad 0 \leq h < a,
\end{split}
\end{equation*}
and
\begin{equation*}
\begin{split}
    &{\cal GW}_{a,\xi,\nu,d,k}(h)
    = \sum_{n=0}^k \frac{(-1)^n k! (-\xi+\frac{1-\nu}{2})_n (-\xi-\frac{\nu}{2})_n \Gamma(1-\xi-\frac{1}{2})}{2^{\xi-n} \sqrt{\pi} n! (k-n)! (1-\frac{d}{2}-n)_n \Gamma(\nu)} \\
    &\qquad \times \left(\frac{h}{a}\right)^{\xi+n}  \left(1-\frac{h^2}{a^2}\right)^{{(\xi+\nu-n)/2}} e^{(n-\xi-\nu)\mathrm{i}\pi} Q_{-\xi-1+n}^{\xi+\nu-n}\left(\frac{a}{h}\right)\\
    &- \frac{\Gamma(\frac{d}{2})\Gamma(\frac{1}{2}-\xi)}{\Gamma(\frac{d}{2}+k) \Gamma(\nu)} \sum_{n=0}^k \frac{(-1)^{n+k} k! (\frac{1-d}{2}-\xi-k)_{k-n} (1-\frac{\nu}{2})_n (\frac{1-\nu}{2})_n}{2^{\xi-n} \sqrt{\pi} n! (k-n)!}\\
    &\qquad \times \left(\frac{h}{a}\right)^{\xi+n} \left(1-\frac{h^2}{a^2}\right)^{{(\xi+\nu-n)/2}} e^{(n-\xi-\nu)\mathrm{i}\pi} Q_{\xi+n}^{\xi+\nu-n}\left(\frac{a}{h}\right), \quad 0 \leq h < a.
\end{split}
\end{equation*}
\item[3.] \textbf{Expression in terms of a Meijer function.}
Using (\ref{wendlandext0}) and \cite[8.2.2.3 and 8.2.2.15]{prud}, one obtains
\begin{equation}
\label{GW2Meijer}
    \begin{split}
    &{\cal GW}_{a,\xi,\nu,d,k}(h) \\&= \begin{cases}
    0 \text{ if $a \leq h$}\\
    \frac{\Gamma(\frac{d}{2}) \Gamma(\xi+\frac{1+\nu}{2}) \Gamma(\xi+\frac{\nu}{2}+1)}{\Gamma(\xi+\frac{1}{2}) \Gamma(\frac{d}{2}+k)} {G^{2,1}_{3,3}\left(1-\frac{d}{2}-k,\xi+\frac{1+\nu}{2},\xi+\frac{\nu}{2}+1;0,\xi+\frac{1}{2},1-\frac{d}{2};\frac{h^2}{a^2}\right)} 
    \text{ if $0 < h < a$}\\
    1 \text{ if $h=0$.}
    \end{cases}
    \end{split}
\end{equation}
\end{enumerate}

\subsection{Proofs}
\label{app:proofs}

\begin{proof}[Proof of Proposition \ref{matgeneral}]

Applying the turning bands operator ${\mathfrak T}_{d+2k,d}$ to the $(d+2k)$-radial Mat\'ern correlation function (\ref{Matern}) gives (Lemma \ref{TB2k})
\begin{equation}
\label{explicitMatern}
  {\cal M}_{a,\xi,d,k}(h) =\sum_{q=0}^k \sum_{r=0}^{\max\{0,q-1\}} \frac{(-1)^r (k-q+1)_q (q)_r (q-r)_r \, h^{q-r} \, {\cal M}_{a,\xi}^{(q-r)}(h)}{2^{q+r} q! \, r! (\frac{d}{2})_q}
\end{equation}
with
\begin{equation*}
{\cal M}_{a,\xi}^{(q-r)}(h) = \frac{2^{1-\xi}}{\Gamma(\xi)} \sum_{s=0}^{q-r} \frac{(q-r)! (\xi+1-s)_{s} \, h^{\xi-s}}{(q-r-s)! s! a^{\xi+q-r-s}}  {\cal K}_{\xi}^{(q-r-s)}\left ( \frac{h}{a} \right )
\end{equation*}
and \citep[10.29.5]{Olver}
\begin{equation*}
  {\cal K}_{\xi}^{(q-r-s)}\left ( \frac{h}{a} \right ) = \left(-\frac{1}{2}\right)^{q-r-s} \sum_{t=0}^{q-r-s} \frac{(q-r-s)!}{t! (q-r-s-t)!} {\cal K}_{\xi+2t+r+s-q}\left ( \frac{h}{a} \right ).
\end{equation*}
These identities lead to (\ref{explicitmatGeneralized}) and prove that the latter is a valid $d$-radial correlation function. Concerning its $d$-radial spectral density, we invoke (\ref{tb2kdensity}) to obtain (\ref{stein2}) from the $(d+2k)$-radial Mat\'ern density (\ref{stein1}).
\end{proof}

\begin{proof}[Proof of Proposition \ref{genW}]

We prove by induction that, under conditions (A) to (C), ${\cal GW}_{a,\xi,\nu,d,k}$ is a valid $d$-radial correlation function and that its $d$-radial spectral density is $\widehat{{\cal GW}}_{a,\xi,\nu,d,k}$.

For $k=0$, the mapping ${\cal GW}_{a,\xi,\nu,d,0}$ defined in (\ref{wendlandext}) coincides with the generalized Wendland correlation (\ref{WG4*}) owing to formula E.2.3 of \cite{matheron1965}. Also, $\widehat{{\cal GW}}_{a,\xi,\nu,d,0}$, as defined by (\ref{GWdensity}), coincides with the $d$-radial spectral density in (\ref{llkk}).

For $k \in \mathbb{N}_{\geq 1}$, assume that ${\cal GW}_{a,\xi,\nu,d+2,k-1}$ is a valid $(d+2)$-radial correlation function under conditions (A) to (C) (these conditions are unchanged when replacing $k$ and $d$ by $k-1$ and $d+2$). The turning bands operator transforms it into a valid $d$-radial correlation, given by (Lemma \ref{TB2})
\begin{equation*}
\label{turning3}
{\mathfrak T}_{d+2,d}\left[{\cal GW}_{a,\xi,\nu,d+2,k-1}\right](h) = \frac{h^{1-d}}{d} \frac{\partial [h^{d} {\cal GW}_{a,\xi,\nu,d+2,k-1}(h)]}{\partial h} , \quad h \geq 0.
\end{equation*}

Using (\ref{GW2Meijer}) and formula 8.2.2.39 of \cite{prud}, one obtains
\begin{equation}
\label{meijer}
    \begin{split}
    &{\mathfrak T}_{d+2,d}\left[{\cal GW}_{a,\xi,\nu,d+2,k-1}\right](h) \\&= \begin{cases}
    0 \text{ if $a \leq h$}\\
    \frac{\Gamma(\frac{d}{2}) \Gamma(\xi+\frac{1+\nu}{2}) \Gamma(\xi+\frac{\nu}{2}+1)}{\Gamma(\xi+\frac{1}{2}) \Gamma(\frac{d}{2}+k)} {G^{2,1}_{3,3}\left(1-\frac{d}{2}-k,\xi+\frac{1+\nu}{2},\xi+\frac{\nu}{2}+1;0,\xi+\frac{1}{2},1-\frac{d}{2};\frac{h^2}{a^2}\right)} 
    \text{ if $0 < h < a$}\\
    1 \text{ if $h=0$}
    \end{cases}\\
    &= {\cal GW}_{a,\xi,\nu,d,k}(h), \quad h \geq 0,
    \end{split}
\end{equation}
which proves that ${\cal GW}_{a,\xi,\nu,d,k} \in \Phi_d$ under conditions (A) to (C). Its $d$-radial spectral density is derived from that of ${\cal GW}_{a,\xi,\nu,d+2,k-1}$ by using (\ref{tb2kdensity}):
\begin{equation*}
  \widehat{{\cal GW}}_{a,\xi,\nu,d,k}(u) = \frac{2\pi u^2}{d} \widehat{{\cal GW}}_{a,\xi,\nu,d+2,k-1}(u), \quad u \geq 0.
\end{equation*}
It is deduced that, if $\widehat{{\cal GW}}_{a,\xi,\nu,d+2,k-1}$ is given by (\ref{GWdensity}), so is $\widehat{{\cal GW}}_{a,\xi,\nu,d,k}$.
\end{proof}

\begin{proof}[Proof of Proposition \ref{WendN}]
Consider a $(d+2)$-radial covariance belonging to the ordinary Wendland class, $i.e.$, $\xi \in \mathbb{N}$ and $k=0$. As shown in \cite{bevi2024}, this covariance can be written as:
\begin{equation}
\label{ordwendform}
{\cal GW}_{a,\xi,\nu,d+2,0}(h) = \sum_{n=0}^{\xi} a_{\xi,n}(\nu) \left(1-\frac{h}{a}\right)_+^{n+\xi+\nu} \left(1+\frac{h}{a}\right)^{\xi-n}
\end{equation}
with  $a_{\xi,n}(\nu) =  \frac{\Gamma(\xi)\Gamma(2\xi+\nu+1) (\nu)_n (-\xi)_n}{2\Gamma(2\xi)\Gamma(\xi+\nu+1) (\xi+\nu+1)_n \, n!}$.
By applying the turning bands operator (\ref{turning03}), one obtains the $d$-radial correlation
\begin{equation*}
\begin{split}
      &{\cal GW}_{a,\xi,\nu,d,1}(h) \\
      & = \begin{cases}
      \sum_{n=0}^{\xi} a_{\xi,n}(\nu) \left(1-\frac{h}{a}\right)^{n+\xi+\nu} \left(1+\frac{h}{a}\right)^{\xi-n}\\
      - \frac{h}{a \,d} \sum_{n=0}^{\xi} a_{\xi,n}(\nu) (n+\xi+\nu) \left(1-\frac{h}{a}\right)^{n+\xi+\nu-1} \left(1+\frac{h}{a}\right)^{\xi-n}\\
      + \frac{h}{a \, d} \sum_{n=0}^{\xi} a_{\xi,n}(\nu) (\xi-n) \left(1-\frac{h}{a}\right)^{n+\xi+\nu} \left(1+\frac{h}{a}\right)^{\xi-n-1}, \quad 0 < h \leq a\\
      0, \quad h \geq a\\
      \end{cases}
\end{split}
\end{equation*}
which coincides with (\ref{genordwend}).

Consider now a $(d+4)$-radial correlation of the form (\ref{ordwendform}) with $\alpha = \frac{d+5}{2}+\xi$, $\xi \in \mathbb{N}$ and $\nu \geq \alpha$. Applying twice the turning bands operator (\ref{turning03}), one obtains
\begin{equation*}
\begin{split}
     &{\cal GW}_{a,\xi,\nu,d,2}(h) =  \sum_{n=0}^{\xi} a_{\xi,n}(\nu) \left(1-\frac{h}{a}\right)_+^{n+\xi+\nu-2} \left(1+\frac{h}{a}\right)^{\xi-n-2} \\
     &\times \Bigg[\left(1-\frac{h^2}{a^2}\right) \left(1 - \frac{(2n+\nu)h}{a \,(d+2)} - \frac{(2\xi+\nu+d+2)h^2}{a^2 \,(d+2)}\right)
       \\
      &- \frac{h(n+\xi+\nu-1)}{a \, d} \left(1+\frac{h}{a}\right) \left(1 - \frac{(2n+\nu)h}{a \,(d+2)} - \frac{(2\xi+\nu+d+2)h^2}{a^2 \,(d+2)}\right)\\
      &+ \frac{h(\xi-n-1)}{a \, d} \left(1-\frac{h}{a}\right) \left(1 - \frac{(2n+\nu)h}{a \,(d+2)} - \frac{(2\xi+\nu+d+2)h^2}{a^2 \,(d+2)}\right)\\
      &+ \frac{h}{d} \left(1-\frac{h^2}{a^2}\right)\left(- \frac{(2n+\nu)}{a \,(d+2)} - \frac{2h(2\xi+\nu+d+2)}{a^2 \,(d+2)}
      \right) \Bigg],\\
\end{split}
\end{equation*}
which yields (\ref{genordwend2}).
\end{proof}

\begin{proof}[Proof of Proposition \ref{wend2mat}]

We first show the convergence of the Generalized Wendland  model to the Mat\'ern (case $k=0$), and will follow with the general case ($k \geq 1$).\\
\textbf{Case 1: $k=0$.}

The pointwise convergence of $\widehat{{\cal GW}}_{\nu a,\xi-{1/2},\nu,d,0}$ to $\widehat{{\cal M}}_{a,\xi,d,0}$ can be established by a straightforward adaptation of the proof given in \cite{bevilacqua2022unifying}. From Lemmas \ref{Dini} and \ref{monotonicity} (owing to (\ref{bevcondition}), the latter lemma applies as soon as $\nu \geq \nu_{\min}(\xi,d+2)$), it is deduced that the convergence to $\widehat{{\cal M}}_{a,\xi}$ is actually uniform on $[0,+\infty)$.

Let us now focus on the convergence of the correlation functions. Before distinguishing different subcases, depending on the value of $\xi$, we recall a useful result: The Mat\'ern correlation with smoothness parameter $\xi > \mu > 0$ is, up to a positive factor, the mont\'ee of order $2\xi-2\mu$ (see Section \ref{app:montee}) of the Mat\'ern correlation with smoothness parameter $\mu$ \citep[II.I.6]{matheron1965}:
\begin{equation}
\label{montMatern}
      {\cal M}_{a,\xi}(h) = \varpi_1(\xi,\mu,a) \int_h^{+\infty} u (u^2-h^2)^{\xi-\mu-1} {\cal M}_{a,\mu}(u) {\rm d}u, \quad h \geq 0,
\end{equation}
with $\varpi_1(\xi,\mu,a) = \frac{2\Gamma(\mu)}{(2 a)^{2\xi-2\mu} \Gamma(\xi)\Gamma(\xi-\mu)}$. In particular, $\varpi_1\left(\xi,\frac{1}{2},a\right) = \frac{1}{a^{2\xi-1}\Gamma(2\xi-1)}$ owing to the gamma duplication formula \citep[5.5.5]{Olver}.\\\\
\textbf{Subcase 1.1: $\xi>\frac{1}{2}$.}

For $b>0$, $\xi>\frac{1}{2}$ and $\nu \geq \xi+\frac{d}{2}$, the mont\'ee of order $2\xi-1$ of the Askey correlation ${\cal GW}_{b,0,\nu,d+\lceil 2\xi-1 \rceil,0}$ is, up to a positive factor, the Generalized Wendland  correlation ${\cal GW}_{b,\xi-{1/2},\nu,d,0}$ \citep[Theorem 9]{emery2021gauss}:
\begin{equation}
\label{montaskey}
      {\cal GW}_{b,\xi-{1/2},\nu,d,0}(h) =  \varpi_2(\xi,\nu,b) \int_h^{+\infty} u (u^2-h^2)^{\xi-{3/2}} {\cal GW}_{b,0,\nu,d+\lceil 2\xi-1 \rceil,0}(u) {\rm d}u, \quad h \geq 0,
\end{equation}
with $\varpi_2(\xi,\nu,b) = \frac{\Gamma(\nu+2\xi)}{b^{2\xi-1} \Gamma(2\xi-1) \Gamma(\nu+1)}$.
Accordingly, for $a,b > 0$, $\xi>\frac{1}{2}$ and $\nu \geq \xi+\frac{d}{2}$,
\begin{equation*}
\begin{split}
    &{\cal M}_{a,\xi}(h)- \frac{\Gamma(\nu+1)}{\Gamma(\nu+2\xi)} \left(\frac{b}{a}\right)^{2\xi-1} {\cal GW}_{b,\xi-{1/2},\nu,d,0}(h) \\
    &= \varpi_1\left(\xi,\frac{1}{2},a\right) \int_h^{+\infty} u (u^2-h^2)^{\xi-{3/2}} \left[{\cal M}_{a,{1/2}}(u) - {\cal GW}_{b,0,\nu,d+\lceil 2\xi-1 \rceil,0}(u)\right] {\rm d}u \\
    &= \varpi_1\left(\xi,\frac{1}{2},a\right) \int_h^{b} u (u^2-h^2)^{\xi-{3/2}} \left[{\cal M}_{a,{1/2}}(u) - {\cal GW}_{b,0,\nu,d+\lceil 2\xi-1 \rceil,0}(u)\right] {\rm d}u \\
    &+ \varpi_1\left(\xi,\frac{1}{2},a\right) \int_b^{+\infty} u (u^2-h^2)^{\xi-{3/2}} {\cal M}_{a,{1/2}}(u) {\rm d}u, \quad h \geq 0.
\end{split}
\end{equation*}

Let $0 < u \leq \nu a$. The following inequalities hold (see \cite[3.6.2]{Mitrinovic1970} for the first two ones; the last inequality stems from the fact that $0 < u \mapsto u^2 \mathrm{e}^{-u}$ is upper bounded by $4 \mathrm{e}^{-2}$ and that $0 < u \mapsto{\cal M}_{a,{3/2}}(u) - \frac{u}{a} \mathrm{e}^{-{u/a}}$ and $0 < u \mapsto {\cal M}_{a,{5/2}}(u) - \frac{u^2}{3a^2} \mathrm{e}^{-{u/a}}$ are non-negative functions, as per Table 1 in \cite{Guttorp}):
\begin{equation}
\label{superinequalities}
    0 \leq \mathrm{e}^{-{u/a}} - \left(1-\frac{u}{\nu a}\right)^\nu \leq \frac{u^2 \, \mathrm{e}^{-{u/a}}}{\nu a^2} \leq \min \left\{\frac{4 \mathrm{e}^{-2}}{\nu}, \frac{u}{\nu a} {\cal M}_{a,{3/2}}(u), \frac{3}{\nu} {\cal M}_{a,{5/2}}(u) \right\}.
\end{equation}

Accordingly, for any $\nu \geq \max\{\xi+\frac{d}{2},\frac{1}{a}\}$ and $h \in (0,\nu a - 1)$, one obtains:
\begin{equation*}
\begin{split}
    0 &\leq {\cal M}_{a,\xi}(h)- \frac{\Gamma(\nu+1) \nu^{2\xi-1} }{\Gamma(\nu+2\xi)} {\cal GW}_{\nu a,\xi-{1/2},\nu,d,0}(h) \\
    &\leq \varpi_1\left(\xi,\frac{1}{2},a\right) \left[\frac{3}{\nu} \int_h^{\nu a} u (u^2-h^2)^{\xi-{3/2}} {\cal M}_{a,{5/2}}(u) {\rm d}u + \int_{\nu a}^{+\infty} u (u^2-h^2)^{\xi-{3/2}} {\cal M}_{a,{1/2}}(u) {\rm d}u\right].
\end{split}
\end{equation*}

On the one hand, owing to (\ref{montMatern}),
\begin{equation*}
     \int_h^{\nu a} u (u^2-h^2)^{\xi-{3/2}} {\cal M}_{a,{5/2}}(u)  {\rm d}u \leq  \frac{{\cal M}_{a,\xi+2}(h)}{\varpi_1\left(\xi+2,\frac{5}{2},a\right)} \leq \frac{1}{\varpi_1\left(\xi+2,\frac{5}{2},a\right)}.
\end{equation*}

On the other hand, based on \cite[3.381.3]{grad},
\begin{equation*}
\begin{split}
    &\int_{\nu a}^{+\infty} u (u^2-h^2)^{\xi-{3/2}} {\cal M}_{a,{1/2}}(u) {\rm d}u \\
    &\leq
    \begin{cases}
        (\nu a)^{2\xi-1} \int_{1}^{+\infty} v^{2\xi-2} \exp\left(-\nu v\right) {\rm d}v = a^{2\xi-1} \Gamma(2\xi-1,\nu) \text{ if $\xi \geq \frac{3}{2}$}\\
        (\nu a)^2 \int_{1}^{+\infty} v \exp\left(-\nu v\right) {\rm d}v = a^{2} \Gamma(2,\nu) \text{ if $0 < \xi < \frac{3}{2}$.}
    \end{cases}
\end{split}
\end{equation*}

Also, for fixed $\xi$, $\frac{\Gamma(\nu+1) \nu^{2\xi-1} }{\Gamma(\nu+2\xi)} \to 1$ as $\nu \to + \infty$ \citep[5.11.12]{Olver}.

It is deduced that, for fixed $a>0$ and $\xi > \frac{1}{2}$, ${\cal GW}_{\nu a,\xi-{1/2},\nu,d,0}$ converges pointwise to ${\cal M}_{a,\xi}$ on $[0,+\infty)$ as $\nu$ tends to infinity. Since ${\cal GW}_{\nu a,\xi-{1/2},\nu,d,0}$ is decreasing and ${\cal M}_{a,\xi}$ is continuous, the convergence is actually uniform on $[0,+\infty)$ owing to Lemma \ref{Dini}.  \\\\
\textbf{Subcase 1.2: $\xi=\frac{1}{2}$.}

The result also holds for $\xi = \frac{1}{2}$. Indeed,
\begin{equation*}
    {\cal GW}_{\nu a,0,\nu,d,0}(h) = \left(1-\frac{h}{\nu a}\right)_+^\nu, \quad h \geq 0.
\end{equation*}

As $\nu$ tends to infinity, the uniform convergence to the exponential correlation ${\cal M}_{a,{1/2}}(h)$ on $[0,+\infty)$ stems from (\ref{superinequalities}).\\\\
\textbf{Subcase 1.3: $\xi<\frac{1}{2}$.}

Let us now consider the case when $0 < \xi < \frac{1}{2}$. Up to a positive factor, the Mat\'ern and Generalized Wendland  correlations ${\cal M}_{a,\xi}$ and ${\cal GW}_{b,\xi-{1/2},\nu,d,0}$ (with $\nu \geq \nu_{\min}(\xi-\frac{1}{2},d)$) are obtained by a descente (mont\'ee of negative order $2\xi-1$, see Section \ref{app:montee}) of the correlations ${\cal M}_{a,{1/2}}$ and ${\cal GW}_{b,0,\nu,d,0}$. This amounts to a descente of order $2$ followed by a mont\'ee of order $2\xi+1$. Accounting for (\ref{montMatern}), (\ref{montaskey}), formula I.4.9 of \cite{matheron1965}, and formulae 3.196.3 and 3.381.4 of \cite{grad} to determine the normalization factors, it comes:
\begin{equation*}
      {\cal M}_{a,\xi}(h) = \frac{1}{a^{2\xi} \Gamma(2\xi)} \int_h^{+\infty} (u^2-h^2)^{\xi-{1/2}} {\cal M}_{a,{1/2}}(u) {\rm d}u, \quad h \geq 0,
\end{equation*}
and
\begin{equation*}
      {\cal GW}_{b,\xi-{1/2},\nu,d,0}(h) =  \frac{\Gamma(\nu+2\xi)}{b^{2\xi} \Gamma(\nu) \Gamma(2\xi)} \int_h^{+\infty} (u^2-h^2)^{\xi-{1/2}} {\cal GW}_{b,0,\nu,d,0}(u) {\rm d}u, \quad h \geq 0.
\end{equation*}

Accordingly, for $a,b > 0$, $0 < \xi < \frac{1}{2}$ and $\nu \geq \nu_{\min}(\xi-\frac{1}{2},d)$,
\begin{equation*}
\begin{split}
    &{\cal M}_{a,\xi}(h)- \frac{\Gamma(\nu)}{\Gamma(\nu+2\xi)} \left(\frac{b}{a}\right)^{2\xi} {\cal GW}_{b,\xi-{1/2},\nu,d,0}(h) \\
    &= \frac{1}{a^{2\xi} \Gamma(2\xi)} \int_h^{+\infty} (u^2-h^2)^{\xi-{1/2}} \left[{\cal M}_{a,{1/2}}(u) - {\cal GW}_{b,0,\nu,d,0}(u)\right] {\rm d}u, \quad h \geq 0.
\end{split}
\end{equation*}

Let $\nu \geq \max\{\nu_{\min}(\xi-\frac{1}{2},d),1+\frac{1}{a}\}$ and $0 < h < \nu a - 1$. Setting $b = \nu a$ and accounting for (\ref{montMatern}) and (\ref{superinequalities}), one gets
\begin{equation*}
\begin{split}
    0 &\leq {\cal M}_{a,\xi}(h)- \frac{\Gamma(\nu) \nu^{2\xi}}{\Gamma(\nu+2\xi)} {\cal GW}_{\nu a,\xi-{1/2},\nu,d,0}(h) \\
    &\leq \frac{1}{a^{2\xi} \Gamma(2\xi)} \left[\frac{1}{\nu a} \int_h^{\nu a} u (u^2-h^2)^{\xi-{1/2}} {\cal M}_{a,{3/2}}(u) {\rm d}u + \int_{\nu a}^{+\infty} {\cal M}_{a,{1/2}}(u) {\rm d}u\right]\\
    &\leq \frac{1}{a^{2\xi} \Gamma(2\xi)} \left[\frac{1}{\nu a} \frac{{\cal M}_{a,\xi+2}(h)}{\varpi_1(\xi+2,\frac{3}{2},a)} + a \exp(-\nu a) \right]\\
    &\leq \frac{1}{a^{2\xi} \Gamma(2\xi)} \left[\frac{1}{\nu a \varpi_1(\xi+2,\frac{3}{2},a)} + a \exp(-\nu a) \right],
\end{split}
\end{equation*}
with $\frac{\Gamma(\nu) \nu^{2\xi}}{\Gamma(\nu+2\xi)}$ tending to $1$ as $\nu$ tends to infinity \citep[5.11.12]{Olver}.

We again invoke Lemma \ref{Dini} to claim that, for fixed $a$ and $\xi$, ${\cal GW}_{\nu a,\xi-{1/2},\nu,d,0}$ uniformly converges to ${\cal M}_{a,\xi}$ on $[0,+\infty)$ as $\nu$ tends to infinity.   \\\\
\textbf{Case 2: $k \geq 1$.}

Let us start with the case $k=1$. By application of the turning bands operator (\ref{turning03}) to the Askey and exponential correlations, one obtains:
\begin{equation*}
{\cal GW}_{\nu a,0,\nu,d,1} = \left(1-\frac{h}{\nu a}\right)_+^{\nu} - \frac{h}{a d} \left(1-\frac{h}{\nu a}\right)_+^{\nu-1}, \quad h \geq 0,
\end{equation*}
and
\begin{equation*}
{\cal M}_{a,{1/2},d,1} = \exp\left(-\frac{h}{a}\right) - \frac{h}{a d} \exp\left(-\frac{h}{a}\right), \quad h \geq 0,
\end{equation*}
with the former correlation tending uniformly to the latter as $\nu$ tends to $+\infty$ owing to (\ref{superinequalities}). The proof that ${\cal GW}_{\nu a,\xi-{1/2},\nu,d,1}$ uniformly converges to ${\cal M}_{a,\xi,d,1}$ as $\nu$ tends to infinity relies on the same arguments as the proof of the case $k=0$. The convergence of ${\cal GW}_{\nu a,\xi-{1/2},\nu,d,k}$ to ${\cal M}_{a,\xi,d,k}$ for any integer $k \geq 2$ can be done similarly by induction.

Concerning the spectral densities, from Equations (\ref{tb2kdensity}), (\ref{stein2}) and (\ref{GWdensity}), one has
$$\widehat{{\cal GW}}_{\nu a,\xi-{1/2},\nu,d,k}(u) = \frac{\pi^k}{(\frac{d}{2})_k} u^{2k} \, \widehat{{\cal GW}}_{\nu a,\xi-{1/2},\nu,d+2k,0}(u), \quad u \geq 0,$$
$$\widehat{{\cal M}}_{a,\xi,d,k}(u) = \frac{\pi^k}{(\frac{d}{2})_k} u^{2k} \, \widehat{{\cal M}}_{a,\xi,d+2k,0}(u), \quad u \geq 0,$$
so that the pointwise convergence of $\widehat{{\cal GW}}_{\nu a,\xi-{1/2},\nu,d,k}$ to $\widehat{{\cal M}}_{a,\xi,d,k}$ as $\nu$ tends to infinity stems from the result established for $k=0$. Owing to Lemmas \ref{Dini} and \ref{monotonicity} (the latter applies as soon as $\nu \geq \nu_{\min}(\xi,d+2k+2)$), it is deduced that the convergence is uniform on $[0,+\infty)$.

\end{proof}

\section*{Acknowledgments}

This work was funded and supported by Agencia Nacional de Investigación y Desarrollo (ANID), under grants ANID FONDECYT 1210050 (X. Emery), ANID AFB230001 (X. Emery), ANID FONDECYT 1240308  (M. Bevilacqua), ANID project Data Observatory Foundation DO210001 (M. Bevilacqua), and MATH-AMSUD1167 22-MATH-06 (AMSUD220041) (M. Bevilacqua). {The authors acknowledge the constructive comments by the associate editor and two anonymous referees, which helped improve the manuscript.}

\bibliography{mybib}

\begin{thebibliography}{}

\bibitem[Adler, 1981]{Adler:1981}
Adler, R.~J. (1981).
\newblock {\em The Geometry of Random Fields}.
\newblock Wiley, New York.

\bibitem[Alegr\'ia and Emery, 2024]{alegria2023}
Alegr\'ia, A. and Emery, X. (2024).
\newblock Matrix-valued isotropic covariance functions with local extrema.
\newblock {\em Journal of Multivariate Analysis}, 200:105250.

\bibitem[Alegr{\'\i}a and Emery, 2025]{jkjk}
Alegr{\'\i}a, A. and Emery, X. (2025).
\newblock Versatile parametric classes of covariance functions that interlace
  anisotropies and hole effects.
\newblock {\em Statistica Sinica}.
\newblock To appear.

\bibitem[Bellier and Monestiez, 2010]{bellier2010}
Bellier, E. and Monestiez, P. (2010).
\newblock A spatial covariance model with a single wave effect and a finite
  range.
\newblock {\em Statistics and Probability Letters}, 80:1343--1347.

\bibitem[Bevilacqua et~al., 2024a]{Bevilacqua:2018aa}
Bevilacqua, M., Caama{\~n}o, C., and Morales-O{\~n}ate, V. (2024a).
\newblock {\em GeoModels: A package for geostatistical Gaussian and non
  Gaussian data analysis}.
\newblock R package version 2.1.5.

\bibitem[Bevilacqua et~al., 2022]{bevilacqua2022unifying}
Bevilacqua, M., Caama{\~n}o-Carrillo, C., and Porcu, E. (2022).
\newblock Unifying compactly supported and {M}at{\'e}rn covariance functions in
  spatial statistics.
\newblock {\em Journal of Multivariate Analysis}, 189:104949.

\bibitem[Bevilacqua et~al., 2024b]{bevi2024}
Bevilacqua, M., Emery, X., and Faouzi, T. (2024b).
\newblock Extending the generalized {W}endland covariance model.
\newblock {\em Electronic Journal of Statistics}, 18(2):2771--2797.

\bibitem[Bevilacqua et~al., 2019]{BFFP}
Bevilacqua, M., Faouzi, T., Furrer, R., and Porcu, E. (2019).
\newblock Estimation and prediction using generalized {W}endland functions
  under fixed domain asymptotics.
\newblock {\em Annals of Statistics}, 47(2):828--856.

\bibitem[Caamaño-Carrillo et~al., 2024]{chaa}
Caamaño-Carrillo, C., Bevilacqua, M., López, C., and Morales-Oñate, V.
  (2024).
\newblock Nearest neighbors weighted composite likelihood based on pairs for
  (non-){G}aussian massive spatial data with an application to {T}ukey-hh
  random fields estimation.
\newblock {\em Computational Statistics and Data Analysis}, 191:107887.

\bibitem[Chernih et~al., 2014]{Chernih}
Chernih, A., Sloan, I.~H., and Womersley, R.~S. (2014).
\newblock Wendland functions with increasing smoothness converge to a
  {G}aussian.
\newblock {\em Advances in Computational Mathematics}, 40(1):185--200.

\bibitem[Chilès and Delfiner, 2012]{Chiles2012}
Chilès, J.-P. and Delfiner, P. (2012).
\newblock {\em Geostatistics: {M}odeling {S}patial {U}ncertainty}.
\newblock John Wiley \& Sons, New York.

\bibitem[Davis, 2006]{sparsedavis}
Davis, T. (2006).
\newblock {\em Direct {M}ethods for {S}parse {L}inear {S}ystems}.
\newblock Society for Industrial and Applied Mathematics, Philadelphia.

\bibitem[Emery and Alegr\'ia, 2022]{emery2021gauss}
Emery, X. and Alegr\'ia, A. (2022).
\newblock The {G}auss hypergeometric covariance kernel for modeling
  second-order stationary random fields in {E}uclidean spaces: {I}ts compact
  support, properties and spectral representation.
\newblock {\em Stochastic Environmental Research and Risk Assessment},
  36:2819--–2834.

\bibitem[Erd{\'e}lyi, 1954]{erdelyi1954b}
Erd{\'e}lyi, A. (1954).
\newblock {\em Tables of {I}ntegral {T}ransforms, {V}olume {II}}.
\newblock McGraw-Hill, New York.

\bibitem[Exton, 1978]{exton}
Exton, H. (1978).
\newblock {\em Handbook of Hypergeometric Integrals}.
\newblock John Wiley \& Sons, Chichester.

\bibitem[Gneiting, 2002]{gneiting2002compactly}
Gneiting, T. (2002).
\newblock Compactly supported correlation functions.
\newblock {\em Journal of Multivariate Analysis}, 83:493--508.

\bibitem[Golubov, 1981]{Golubov}
Golubov, B. (1981).
\newblock On {A}bel-{P}oisson type and {R}iesz means.
\newblock {\em Analysis Mathematica}, 7:161--184.

\bibitem[Gradshteyn and Ryzhik, 2007]{grad}
Gradshteyn, I.~S. and Ryzhik, I.~M. (2007).
\newblock {\em Tables of {I}ntegrals, {S}eries, and {P}roducts}.
\newblock Academic Press, New York, seventh edition.

\bibitem[Guttorp and Gneiting, 2006]{Guttorp}
Guttorp, P. and Gneiting, T. (2006).
\newblock Studies in the history of probability and statistics {XLIX}: {O}n the
  {M}at\'ern correlation family.
\newblock {\em Biometrika}, 93(4):989--995.

\bibitem[Hengl et~al., 2015]{bokk}
Hengl, T., Heuvelink, G., Kempen, B., Leenaars, J., Walshm, M., and Shepherd,
  K. (2015).
\newblock Mapping soil properties of {A}frica at 250 m resolution: {R}andom
  forests significantly improve current predictions.
\newblock {\em PLoS ONE}, 10(6).

\bibitem[Hijmans et~al., 2024]{geodata}
Hijmans, R.~J., Barbosa, M., Ghosh, A., and Mandel, A. (2024).
\newblock {\em geodata: Download Geographic Data}.
\newblock R package version 0.6-2.

\bibitem[Joe et~al., 2014]{compblock}
Joe, E., Benjamin, S., Brian, R., Matthew, W., and Jarad, N. (2014).
\newblock Estimation and prediction in spatial models with block composite
  likelihoods.
\newblock {\em Journal of Computational and Graphical Statistics},
  23(2):295--315.

\bibitem[Katzfuss and Guinness, 2021]{ka2021}
Katzfuss, M. and Guinness, J. (2021).
\newblock A general framework for {V}ecchia approximations of {G}aussian
  processes.
\newblock {\em Statistical Science}, 36(1):124 -- 141.

\bibitem[Kaufman and Shaby, 2013]{Shaby:Kaufmann:2013}
Kaufman, C.~G. and Shaby, B.~A. (2013).
\newblock The role of the range parameter for estimation and prediction in
  geostatistics.
\newblock {\em Biometrika}, 100:473--484.

\bibitem[Laga and Kleiber, 2017]{laga2017}
Laga, I. and Kleiber, W. (2017).
\newblock The modified {M}at\'ern process.
\newblock {\em Stat}, 6:241--247.

\bibitem[Lebedev, 1965]{lebedev65}
Lebedev, N. (1965).
\newblock {\em Special Functions and their Applications}.
\newblock Prentice-Hall, London.

\bibitem[Lefranc et~al., 2008]{Lefranc2008}
Lefranc, M., Beaudoin, B., Chil\`es, J., Guillemot, D., Ravenne, C., and
  Trouiller, A. (2008).
\newblock Geostatistical characterization of {C}allovo–{O}xfordian clay
  variability from high-resolution log data.
\newblock {\em Physics and Chemistry of the Earth}, 33(Suppl. 1):S2--S13.

\bibitem[Li and Zhang, 2011]{Li:Zhang:2011}
Li, B. and Zhang, H. (2011).
\newblock An approach to modeling asymmetric multivariate spatial covariance
  structures.
\newblock {\em Journal of Multivariate Analysis}, 102:1445--1453.

\bibitem[Mardia and Marshall, 1984]{mardia1984maximum}
Mardia, K.~V. and Marshall, R.~J. (1984).
\newblock Maximum likelihood estimation of models for residual covariance in
  spatial regression.
\newblock {\em Biometrika}, 71(1):135--146.

\bibitem[Mat\'ern, 1986]{Matern}
Mat\'ern, B. (1986).
\newblock {\em Spatial {V}ariation — {S}tochastic {M}odels and {T}heir
  {A}pplication to {S}ome {P}roblems in {F}orest {S}urveys and {O}ther
  {S}ampling {I}nvestigations}.
\newblock Springer, Berlin.

\bibitem[Matheron, 1965]{matheron1965}
Matheron, G. (1965).
\newblock {\em Les {V}ariables {R}\'egionalis\'ees et {L}eur {E}stimation}.
\newblock Masson, Paris.

\bibitem[Matheron, 1972]{Matheron1972}
Matheron, G. (1972).
\newblock Quelques aspects de la mont\'ee.
\newblock {\em Technical Report N. 271, Centre de Morphologie Math\'ematique,
  Paris School of Mines}.

\bibitem[Matonti et~al., 2015]{Matonti}
Matonti, C., Guglielmi, Y., Viseur, S., Bruna, P., Borgomano, J., Dahl, C., and
  Mari\'e, L. (2015).
\newblock Heterogeneities and diagenetic control on the spatial distribution of
  carbonate rocks acoustic properties at the outcrop scale.
\newblock {\em Tectonophysics}, 638:94--111.

\bibitem[Mitrinovi\'c, 1970]{Mitrinovic1970}
Mitrinovi\'c, D. (1970).
\newblock {\em Analytic Inequalities}.
\newblock Springer-Verlag, Berlin.

\bibitem[Nychka et~al., 2015]{nychka2}
Nychka, D., Bandyopadhyay, S., Hammerling, D., Lindgren, F., and Sain, S.
  (2015).
\newblock A multi-resolution {G}aussian process model for the analysis of large
  spatial datasets.
\newblock {\em Journal of Computational and Graphical Statistics},
  24(2):579--599.

\bibitem[Olver et~al., 2010]{Olver}
Olver, F.~W., Lozier, D.~M., Boisvert, R.~F., and Clark, C.~W. (2010).
\newblock {\em NIST {H}andbook of {M}athematical {F}unctions}.
\newblock Cambridge University Press, Cambridge.

\bibitem[Parra and Emery, 2013]{parra2013}
Parra, J. and Emery, X. (2013).
\newblock Geostatistics applied to cross-well reflection seismic for imaging
  carbonate aquifers.
\newblock {\em Journal of Applied Geophysics}, 92:68--75.

\bibitem[P\'olya and Szeg\"o, 1998]{Polya-Szego:1998}
P\'olya, G. and Szeg\"o, G., editors (1998).
\newblock {\em Problems and Theorems in Analysis I}.
\newblock Springer, Berlin.

\bibitem[Porcu et~al., 2024]{porcu2024matern}
Porcu, E., Bevilacqua, M., Schaback, R., and Oates, C.~J. (2024).
\newblock The {M}at{\'e}rn model: {A} journey through statistics, numerical
  analysis and machine learning.
\newblock {\em Statistical Science}, 39(3):469--492.

\bibitem[Prudnikov et~al., 1990]{prud}
Prudnikov, A., Brychkov, Y.~A., and Marichev, O. (1990).
\newblock {\em Integrals and {S}eries: {M}ore {S}pecial {F}unctions}, volume~3.
\newblock Gordon and Breach Science Publishers, New York.

\bibitem[San~Mart\'in et~al., 2018]{sanmartin}
San~Mart\'in, C., Milne, A., Webster, R., Storkey, J., And\'ujar, D.,
  Fern\'andez-Quintanilla, C., and Dorado, J. (2018).
\newblock Spatial analysis of digital imagery of weeds in a maize crop.
\newblock {\em ISPRS International Journal of Geo-Information}, 7(2):61.

\bibitem[Schoenberg, 1938]{schoenberg1938metric2}
Schoenberg, I.~J. (1938).
\newblock Metric spaces and completely monotone functions.
\newblock {\em Annals of Mathematics}, 39(4):811--841.

\bibitem[Scott and T\r{u}ma, 2023]{Scott2023}
Scott, J. and T\r{u}ma, M. (2023).
\newblock {\em Algorithms for Sparse Linear Systems}.
\newblock Springer, Cham.

\bibitem[Stein, 1999]{stein-book}
Stein, M. (1999).
\newblock {\em Interpolation of Spatial Data: Some Theory for Kriging}.
\newblock Springer, New York.

\bibitem[Vecchia, 1985]{vecchia1985}
Vecchia, A. (1985).
\newblock A general class of models for stationary two-dimensional random
  processes.
\newblock {\em Biometrika}, 77(2):281--291.

\bibitem[Withers and Nadarajah, 2014]{Withers}
Withers, C.~S. and Nadarajah, S. (2014).
\newblock Hypergeometric functions where two arguments differ by an integer.
\newblock {\em Brazilian Journal of Probability and Statistics},
  28(1):140--149.

\bibitem[Zhu et~al., 2024]{spabion2024}
Zhu, Y., Peruzzi, M., Li, C., and Dunson, D.~B. (2024).
\newblock {Radial neighbours for provably accurate scalable approximations of
  Gaussian processes}.
\newblock {\em Biometrika}, page asae029.

\end{thebibliography}
\bibliographystyle{apalike}

\end{document}